\newcommand{\union}{\cup}
\newcommand{\mmin}{MMin}
\newcommand{\mmax}{MMax}
\newcommand{\mmini}{MMin} 
\newcommand{\ggrow}{\hat{g}} 
\newcommand{\gshrink}{\check{g}}  
\newcommand{\gthree}{\bar{g}} 
\newcommand{\lovasz}{Lov\'asz}
\DeclareMathOperator*{\argmax}{argmax}
\DeclareMathOperator*{\argmin}{argmin}
\newtheorem{theorem}{Theorem}[section]
\newtheorem{lemma}[theorem]{Lemma}
\newtheorem{corollary}[theorem]{Corollary}
\providecommand{\doarxiv}{true}
\newcommand{\arxiv}[1]{#1}
\newcommand{\notarxiv}[1]{}
\newcommand{\arxiv}[1]{}
\newcommand{\notarxiv}[1]{#1}
\newcommand{\arxivalt}[2]{\ifthenelse{\boolean{isarxiv}}{#1}{#2}}
\newcommand{\myaddcomment}[3]{{\color{#1}{\ensuremath{\langle\!\!\langle}{\bf {#2} :} {#3}\ensuremath{\rangle\!\!\rangle}}}}
\newcommand{\rishabh}[1]{\myaddcomment{orange}{Rishabh}{#1}}
\newcommand{\JTR}[1]{\myaddcomment{orange}{Jeff\ensuremath{\rightarrow}Rishabh}{#1}}
\newcommand{\STR}[1]{\myaddcomment{orange}{Steffi\ensuremath{\rightarrow}Rishabh}{#1}}
\newcommand{\jeff}[1]{\myaddcomment{blue}{Jeff}{#1}}
\newcommand{\RTJ}[1]{\myaddcomment{blue}{Rishabh\ensuremath{\rightarrow}Jeff}{#1}}
\newcommand{\STJ}[1]{\myaddcomment{blue}{Steffi\ensuremath{\rightarrow}Jeff}{#1}}
\newcommand{\steffi}[1]{\myaddcomment{red}{Steffi}{#1}}
\newcommand{\rishabh}[1]{}
\newcommand{\JTR}[1]{}
\newcommand{\jeff}[1]{}
\newcommand{\RTJ}[1]{}
\newcommand{\toboth}[1]{}
\newcommand{\steffi}[1]{}
\newcommand{\STR}[1]{}
\newcommand{\STJ}[1]{}
\newcommand{\Cs}{\mathcal{C}}
\newcounter{propcounter}
\title{Fast Semidifferential-based Submodular Function Optimization \thanks{A shorter version of this appeared in Proc. International Conference of Machine Learning (ICML), Atlanta, 2013}}
\author[1]{Rishabh Iyer\thanks{rkiyer@u.washington.edu}}
\author[2]{Stefanie Jegelka\thanks{stefje@eecs.berkeley.edu}}
\author[1]{Jeff Bilmes\thanks{bilmes@u.washington.edu}}
\affil[1]{Department of EE, University of Washington, Seattle}
\affil[2]{Department of EECS, University of California, Berkeley}
\begin{document}
\maketitle
\begin{abstract}
  We present a practical and powerful new framework for both
  unconstrained and constrained submodular function optimization based
  on discrete semidifferentials (sub- and super-differentials).
  The resulting algorithms, which repeatedly compute and then efficiently
  optimize submodular semigradients, offer new and generalize
  many old methods for submodular optimization.  Our approach,
  moreover, takes steps towards providing a unifying paradigm
  applicable to both submodular minimization and maximization,
  problems that historically have been treated quite distinctly. The
  practicality of our algorithms is important since
  interest in submodularity, owing to its natural and wide applicability,
  has recently been in ascendance within machine
  learning.  We analyze theoretical properties of our algorithms for
  minimization and maximization, and show that many state-of-the-art
  maximization algorithms are special cases. Lastly, we complement our
  theoretical analyses with supporting empirical
  experiments. 
\end{abstract}

\section{Introduction}

In this paper, we address minimization and maximization problems
of the following form:
\begin{align}\label{probsmain}
\mbox{Problem 1: } \min_{X \in \mathcal C} f(X),\qquad
\mbox{Problem 2: } \max_{X \in \mathcal C} f(X) \nonumber
\end{align}
where $f: 2^V \to \mathbb R$ is a discrete set function on subsets of a ground
set $V = \{1, 2, \cdots, n\}$, and $\Cs \subseteq 2^V$ is a family of
feasible solution sets. The set $\Cs$ could
express, for example, that solutions must be an independent set in a
matroid, a limited budget knapsack, or a cut (or spanning tree, path, or
matching) in a graph.  Without making any further assumptions about
$f$, the above problems are trivially worst-case exponential time and
moreover inapproximable.

If we assume that $f$ is submodular, however, then in many cases the
above problems can be approximated and in some cases solved
exactly in polynomial time.  A function $f: 2^V \to \mathbb R$ is said
to be \emph{submodular} \cite{fujishige2005submodular} 
if for all
subsets $S, T \subseteq V$, it holds that $f(S) + f(T) \geq f(S \cup
T) + f(S \cap T)$.
Defining $f(j | S) \triangleq f(S \cup j) - f(S)$ as the gain of $j\in
V$ with respect to $S \subseteq V$, then $f$ is submodular if and only if 
$f(j | S) \geq f(j | T)$ for all $S \subseteq T$ and $j \notin T$.
Traditionally, submodularity has been a key structural property for
problems in combinatorial optimization, and for applications in
econometrics, circuit and game theory, and operations research.  More
recently, 
submodularity's popularity
in machine learning has been on the rise.

On the other hand, a potential stumbling block is that machine
learning problems are often large (e.g., ``big data'') and are getting
larger.  For general unconstrained submodular minimization, the
computational complexity often scales as a high-order
polynomial. These algorithms are designed to solve the most general
case and the worst-case instances are often contrived and unrealistic.
Typical-case instances are much more benign, so simpler algorithms
(e.g., graph-cut) might suffice. In the constrained case, however, the
problems often become NP-complete.  Algorithms for submodular
maximization are very different in nature from their submodular
minimization cohorts, and their complexity too varies depending on the
problem.  In any case, there is an urgent need for efficient,
practical, and scalable algorithms for the aforementioned problems if
submodularity is to have a lasting impact on the field of machine
learning.

In this paper, we address the issue of scalability and simultaneously draw connections across the apparent gap between minimization and maximization problems.
We demonstrate that many 
algorithms for
submodular maximization may be viewed as special cases of a generic
minorize-maximize framework that relies on discrete semidifferentials.
This framework encompasses state-of-the-art greedy and local search techniques, and provides a rich class of very practical algorithms. In addition, we show that any approximate submodular maximization algorithm can be seen as an instance of our framework.

We also present a complementary majorize-minimize framework for
submodular minimization that makes two contributions.
For unconstrained minimization,
we obtain new
nontrivial bounds on the lattice of minimizers, thereby reducing the
possible space of candidate minimizers. This method easily integrates
into any other exact minimization algorithm as a preprocessing step to
reduce running time. In the constrained case, we obtain practical
algorithms with bounded approximation factors.
We observe these
algorithms to be empirically competitive to more complicated ones.

As a whole, the semidifferential framework offers a new unifying perspective and basis for
treating submodular minimization and maximization
problems in both the constrained and unconstrained case.  While it has
long been known \cite{fujishige2005submodular} that submodular
functions have tight subdifferentials, our results rely
on a recently discovered property \cite{rkiyersubmodBregman2012,
  jegelka2011-nonsubmod-vision} showing that submodular functions also
have superdifferentials. Furthermore, our approach is entirely
combinatorial, thus complementing (and sometimes obviating) related
relaxation methods. \looseness-1

\section{Motivation and Background}
Submodularity's escalating popularity in machine learning is due to
its natural applicability. Indeed, instances of Problems 1 and 2 are
seen in many forms, to wit:

\paragraph{MAP inference/Image segmentation:} 
Markov Random Fields with pairwise attractive potentials are important
in computer vision, where MAP inference is identical to unconstrained
submodular minimization solved via minimum cut
\cite{boykovJolly01}. 
A richer higher-order
model can be induced for which
MAP inference corresponds to
Problem 1 where $V$ is a set of edges in a graph, 
and $\mathcal C$ is a set of cuts in this graph
--- this was shown to significantly improve
many image segmentation results \cite{jegelka2011-nonsubmod-vision}.
Moreover, \cite{delong2012minimizing} efficiently solve MAP inference
in a sparse higher-order graphical model by restating the problem as a
submodular vertex cover, i.e., Problem 1 where $\Cs$ is the set of all
vertex covers in a graph.\looseness-1

\paragraph{Clustering:} Variants of submodular minimization have been
successfully applied to clustering problems
\cite{narasimhan2006q,nagano10macc}. 

\paragraph{Limited Vocabulary Speech Corpora:} The problem of finding
a maximum size speech corpus with bounded vocabulary~\cite{lin11}
can be posed as submodular function minimization 
subject to a size constraint. Alternatively, cardinality
can be treated as a penalty, reducing the problem to unconstrained
submodular minimization~\cite{jegelkanips}.

\arxiv{
\paragraph{Size constraints:} The densest $k$-subgraph and
size-constrained graph cut problems correspond to submodular
minimization with cardinality constraints,
problems that are very hard~\cite{svitkina2008submodular}.
Specialized algorithms for cardinality and related constraints were proposed e.g.\ in \cite{svitkina2008submodular,nagano2011}. 
}


\paragraph{Minimum Power Assignment:}
In wireless networks, one seeks a connectivity structure that maintains connectivity at a minimum energy consumption.
This problem is equivalent to finding a
suitable structure (e.g., a spanning tree) minimizing a submodular
cost function \cite{wan02networks}.

\paragraph{Transportation:} Costs in real-world transportation
problems are often non-additive. For example, it may be cheaper to
take a longer route owned by one carrier
rather than a shorter route that switches carriers.  Such economies of
scale, or ``right of usage'' properties are captured in the
``Categorized Bottleneck Path Problem'' -- a shortest path problem
with submodular costs \cite{averbakh94}. Similar costs have been
considered for spanning tree and matching problems.


\paragraph{Summarization/Sensor placement:} 
Submodular maximization also arises in many subset
extraction problems. 
Sensor placement \cite{krause2008near}, document
summarization \cite{linacl} and speech data subset
selection~\cite{lin2009select}, for example,
 are instances of  submodular maximization.

\paragraph{Determinantal Point Processes: } The Determinantal Point Processes (DPP’s) which have found numerous applications in machine learning~\cite{kulesza2012determinantal} are known to be log-submodular distributions. In particular, the MAP inference problem is a form of non-monotone submodular maximization.
\vspace{2ex}

Indeed, there is strong motivation for solving Problems 1 and 2 but,
as mentioned above, these problems come not without computational
difficulties.  Much work has therefore been devoted to developing
optimal or near optimal algorithms. Among the several algorithms
\cite{mccormick2005submodular}
for the unconstrained variant of Problem 1, where $\Cs = 2^V$, the
best complexity to date is $O(n^5 \gamma +
n^6)$~\cite{orlin2009faster} ($\gamma$ is the cost of evaluating
$f$). This has motivated studies on faster, possibly special case or
approximate, methods \cite{stobbe10efficient,jegelkanips}. Constrained
minimization problems, even for simple constraints such as a cardinality lower
bound, are mostly NP-hard, and not approximable to within better than a
polynomial factor. Approximation algorithms for these problems with
various techniques have been studied in
\cite{svitkina2008submodular,iwata2009submodular,goel2009approximability,jegelka2011-inference-gen-graph-cuts}.
Unlike submodular minimization, all forms of submodular maximization
are NP-hard. Most such problems, however, admit constant-factor
approximations,
which are attained via very simple combinatorial algorithms
\cite{nemhauser1978,feldman2012optimal}. \looseness-1

Majorization-minimization (MM)\footnote{MM also refers to
  minorization-maximization here.} algorithms are known to be useful in
machine learning~\cite{hunter2004tutorial}. Notable examples include
the EM algorithm~\cite{mclachlan1997algorithm} and the convex-concave
procedure~\cite{yuille2002concave}.  Discrete instances have been used
to minimize the difference between submodular
functions~\cite{narasimhanbilmes,rkiyeruai2012}, but these algorithms
generally lack theoretical guarantees.  This paper shows, by contrast,
that for submodular optimization, MM algorithms have strong
theoretical properties and empirically work very well.  \looseness-1


\section{Submodular semi-differentials}

We first briefly introduce submodular
semidifferentials. Throughout this paper, we assume normalized
submodular functions (i.e., $f(\emptyset) = 0$).
The subdifferential $\partial_f(Y)$ of a submodular set function $f:
2^V \to \mathbb{R}$ for a set $Y \subseteq V$ is defined 
\cite{fujishige2005submodular}
analogously
to the subdifferential of a continuous convex function:
\begin{align}
\partial_f(Y) &= \{y \in \mathbb{R}^n:\\
\nonumber
&\quad f(X) - y(X) \geq f(Y) - y(Y)\;\text{for all } X \subseteq V\} 
\end{align}
For a vector $x \in \mathbb{R}^V$ and $X \subseteq V$, we write
$x(X) = \sum_{j \in X} x(j)$ --- in such case, we say that $x$ is a normalized
\emph{modular} function. 
We shall denote a subgradient at $Y$ by $h_Y \in \partial_f(Y)$. The extreme points of $\partial_f(Y)$ may be computed via a greedy algorithm: Let $\sigma$ be a permutation of $V$ that assigns the elements in $Y$ to the first $|Y|$ positions ($\sigma(i) \in Y$ if and only if $i \leq |Y|$). \arxiv{
  \begin{figure}
    \centering
    \includegraphics[width=0.35\textwidth]{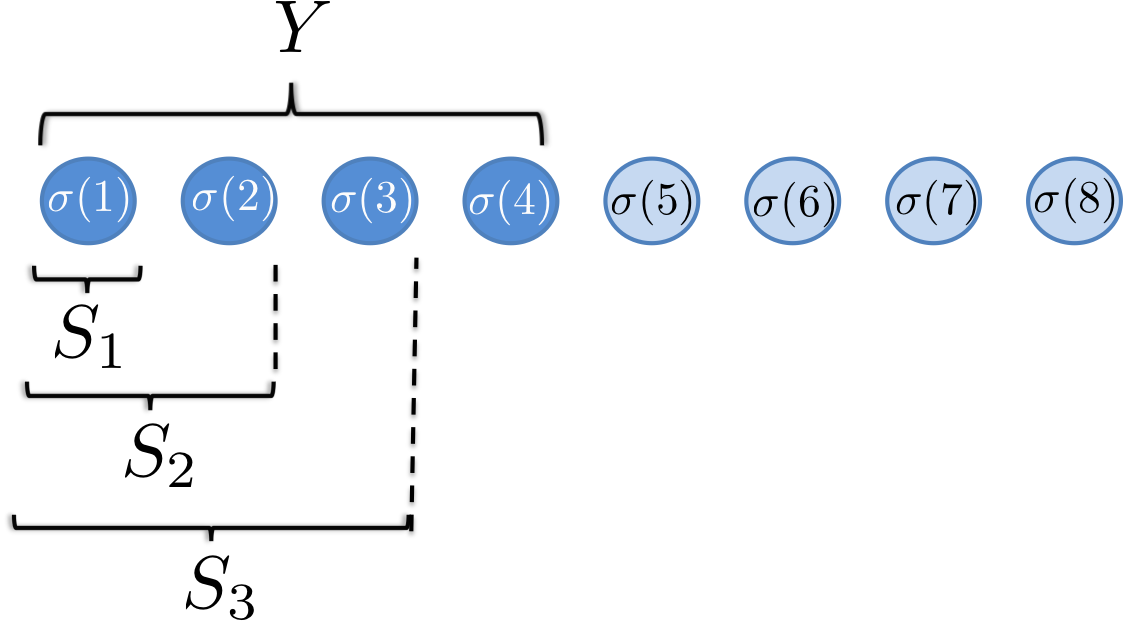}
\caption{Illustration of the chain of sets and permutation $\sigma$}
\end{figure}}
\JTR{check permutation notation consistency through entire paper.}\RTJ{fixed it above}
Each such permutation defines a chain with elements $S_0^\sigma = \emptyset$, 
$S^{\sigma}_i = \{ \sigma(1), \sigma(2), \dots, \sigma(i) \}$ and
$S^{\sigma}_{|Y|} = Y$. This chain defines an extreme point
$h^{\sigma}_Y$ of $\partial_f(Y)$ with entries
\begin{align}
h^{\sigma}_Y(\sigma(i)) = 
f(S^{\sigma}_i) - f(S^{\sigma}_{i-1}). 
\end{align}
%

Surprisingly, we can also define superdifferentials 
$\partial^f(Y)$ of a submodular
function \cite{jegelka2011-nonsubmod-vision,rkiyersubmodBregman2012} at $Y$: 
\begin{align}\label{supdiff-def}
\partial^f(Y) &= \{y \in \mathbb{R}^n:\\
&\quad f(X) - y(X) \leq f(Y) - y(Y);\text{for all } X \subseteq V\}  \nonumber
\end{align}
We denote a generic supergradient at $Y$ by $g_Y$. It is easy to show
that the polyhedron $\partial^f$ is non-empty.
We define three special
supergradients $\ggrow_Y$ (``grow''), $\gshrink_Y$ (``shrink'') and $\gthree_Y$
as follows~\cite{rkiyersubmodBregman2012}: 
  \begin{align*}
    &\ggrow_Y(j) = f(j \mid V \setminus \{j\}) & \ggrow_Y(j) = f(j\mid Y)\\
    &\gshrink_Y(j) = f(j \mid Y \setminus \{j\}) & \gshrink_Y(j) = f(j\mid \emptyset)\\
    &\underbrace{\gthree_Y(j) = f(j \mid V \setminus \{j\})} & \underbrace{\gthree_Y(j) = f(j\mid\emptyset)}\\
    &\qquad\text{ for } j \in Y & \text{for } j \notin Y.
  \end{align*}
\arxiv{For a \emph{monotone} submodular function, i.e., a function satisfying $f(A) \leq f(B)$ for all $A \subseteq B \subseteq V$, the sub- and supergradients defined here are nonnegative.}

\section{The discrete MM framework} 

With the above semigradients, we can define a generic MM
algorithm. In each iteration, the algorithm optimizes a modular
approximation formed via the current solution $Y$. For minimization, we
use an upper bound
\begin{equation}
  \label{eq:1}
  m^{g_Y}(X) = f(Y) + g_Y(X) - g_Y(Y)\; \geq f(X),
\end{equation}
and for maximization a lower bound 
\begin{equation}
  \label{eq:2}
  m_{h_Y}(X) = f(Y) + h_Y(X) - h_Y(Y)\; \leq f(X).
\end{equation}
Both these bounds are tight at the current solution, satisfying $m_{g_Y}(Y) = m_{h_Y}(Y) = f(Y)$.
In almost all cases, optimizing the modular approximation is much faster than optimizing the original cost function $f$.

\begin{algorithm}
\caption{Semigradient Descent Algorithm}
\begin{algorithmic}[1]
\STATE Start with an arbitrary $X^0$.
\REPEAT
\STATE Pick a semigradient $h_{X^t}$ {[ $g_{X^t}$]} at $X^t$ 
\STATE $X^{t+1}:= \argmax_{X \in \mathcal C} m_{h_{X^t}}(X)$ \\
{[ $X^{t+1}:= \argmin_{X \in \mathcal C} m^{g_{X^t}}(X)$]}
\STATE $t \leftarrow t+1$
\UNTIL{we have converged $(X^{i-1} = X^i)$}
\end{algorithmic}
\label{mirroropt}
\end{algorithm}

Algorithm~\ref{mirroropt} shows our discrete MM scheme for
maximization (\mmax{}) [and minimization (\mmin{})] , and for
both constrained and unconstrained settings.  Since we are
minimizing a tight upper bound, or maximizing a tight lower bound, the algorithm must make progress.
\begin{lemma}
Algorithm~\ref{mirroropt} monotonically improves the objective function value for Problems 1 and 2 at every iteration, as long as a linear function can be exactly optimized over $\mathcal C$.
\end{lemma}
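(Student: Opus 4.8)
The plan is to run the standard majorize-minimize / minorize-maximize telescoping argument, exploiting that each modular surrogate is \emph{tight} at the current iterate and bounds $f$ on the correct side. I would dispatch the maximization case first and then observe that minimization is perfectly symmetric. The only structural fact needed beyond the bound inequalities already recorded in~\eqref{eq:1} and~\eqref{eq:2} is that each surrogate $m_{h_Y}$ (resp.\ $m^{g_Y}$) is a modular, hence linear, function of $X$ up to an additive constant, so that the update defining $X^{t+1}$ is a genuine linear optimization over $\mathcal C$. This is precisely where the hypothesis ``a linear function can be exactly optimized over $\mathcal C$'' enters: it guarantees the $\argmax$/$\argmin$ in the update is actually attained.

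For the maximization (\mmax{}) case, I fix an iteration $t$ and note that $X^t \in \mathcal C$, since every iterate after the first is itself produced as a maximizer over $\mathcal C$ (and $X^0$ is taken feasible). Because $X^{t+1}$ maximizes $m_{h_{X^t}}$ over $\mathcal C$ while $X^t$ is a feasible competitor, I get $m_{h_{X^t}}(X^{t+1}) \ge m_{h_{X^t}}(X^t)$. Combining this with the lower-bound property $f(X^{t+1}) \ge m_{h_{X^t}}(X^{t+1})$ from~\eqref{eq:2} and the tightness $m_{h_{X^t}}(X^t) = f(X^t)$ gives the chain $f(X^{t+1}) \ge m_{h_{X^t}}(X^{t+1}) \ge m_{h_{X^t}}(X^t) = f(X^t)$, which is exactly the claimed monotone improvement.

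The minimization (\mmin{}) case is identical with all inequalities reversed: using that $X^{t+1}$ minimizes the upper bound $m^{g_{X^t}}$ over $\mathcal C$, together with $f(X^{t+1}) \le m^{g_{X^t}}(X^{t+1})$ from~\eqref{eq:1} and tightness at $X^t$, I obtain $f(X^{t+1}) \le m^{g_{X^t}}(X^{t+1}) \le m^{g_{X^t}}(X^t) = f(X^t)$. I do not expect a genuine obstacle, as each direction is a three-line computation; the only point deserving care is the bookkeeping on feasibility — ensuring $X^t$ is a valid comparison point inside the feasible $\argmax$/$\argmin$, which holds since each post-initial iterate is an optimizer over $\mathcal C$ — and being explicit that it is the \emph{exactness} of the linear optimization that is being assumed. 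The monotonicity itself follows purely from tightness of the surrogate plus its one-sided bound, independently of how the semigradient $h_{X^t}$ (or $g_{X^t}$) is chosen.
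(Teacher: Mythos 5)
Your proposal is correct and is essentially the paper's own proof: the same three-step chain $f(X^{t+1}) \leq m^{g_{X^t}}(X^{t+1}) \leq m^{g_{X^t}}(X^t) = f(X^t)$ (with inequalities reversed for maximization), combining the one-sided bound, optimality of $X^{t+1}$ over $\mathcal{C}$, and tightness at $X^t$. The paper merely treats minimization first and notes maximization is analogous, while you do the reverse and add explicit (correct) bookkeeping on feasibility of $X^t$.
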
\looseness-1
\arxiv{\begin{proof}
By definition, it holds that $f(X^{t+1}) \leq m^{g_{X^t}}(X^{t+1})$. Since $X^{t+1}$ minimizes $m^{g_{X^t}}$, it follows that
\begin{equation}
  \label{eq:4}
  f(X^{t+1}) \leq m^{g_{X^t}}(X^{t+1}) \leq m^{g_{X^t}}(X^t) = f(X^t).
\end{equation}
The observation that Algorithm~\ref{mirroropt} monotonically increases the objective of maximization problems follows analogously.
\end{proof}}

Contrary to standard continuous subgradient descent schemes,
Algorithm~\ref{mirroropt} produces a feasible
solution at each iteration, thereby circumventing any rounding or projection steps that
might be challenging under certain types of constraints.  In addition,
it is known that for relaxed instances of our problems, subgradient
descent methods can suffer from slow convergence~\cite{frbach1}.
Nevertheless, 
Algorithm~\ref{mirroropt} 
still relies on the choice of the semigradients
defining the bounds.
Therefore, we next analyze the effect of certain choices of
semigradients. 


\section{Submodular function minimization}\label{sec:minimization}

For minimization problems, we use \mmin{} with the supergradients
$\ggrow_X, \gshrink_X$ and $\gthree_X$. In both the unconstrained and
constrained settings, this yields a number of new approaches to
submodular minimization.


\subsection{Unconstrained Submodular Minimization}\label{sec:unconstrained_min}
We begin with unconstrained minimization, where $\mathcal{C}=2^V$ in Problem 1.
Each of the three supergradients yields a different variant of Algorithm~\ref{mirroropt}, and we will call the resulting algorithms \mmini-I, II and III, respectively.
We make one more assumption: of the minimizing arguments in Step 4 of Algorithm~\ref{mirroropt}, we always choose a set of minimum cardinality.

\mmini-I is very similar to the algorithms proposed in~\cite{jegelkanips}. Those authors, however, decompose $f$ and explicitly represent graph-representable parts of the function $f$.
We do not require or consider such a restriction here. 

Let us define the sets $A = \{j: f(j | \emptyset) < 0\}$ and 
$B = \{j: f(j | V\setminus\{j\}) \leq  0\}$. 
Submodularity implies that $A \subseteq B$,
and this allows us to define a lattice\arxiv{\footnote{This lattice contains all sets $S$ satisfying $A \subseteq S \subseteq B$}} $\mathcal L = [A,B]$ whose
least element is the set $A$ and whose greatest element is the set is
$B$. 
This sublattice $\mathcal{L}$ of $[\emptyset,V]$ retains all
minimizers $X^*$ (i.e., $A \subseteq X^* \subseteq B$ for all $X^*$):
\begin{lemma}\cite{fujishige2005submodular} \label{fuj}
Let $\mathcal L^{*}$ be the lattice of the global minimizers of a submodular function $f$. Then $\mathcal L^* \subseteq \mathcal L$, where we use $\subseteq$ to denote a sublattice. \looseness-1
\end{lemma}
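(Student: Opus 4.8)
The plan is to prove the two set inclusions $A \subseteq X^*$ and $X^* \subseteq B$ for an arbitrary global minimizer $X^*$, and then to observe that the family of minimizers is closed under union and intersection, so that it forms a sublattice of the interval lattice $[A,B]$. The only tool I expect to need is the diminishing-returns characterization already recorded in the excerpt, namely $f(j\mid S) \geq f(j\mid T)$ whenever $S \subseteq T$ and $j \notin T$.

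For the first inclusion I would fix $j \in A$, so that $f(j\mid \emptyset) < 0$, and argue by contradiction. If $j \notin X^*$, then applying diminishing returns with $\emptyset \subseteq X^*$ gives $f(j \mid X^*) \leq f(j \mid \emptyset) < 0$, hence $f(X^* \cup \{j\}) = f(X^*) + f(j\mid X^*) < f(X^*)$, contradicting the minimality of $X^*$. Therefore $j \in X^*$, and since $j$ was arbitrary, $A \subseteq X^*$.

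For the second inclusion I would fix $j \in X^*$ and show $j \in B$, again by contradiction. Suppose $f(j \mid V \setminus \{j\}) > 0$. Applying diminishing returns with $X^* \setminus \{j\} \subseteq V \setminus \{j\}$ yields $f(j \mid X^* \setminus \{j\}) \geq f(j \mid V \setminus \{j\}) > 0$, so that $f(X^* \setminus \{j\}) = f(X^*) - f(j \mid X^* \setminus \{j\}) < f(X^*)$, once more contradicting minimality. Hence every element of $X^*$ lies in $B$, i.e.\ $X^* \subseteq B$. Combining the two inclusions gives $X^* \in [A,B]$ for every minimizer, so $\mathcal{L}^* \subseteq \mathcal{L}$ as sets. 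To upgrade this to the sublattice statement, I would invoke the standard fact that minimizers of a submodular function are closed under $\cup$ and $\cap$: if $X$ and $Y$ both attain the minimum, submodularity gives $f(X \cup Y) + f(X \cap Y) \leq f(X) + f(Y)$, and since neither term on the left can fall below the minimum value, both must equal it. As $[A,B]$ is itself closed under $\cup$ and $\cap$, it follows that $\mathcal{L}^*$ is a sublattice of $\mathcal{L}$.

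I do not anticipate a genuine obstacle here; the entire content is the two marginal-gain arguments. The one point to watch is orienting the diminishing-returns inequality correctly in each case — adding an element uses it with $\emptyset$ as the small set, while removing an element uses it with $V \setminus \{j\}$ as the large set — and noting that the minimum-cardinality tie-breaking imposed by the algorithm plays no role in this particular lemma.
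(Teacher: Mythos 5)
Your proof is correct. Note, however, that the paper does not prove Lemma~\ref{fuj} at all --- it is quoted directly from \cite{fujishige2005submodular} --- so there is no in-paper proof to compare against; the comparison can only be with the standard argument and with the paper's proof of the strengthened statement, Theorem~\ref{thm:prune}. Your two contradiction arguments are exactly that standard argument, and they handle the deliberate asymmetry in the definitions correctly: $A$ is defined by the \emph{strict} inequality $f(j\mid\emptyset)<0$, and $B$ by the non-strict $f(j\mid V\setminus\{j\})\leq 0$, so that in both cases the diminishing-returns step produces a \emph{strict} decrease in $f$ and hence a genuine contradiction with minimality (had $A$ been defined with $\leq$, the first argument would break). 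The closure of the minimizer family under $\cup$ and $\cap$ via $f(X\cup Y)+f(X\cap Y)\leq f(X)+f(Y)$ is likewise the standard step, and combined with the fact that the interval $[A,B]$ is itself closed under $\cup$ and $\cap$, it gives the sublattice conclusion. You are also right that the minimum-cardinality tie-breaking rule is irrelevant here; it only matters for the algorithmic statements. Finally, observe that your technique --- propagating a negative marginal gain from a smaller set up to a putative minimizer by submodularity --- is precisely the technique the paper uses in its proof of Theorem~\ref{thm:prune}, where the same inclusions are sharpened to $A\subseteq A_+\subseteq X^*\subseteq B_+\subseteq B$; so your blind proof reconstructs the intended argument rather than finding a different route.
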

Lemma~\ref{fuj} has been used to prune down the search space of the
minimum norm point algorithm\arxiv{ from the power set of $V$ to a smaller lattice}~\cite{frbach1,fujishige2011submodular}. Indeed, $A$ and $B$ may be obtained by using \mmini-III:
 \begin{lemma}\label{lem:imaiii}
With $X^0 = \emptyset$ and $X^0 = V$, \mmini-III returns the sets $A$ and $B$, respectively.
Initialized by an arbitrary $X^0$, \mmini-III converges to $(X^0 \cap B) \cup A$. 
 \end{lemma}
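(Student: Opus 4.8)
The plan is to track the iterates of \mmini-III explicitly and show that the map they induce on $2^V$ reaches a fixed point after a single step. First I would reduce Step~4 to a modular minimization: for the current iterate $Y = X^t$, the upper bound $m^{\gthree_Y}(X) = f(Y) + \gthree_Y(X) - \gthree_Y(Y)$ differs from the modular function $X \mapsto \gthree_Y(X) = \sum_{j \in X} \gthree_Y(j)$ only by the constant $f(Y) - \gthree_Y(Y)$, so over the unconstrained lattice $\mathcal C = 2^V$ minimizing $m^{\gthree_Y}$ is the same as minimizing this modular function. Its minimizers are obtained by taking exactly the elements of strictly negative weight, and since we break ties by minimum cardinality the unique output is $T(Y) := \{\, j : \gthree_Y(j) < 0 \,\}$.

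Next I would make $T(Y)$ explicit from the two-piece definition of $\gthree_Y$. For $j \in Y$ we have $\gthree_Y(j) = f(j \mid V \setminus \{j\})$, and for $j \notin Y$ we have $\gthree_Y(j) = f(j \mid \emptyset)$; hence $T(Y) = (Y \cap B) \cup ((V \setminus Y) \cap A)$, with $A$ and $B$ as defined in the text. The key structural input is submodularity, which gives $f(j \mid V \setminus \{j\}) \le f(j \mid \emptyset)$ and therefore $A \subseteq B$. Using $A \subseteq B$ I can absorb the term $A \cap Y$ into $Y \cap B$ and rewrite $T(Y) = (Y \cap B) \cup A$.

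The heart of the argument is then a one-step fixed-point computation. Setting $X^1 = T(X^0) = (X^0 \cap B) \cup A$, I observe directly that $A \subseteq X^1 \subseteq B$. Feeding this back through $T$ gives $T(X^1) = (X^1 \cap B) \cup ((V \setminus X^1) \cap A) = X^1 \cup \emptyset = X^1$, because $X^1 \subseteq B$ makes $X^1 \cap B = X^1$ and $A \subseteq X^1$ makes $A \setminus X^1 = \emptyset$. Thus the algorithm converges after a single iteration to $(X^0 \cap B) \cup A$. Specializing $X^0 = \emptyset$ yields $A$, and $X^0 = V$ yields $B \cup A = B$ (again using $A \subseteq B$), which recovers the first two claims.

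The step I expect to be the main obstacle is the boundary behavior at elements of zero marginal gain. The minimum-cardinality rule selects exactly the strictly-negative-weight elements, so the set actually computed from $Y = V$ is $\{\, j : f(j \mid V \setminus \{j\}) < 0 \,\}$, whereas $B$ is defined with the weak inequality $f(j \mid V \setminus \{j\}) \le 0$ (the weak form is what Lemma~\ref{fuj} needs so that $[A,B]$ retains \emph{all} minimizers). Reconciling these requires care: I would either restrict attention to $f$ with no zero marginals, or note that such zero-weight elements are optimization-indifferent and fix the tie-breaking accordingly (favoring inclusion when initialized at $V$) so that the computed fixed point matches $B$ exactly. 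Handling this edge case cleanly, rather than establishing the dynamics themselves, is where the real work lies.
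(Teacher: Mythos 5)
Your proposal is correct and follows essentially the same route as the paper's proof: reduce Step 4 to minimizing the modular function $X \mapsto \gthree_Y(X)$, read off the one-step update map, and use $A \subseteq B$ to verify that $(X^0 \cap B) \cup A$ is reached in one iteration and is a fixed point. The zero-marginal subtlety you flag at the end is genuine, and the paper resolves it (silently) by your second suggested fix: its proof removes only elements with $f(j \mid V \setminus \{j\}) > 0$ and adds only elements with $f(j \mid \emptyset) < 0$, so in-set elements with zero marginal gain are kept --- exactly what makes the fixed point come out as $B$ (weak inequality) rather than $\{j : f(j \mid V \setminus \{j\}) < 0\}$, even though this tie-breaking is in tension with the minimum-cardinality rule stated in the main text.
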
\looseness-1
\arxiv{\begin{proof}
When using $X^0 = \emptyset$, we obtain $X^1 = \argmin_X f(\emptyset) + \sum_{j \in X} f(j) = A$. Since $A \subseteq B$, the algorithm will converge to $X^1 = A$. 
At this point, no more elements will be added, since for all $i \notin A$ we have $\gthree_{X^1}(i) = f(i\mid \emptyset)>0$. Moreover, the algorithm will not remove any elements: for all $i \in A$, it holds that $\gthree_{X^1}(i) = f(i\mid V\setminus i) \leq f(i) \leq 0$.
By a similar argumentation, the initialization $X^0 = V$ will lead to $X^1= B$, where the algorithm terminates.
If we start with any arbitrary $X^0$, \mmini-III will remove the elements $j$ with $f(j | V \setminus j) > 0$ and add the element $j$ with $f(j | \emptyset) < 0$. Hence it will add the elements in $A$ that are not in $X^0$ and remove those element from $X^0$ that are not in $B$. Let the resulting set be $X^1$. As before, for all $i \in A$, it holds that $\gthree_{X^1}(i) = f(i\mid V\setminus i) \leq f(i) \leq 0$, so these elements will not be removed in any possible subsequent iteration. The elements $i \in X^1 \setminus A$ were not removed, so $f(i\mid V\setminus i) \leq 0$. Hence, no more elements will be removed after the first iteration. Similarly, no elements will be added since for all $i \notin X^1$, it holds that $f(i\mid \emptyset) \geq f(i\mid V\setminus i) > 0$.
\end{proof}}
Lemma~\ref{lem:imaiii}
 implies that \mmini-III effectively provides a contraction of the initial lattice to $\mathcal L$, and, if $X^0$ is not in $\mathcal{L}$
, it returns a set in $\mathcal L$. Henceforth, we therefore assume that we 
start with a set $X^0 \in \mathcal L$. 

While the known lattice $\mathcal{L}$ has proven useful for warm-starts, \mmini-I and II enable us to prune $\mathcal{L}$ even further. 
%
Let $A_+$ be the set obtained by starting \mmini-I at $X^0 = \emptyset$, and $B_+$ be the set obtained by starting \mmini-II at $X^0 = V$. This yields a new, smaller sublattice $\mathcal L_+ = [A_+, B_+]$ that retains all minimizers:
\begin{theorem}\label{thm:prune}
For any minimizer $X^* \in \mathcal L$, it holds that $A \subseteq A_+ \subseteq X^* \subseteq B_+ \subseteq B$. Hence $\mathcal L^* \subseteq \mathcal L_+ \subseteq \mathcal L$. Furthermore, when initialized with $X^0 = \emptyset$ and $X^0 = V$, respectively, both \mmini-I and II converge in $O(n)$ iterations to a local minimum of $f$.
\end{theorem}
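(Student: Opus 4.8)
The plan is to treat the two instantiations dually: \mmini-I grows from $X^0=\emptyset$ using $\ggrow$, while \mmini-II shrinks from $X^0=V$ using $\gshrink$. First I would record the one fact that drives everything. Minimizing the modular surrogate $m^{g_{X^t}}$ over $2^V$ simply thresholds the supergradient at zero, so Step~4 of Algorithm~\ref{mirroropt} returns $X^{t+1}=\{j : g_{X^t}(j)<0\}$ (ties discarded to realize the minimum-cardinality rule). Substituting the explicit entries of $\ggrow$ and $\gshrink$ turns the abstract iteration into a concrete set recursion that I can analyze directly, and the whole argument then rests on the diminishing-returns inequality $f(j\mid S)\geq f(j\mid T)$ for $S\subseteq T$.

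Step~1 (monotonicity and the $O(n)$ count). I would show by induction that \mmini-I produces an increasing chain $\emptyset=X^0\subseteq X^1\subseteq\cdots$ and \mmini-II a decreasing chain $V=X^0\supseteq X^1\supseteq\cdots$. For \mmini-I, take $j\in X^t$: either $j\in X^{t-1}$, where $\ggrow$ already equals $f(j\mid V\setminus j)<0$ and stays negative, or $j$ was just inserted because $f(j\mid X^{t-1})<0$, and since $X^{t-1}\subseteq V\setminus j$ submodularity gives $f(j\mid V\setminus j)\leq f(j\mid X^{t-1})<0$; either way $j\in X^{t+1}$. The shrink case is symmetric via $f(j\mid\emptyset)\geq f(j\mid X^{t-1}\setminus j)$. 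Because each non-terminal iteration strictly changes a subset of an $n$-element ground set, the chains stabilize after at most $n$ steps, giving the $O(n)$ bound. The first iterate is $X^1=A$ for \mmini-I and $X^1\subseteq B$ for \mmini-II, so monotonicity immediately yields the outer containments $A\subseteq A_+$ and $B_+\subseteq B$.

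Step~2 (every minimizer is trapped) and Step~3 (local optimality). The core invariant is that $X^t\subseteq X^*$ is preserved by \mmini-I for any minimizer $X^*$: if $j$ is inserted (so $j\notin X^t$, $f(j\mid X^t)<0$) yet $j\notin X^*$, then $X^t\subseteq X^*$ and submodularity give $f(j\mid X^*)\leq f(j\mid X^t)<0$, whence $f(X^*\cup j)<f(X^*)$, a contradiction; so $j\in X^*$. Starting from $\emptyset\subseteq X^*$ this gives $A_+\subseteq X^*$, and the dual invariant gives $X^*\subseteq B_+$, hence $\mathcal L^*\subseteq\mathcal L_+=[A_+,B_+]\subseteq\mathcal L$. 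For local optimality at the fixed point $\hat X=A_+$, additions cannot help since $j\notin\hat X$ forces $f(j\mid\hat X)=\ggrow_{\hat X}(j)\geq 0$; removals cannot help by a ``time of insertion'' argument: each $j\in\hat X$ entered at an iteration whose current set $X^s$ omitted $j$ with $f(j\mid X^s)<0$, and since $X^s\subseteq\hat X\setminus j$ we get $f(j\mid\hat X\setminus j)\leq f(j\mid X^s)<0$, so deleting $j$ strictly increases $f$. The statement for $B_+$ follows dually.

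The delicate point, and the one place needing genuine care, is the tie handling in the $X^*\subseteq B_+$ direction. Repeating the Step~2 argument for \mmini-II only yields $f(j\mid X^t\setminus j)\leq 0$ rather than strictly negative for $j\in X^*$, because deleting an element of a minimizer can leave the value unchanged. Elements of gain exactly zero are precisely those the minimum-cardinality rule would discard, so the shrink phase must break ties by \emph{retaining} such $j$ --- matching the non-strict ``$\leq$'' in the definition of $B$ --- whereas the grow phase breaks them by exclusion, matching the strict ``$<$'' in $A$. Verifying that this asymmetric convention is consistent (and does not disturb either monotonicity or the $O(n)$ count) is the real obstacle; the remainder is bookkeeping with submodularity.
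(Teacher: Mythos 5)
Your proof is correct and follows essentially the same route as the paper's: the same characterization of the iterations as set recursions (add elements with strictly negative gain, remove elements with strictly positive gain), the same monotone-chain and $O(n)$ argument, the same induction showing every minimizer stays trapped between the iterates, and the same insertion-time submodularity argument for local optimality of $A_+$ and $B_+$. Your closing point about asymmetric tie-breaking is also well taken and is in fact what the paper implicitly does: although it states a blanket minimum-cardinality rule, its iteration lemma retains zero-gain elements in \mmini-II (removing only strictly positive gains), which is precisely the convention you identify as necessary for $X^* \subseteq B_+$ to hold.
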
\label{minthm}
By a local minimum, we mean a set $X$ that satisfies $f(X) \leq f(Y)$ for any set $Y$ that differs from $X$ by a single element.
We point out that Theorem~\ref{thm:prune} generalizes part of Lemma 3 in~\cite{jegelkanips}. \looseness-1
\arxiv{
  \begin{figure}
    \centering
    \includegraphics[width=0.35\textwidth]{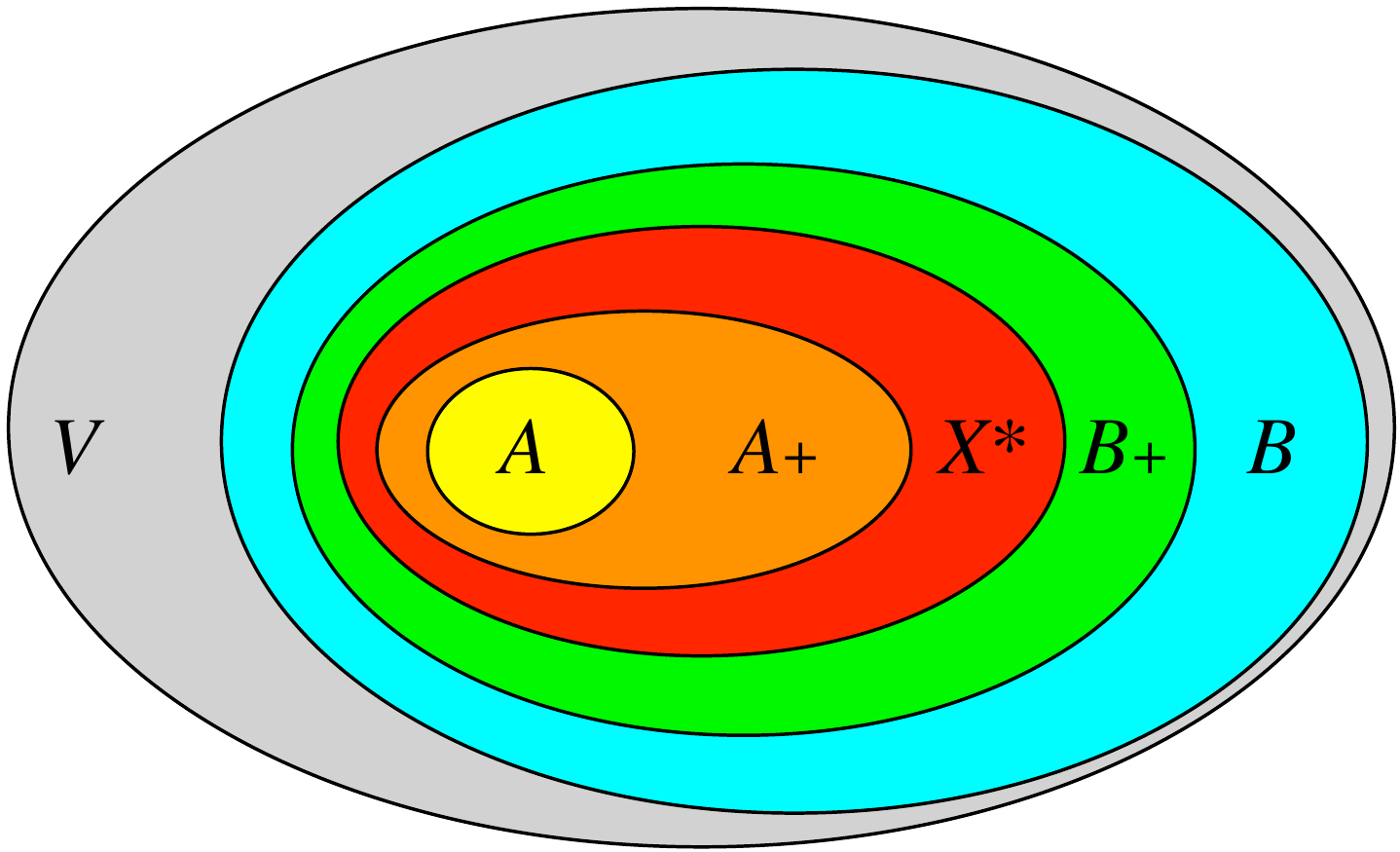}
    \caption{Venn diagram for the lattices obtained by \mmini-I, II and III. We are searching for the optimal set $X^* \subseteq V$. The lattice $\mathcal{L}$ contains all sets $S$ ``between'' $A$ and $B$, i.e., $A \subseteq S \subseteq B$. The lattice $\mathcal{L}_+$ uses the sets $A_+$ and $B_+$ instead (it contains all sets $T$ with $A_+ \subseteq T \subseteq B_+$) and therefore provides a tighter bound and smaller search space around the optimal solution $X^*$.}
    \label{fig:illustrate_lattices}
  \end{figure}
For the proof, we build on the following Lemma:
\begin{lemma}\label{lem:mmin_iter}
Every iteration of \mmini-I can be written as $X^{t+1} = X^t \cup \{j: f(j | X^t) < 0\}$. Similarly, 
every iteration of \mmini-II can be expressed as $X^{t+1} = X^t \backslash \{j: f(j | X^t \setminus j) > 0\}$.   
\STR{Independent of initialization, right?}
\end{lemma}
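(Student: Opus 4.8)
My plan is to unfold a single iteration of Algorithm~\ref{mirroropt} in the unconstrained case ($\mathcal{C} = 2^V$) and identify the minimizer in closed form. Because $m^{g_{X^t}}(X) = f(X^t) + g_{X^t}(X) - g_{X^t}(X^t)$ and only $g_{X^t}(X) = \sum_{j \in X} g_{X^t}(j)$ varies with $X$, Step~4 is the minimization of a \emph{modular} function over $2^V$. Such a problem decouples across the ground set: an element belongs to an optimal set precisely when its weight $g_{X^t}(j)$ is strictly negative, while the minimum-cardinality rule fixes the treatment of the zero-weight elements. Hence $X^{t+1}$ is determined elementwise by the sign of $g_{X^t}(j)$, and the whole argument reduces to substituting the piecewise definitions of $\ggrow_{X^t}$ and $\gshrink_{X^t}$ and simplifying.

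For \mmini-I, where $g = \ggrow$, a non-member $j \notin X^t$ has weight $\ggrow_{X^t}(j) = f(j \mid X^t)$, so it is included iff $f(j \mid X^t) < 0$; since $f(j \mid X^t) = 0$ for $j \in X^t$, the set of included non-members is exactly $\{j : f(j \mid X^t) < 0\}$, which is the additive term in the claim. Dually, for \mmini-II, where $g = \gshrink$, a member $j \in X^t$ has weight $\gshrink_{X^t}(j) = f(j \mid X^t \setminus \{j\})$ and is dropped when this is positive, giving the subtractive term $\{j : f(j \mid X^t \setminus \{j\}) > 0\}$; the zero-marginal boundary elements are assigned by the cardinality convention and cause no ambiguity in the generic (strict-inequality) case.

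The substance of the lemma, and the step I expect to be the main obstacle, is to show that \mmini-I only ever \emph{adds} elements and \mmini-II only ever \emph{removes} them, so that no term of the opposite type appears. For \mmini-I this requires that every $j \in X^t$ is retained, i.e.\ $\ggrow_{X^t}(j) = f(j \mid V \setminus \{j\}) < 0$; for \mmini-II it requires that no $j \notin X^t$ is added, i.e.\ $\gshrink_{X^t}(j) = f(j \mid \emptyset) \ge 0$, equivalently $A \subseteq X^t$. I would prove both as invariants by induction on $t$, which is also where the dependence on the initialization enters. They hold trivially at $X^0 = \emptyset$ for \mmini-I and at $X^0 = V$ for \mmini-II, and submodularity propagates them: an element that \mmini-I adds at some step $s$ because $f(j \mid X^s) < 0$ satisfies $f(j \mid V \setminus \{j\}) \le f(j \mid X^s) < 0$ thereafter (as $X^s \subseteq V \setminus \{j\}$), hence is never dropped; symmetrically, every $j$ with $f(j \mid \emptyset) < 0$ satisfies $f(j \mid X^t \setminus \{j\}) \le f(j \mid \emptyset) < 0$ (as $\emptyset \subseteq X^t \setminus \{j\}$), so \mmini-II never evicts such a $j$ and the invariant $A \subseteq X^t$ persists, after which every $j \notin X^t$ has $f(j \mid \emptyset) \ge 0$ and nothing is added. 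With these invariants in force, the elementwise minimizer collapses to precisely the two stated update rules.
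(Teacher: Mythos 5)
Your proposal is correct and follows essentially the same route as the paper's own proof: unfold Step~4 as an elementwise sign test on the supergradient weights (with the minimum-cardinality convention deciding the zero-weight elements), then use submodularity, by induction from $X^0=\emptyset$ or $X^0=V$, to establish that \mmini-I never drops an element it has added and \mmini-II never re-adds one. The only cosmetic difference is that for \mmini-II you phrase the ``no additions'' invariant as $A \subseteq X^t$ (elements of $A$ are never evicted, so every non-member has $f(j\mid\emptyset)\geq 0$), whereas the paper argues, dually to its \mmini-I case, that any evicted element retains strictly positive weight thereafter; both are one-line consequences of submodularity and yield the same conclusion.
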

\begin{proof} \emph{(Lemma~\ref{lem:mmin_iter})}
Throughout this paper, we assume that we select only the minimal minimizer of the modular function at every step. In other words, we do not choose the elements that have zero marginal cost. 
We observe 
that in iteration $t+1$ of \mmini-I, we add the elements $i$ with $\ggrow_{X^t}(i) < 0$, i.e., $X^{t+1} = X^t \cup \{j: f(j | X^t) < 0\}$. No element will ever be removed, since $\ggrow_{X^t}(i) = f(i \mid V\setminus i) \leq f(i\mid X^{t-1}) \leq 0$.
If we start with $X^0 = \emptyset$, then after the first iteration, it holds that $X^1 = \argmin_X f(\emptyset) + \sum_{j \in X} f(j)$. Hence $X^1 = A$.
\mmini-I terminates when reaching a set $A_+$, where $f(j | A_+) \geq 0$, for all $j \notin A_+$.

The analysis of \mmini-II is analogous. 
In iteration $t+1$, we remove the elements $i$ with $\gshrink_{X^t}(i) > 0$, i.e., $X^{t+1} = X^t \backslash \{j: f(j | X^t - j) > 0\}$. Similarly to the argumentation above, \mmini-II never adds any elements. If we begin with $X^0 = V$, then $X^1 = \mbox{arg min}_X f(V) + \sum_{j \in V \backslash X} f(j | V - \{j\})$, and therefore $X^1 = B$. \mmini-II terminates with a set $B_+$.
\end{proof}
Now we can prove Theorem~\ref{thm:prune}.
\begin{proof}{\emph{(Thm.~\ref{thm:prune})}}
Since, by Lemma~\ref{lem:mmin_iter}, \mmini-I only adds elements and \mmini-II only removes elements, at least one in each iteration, both algorithms terminate after $O(n)$ iterations.

Let us now turn to the relation of $X^*$ to $A$ and $B$. Since $f(i)<0$ for all $i \in A$, the set $X^1=A$ found in the first iteration of \mmin-I must be a subset of $X^*$. Consider any subset $X^t \subseteq X^*$. Any element $j$ for which $f(j\mid X^t) < 0$ must be in $X^*$ as well, because by submodularity, $f(j\mid X^*) \leq f(j\mid X^{t}) < 0$. This means $f(X^* \cup j) < f(X^*)$, which would otherwise contradict the optimality of $X^*$. The set of such $j$ is exactly $X^{t+1}$, and therefore $X^{t+1} \subseteq X^*$.
This induction shows that \mmini-I, whose first solution is $A \subseteq X^*$, always returns a subset of $X^*$.
Analogously, $B \supseteq X^*$, and \mmini-II only removes elements $j \notin X^*$.


Finally, we argue that $A_+$ is a local minimum; the proof for $B_+$ is analogous.
Algorithm \mmini-I generates a chain $\emptyset = X^0 \subseteq X^1 \subseteq X^2 \cdots \subseteq A_+ = X^T$. For any $t \leq T$, consider $j \in X^t\setminus X^{t-1}$. Submodularity implies that $f(j | A_+ \setminus j) \leq f(j | X^{t-1}) < 0$. The last inequality follows from the fact that $j$ was added in iteration $t$. Therefore, removing any $j \in A_+$ will increase the cost. Regarding the elements $i \notin A_+$, we observe that \mmini-I has terminated, which implies that $f(i \mid A_+) \geq 0$. Hence, adding $i$ to $A_+$ will not improve the solution, and $A_+$ is a local minimum.
%
\end{proof}}\looseness-1

Theorem~\ref{thm:prune} has a number of nice implications. First, it provides a tighter bound on the lattice of minimizers of the submodular function $f$ that, to the best of our knowledge, has not been used or mentioned before. 
\arxiv{The sets $A_+$ and $B_+$ obtained above are guaranteed to be supersets and subsets of $A$ and $B$, respectively, as illustrated in Figure~\ref{fig:illustrate_lattices}. }%
This means we can start any algorithm for submodular minimization from the lattice $\mathcal L_+$ instead of the initial lattice $2^V$ or $\mathcal{L}$. When using an algorithm whose running time is a high-order polynomial of $|V|$, any reduction of the ground set $V$ is beneficial. 
Second, each iteration of \mmini{} takes linear time. Therefore, its total running time is $O(n^2)$.
Third, Theorem~\ref{thm:prune} states that both \mmini-I and II converge to a local minimum. \arxiv{This may be counter-intuitive if one considers that each algorithm either only adds or only removes elements. }In consequence, a local minimum of a submodular function can be obtained in $O(n^2)$, a fact that is of independent interest and that does not hold for local maximizers~\cite{janvondrak}. \STR{They show an upper bound, not a lower bound on the running time right? So we could say that this \emph{probably} does not hold for maximizers?}

The following example illustrates that $\mathcal L_+$ can be a strict
subset of $\mathcal L$ and therefore provides non-trivial pruning.
Let $w_1, w_2 \in \mathbb{R}^V$, $w_1 \geq 0$ be two vectors, each
defining a linear (modular) function. Then the function $f(X) =
\sqrt{w_1(X)} + w_2(X)$ is submodular. Specifically, let $w_1 = [3, 9,
17, 14, 14, 10, 16, 4, 13, 2]$ and $w_2 = [-9, 4, 6, -1, 10, -4 , -6,
-1, 2, -8]$. Then we obtain $\mathcal{L}$ defined by $A = [1,6,7,10]$
and $B = [1,4,6,7,8,10]$. The tightened sublattice contains exactly
the minimizer: $A_+ = B_+ = X^* = [1,6,7, 8, 10]$.

\arxivalt{
As a refinement to Theorem~\ref{thm:prune}, we can show that \mmini-I and \mmini-II converge to the local minima of lowest and highest cardinality, respectively.
\begin{lemma}\label{localminall}
The set $A_+$ is the smallest local minimum of $f$ (by cardinality), and $B_+$ is the largest.
Moreover, every local minimum $Z$ is in $\mathcal{L}_+$: $Z \in \mathcal{L}_+$ for every local minimum $Z$.
\end{lemma}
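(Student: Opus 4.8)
The plan is to prove the \emph{moreover} clause first---that every local minimum $Z$ satisfies $A_+ \subseteq Z \subseteq B_+$, i.e.\ $Z \in \mathcal{L}_+$---and then read off the two extremal statements. Indeed, once we know that $A_+ \subseteq Z$ for \emph{every} local minimum $Z$, and recall from Theorem~\ref{thm:prune} that $A_+$ is \emph{itself} a local minimum, it follows that $A_+$ is the inclusion-minimum (hence cardinality-minimum) local minimum; symmetrically $B_+$ is the inclusion-maximum one. So the whole lemma reduces to the two inclusions $A_+ \subseteq Z$ and $Z \subseteq B_+$.

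The key observation is that the proof of Theorem~\ref{thm:prune} never used the \emph{global} optimality of $X^*$: every contradiction derived there came from adding or removing a \emph{single} element, so the identical induction goes through verbatim with any local minimum $Z$ in place of $X^*$. Concretely, for $A_+ \subseteq Z$ I would run \mmini-I from $X^0 = \emptyset$ and show by induction along the chain $\emptyset = X^0 \subseteq X^1 \subseteq \cdots \subseteq A_+$ of Lemma~\ref{lem:mmin_iter} that $X^t \subseteq Z$ for all $t$. For the base case, any $j \in A$ has $f(j \mid \emptyset) < 0$, so if $j \notin Z$ then submodularity gives $f(j \mid Z) \le f(j \mid \emptyset) < 0$, making $f(Z \cup j) < f(Z)$ and contradicting local minimality; thus $X^1 = A \subseteq Z$. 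For the step, if $X^t \subseteq Z$ then any newly added $j$ (one with $f(j \mid X^t) < 0$) lying outside $Z$ would satisfy $f(j \mid Z) \le f(j \mid X^t) < 0$, again contradicting local minimality; so $X^{t+1} \subseteq Z$, and in the limit $A_+ \subseteq Z$.

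The inclusion $Z \subseteq B_+$ is the mirror image: run \mmini-II from $X^0 = V$ and show $Z \subseteq X^t$ along the descending chain $V = X^0 \supseteq X^1 \supseteq \cdots \supseteq B_+$ (the base case $Z \subseteq V$ is trivial). Here the relevant inequality is reversed: if some $j \in Z$ were removed in iteration $t+1$, so $f(j \mid X^t \setminus j) > 0$, then, since $Z \setminus j \subseteq X^t \setminus j$, submodularity yields $f(j \mid Z \setminus j) \ge f(j \mid X^t \setminus j) > 0$, whence $f(Z \setminus j) < f(Z)$, contradicting local minimality. Thus no element of $Z$ is ever removed and $Z \subseteq B_+$. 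Combining the two inclusions gives $Z \in \mathcal{L}_+$, and the extremal claims follow as described in the first paragraph.

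I expect no deep obstacle: the entire content is recognizing that the Theorem~\ref{thm:prune} argument is driven purely by single-element exchanges and therefore already characterizes \emph{all} local minima, not merely the global minimizers. The only points needing care are orienting the two submodularity inequalities correctly---the ``grow'' direction uses $f(j\mid Z)\le f(j\mid X^t)$ with $X^t\subseteq Z$, while the ``shrink'' direction uses $f(j\mid Z\setminus j)\ge f(j\mid X^t\setminus j)$ with $Z\setminus j\subseteq X^t\setminus j$---and confirming that the base cases $A\subseteq Z$ and $Z\subseteq B$ survive when $Z$ is an arbitrary local minimum rather than a global one.
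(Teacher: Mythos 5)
Your proposal is correct and follows essentially the same route as the paper's proof: the same induction along the \mmini-I and \mmini-II chains with single-element exchange contradictions via submodularity, combined with the fact (from Theorem~\ref{thm:prune}) that $A_+$ and $B_+$ are themselves local minima. The only difference is presentational --- you prove the containment $A_+ \subseteq Z \subseteq B_+$ for an arbitrary local minimum $Z$ first and read off the extremal claims, whereas the paper argues for the smallest and largest local minima first and then notes the same argument covers every $Z$ --- which is a harmless reordering of the identical argument.
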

\begin{proof}
The proof proceeds analogously to the proof of Theorem~\ref{thm:prune}. Let $Y_s$ be the local minimum of smallest-cardinality, and $Y_\ell$ the largest one.
First, we note that $X^0 = \emptyset \subseteq Y_s$. For induction, assume that $X^t \subseteq Y_s$. For contradiction, assume there is an element $j \in X^{t+1}$ that is not in $Y_s$. Since $j \in X^{t+1}\setminus X^t$, it holds by construction that $f(j\mid Y_s) \leq f(j \mid X^t) < 0$, implying that $f(Y_s \union j) < f(Y_s)$. This contradicts the local optimality of $Y_s$, and therefore it must hold that $X^{t+1} \subseteq Y_s$.
Consequently, $A_+ \subseteq Y_s$. But $A_+$ is itself a local minimum, and hence equality holds.
The result for $B_+$ follows analogously.


By the same argumentation as above for $Y_s$ and $Y_\ell$, we conclude
that each local minimum $Z$ satisfies $A_+ \subseteq Z \subseteq B_+$, and therefore $Z \in
\mathcal{L}_+ \subseteq \mathcal{L}$.
%
\end{proof}

As a corollary, Lemma~\ref{localminall} implies that 
if a submodular function has a unique local minimum, \mmini-I and II must find this minimum, which is a global one. 

In the following we consider two extensions of \mmini-I and II. First, we analyze an algorithm that alternates between \mmini-I and \mmini-II. While such an algorithm does not provide much benefit when started at $X^0 = \emptyset$ or $X^0 = V$, we see that with a random initialization $X^0=R$, the alternation ensures convergence to a local minimum. Second, we 
address the question of which supergradients to select in general. In particular, we show that the supergradients $\ggrow$ and $\gshrink$ subsume alternativee supergradients and provide the tightest results with \mmin{}. Hence, our results are the tight.

\paragraph{Alternating \mmini-I and II and arbitrary initializations.}
Instead of running only one of \mmini-I and II, we can run one until it stops and then switch to the other.
Assume we initialize both algorithms with a random set $X^0 = R \in \mathcal L_+$. By Theorem~\ref{thm:prune}, we know that \mmini-I will return a subset $R^1 \supset R$ (no element will be removed because all removable elements are not in $B$, and $R \subset B$ by assumption). When \mmini-I terminates, it holds that $\ggrow_{R^1}(j) = f(j | R^1) \geq 0$ for all $j \notin R^1$, and therefore $R^1$ cannot be increased using $\ggrow_{R_1}$. We will call such a set an \emph{I-minimum}. 
Similarly, \mmini-II returns a set $R_1 \subseteq R$ from which, considering that $\gshrink_{R_1}(j) = f(j | R_1 \setminus j) \leq 0$ for all $j \in R_1$, no elements can be removed. We call such a non-decreasable set a \emph{D-minimum}. Every local minimum is both an I-minimum and a D-minimum.

We can apply \mmini-II to the I-minimum $R^1$ returned by \mmini-I. Let us call the resulting set $R^2$. Analogously, applying \mmini-I to $R_1$ yields $R_2 \supseteq R_1$.


\begin{lemma}\label{arbitstart}
The sets $R_2$ and $R^2$ are local optima. Furthermore, $R_1 \subseteq R_2 \subseteq R^2 \subseteq R^1$.  
\end{lemma}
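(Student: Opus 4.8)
The plan is to prove the four-set chain $R_1 \subseteq R_2 \subseteq R^2 \subseteq R^1$ in pieces while separately certifying that the two ``mixed'' endpoints $R_2$ and $R^2$ are genuine local optima. Two of the inclusions come for free from Lemma~\ref{lem:mmin_iter}: \mmini-I (which produces $R_2$ from $R_1$) only ever adds elements, so $R_1 \subseteq R_2$, and \mmini-II (which produces $R^2$ from $R^1$) only ever removes elements, so $R^2 \subseteq R^1$. Thus the real work is the middle inclusion $R_2 \subseteq R^2$ together with the local-optimality claims.

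For local optimality I would first show that $R_2$ is simultaneously an I-minimum and a D-minimum, which is exactly what it means to be a local minimum (as observed just before the lemma). It is an I-minimum by the stopping rule of \mmini-I. To see it is also a D-minimum, I would recycle the chain argument from the proof of Theorem~\ref{thm:prune}: writing the run as $R_1 = Z^0 \subseteq \cdots \subseteq Z^K = R_2$, any late-added $j \in Z^t \setminus Z^{t-1}$ satisfies $f(j \mid R_2 \setminus j) \leq f(j \mid Z^{t-1}) < 0$ by submodularity, while any original $j \in R_1$ satisfies $f(j \mid R_2 \setminus j) \leq f(j \mid R_1 \setminus j) \leq 0$ using the D-minimality of $R_1$ and $R_1 \setminus j \subseteq R_2 \setminus j$. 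Hence no single deletion helps and $R_2$ is a local optimum; the argument for $R^2$ is symmetric, swapping adding for removing, $\ggrow$ for $\gshrink$, and using that $R^1$ is an I-minimum.

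The crux is $R_2 \subseteq R^2$, and my key observation is that both \mmini-I and \mmini-II are \emph{monotone} procedures on $2^V$: by submodularity the one-step update $X \mapsto X \cup \{j : f(j \mid X) < 0\}$ (resp.\ $X \mapsto X \setminus \{j : f(j \mid X \setminus j) > 0\}$) is order-preserving in $X$, hence so is the fixed point reached by iterating it. First, monotonicity of \mmini-I applied to $R_1 \subseteq R^1$, together with the fact that the I-minimum $R^1$ is a fixed point of \mmini-I, gives $R_2 \subseteq R^1$. Next, since $R_2$ was just shown to be a D-minimum, it is a fixed point of \mmini-II, so monotonicity of \mmini-II applied to $R_2 \subseteq R^1$ yields $R_2 \subseteq R^2$. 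I expect this combination to be the main obstacle --- not computationally, but because one must already know $R_2$ is a D-minimum before the monotonicity of the shrinking procedure can be played off against $R^1$. If one prefers to avoid stating monotonicity abstractly, the same conclusion follows directly: in the \mmini-II run started from $R^1 \supseteq R_2$, no element of $R_2$ can ever be the first deleted, since at that moment the current set still contains $R_2$, and submodularity plus the D-minimality of $R_2$ force its marginal to be nonpositive, contradicting the deletion rule.
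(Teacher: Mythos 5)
Your proof is correct, and it takes a noticeably different route from the paper's. The paper first confines all four sets to the interval $[R_1,R^1]$ --- using submodularity plus the I-minimality of $R^1$ to show the growing run from $R_1$ never leaves $R^1$, and the D-minimality of $R_1$ to show the shrinking run from $R^1$ never dips below $R_1$ --- and then applies Theorem~\ref{thm:prune} to the submodular function \emph{restricted to the contracted lattice} $[R_1,R^1]$: there $R_2$ and $R^2$ play the roles of $A_+$ and $B_+$, so a single invocation yields both the inclusion $R_2 \subseteq R^2$ and the local minimality of both sets, together with the short boundary argument that a local minimum of the restricted function is a local minimum on $[\emptyset,V]$. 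You instead unfold that reduction into two self-contained steps: (i) local optimality of $R_2$ (and symmetrically $R^2$) proved by extending the chain argument from the proof of Theorem~\ref{thm:prune} to a non-empty starting set, where your use of the D-minimality of $R_1$ to handle the ``original'' elements $j \in R_1$ is exactly the extra case that does not arise when the start is $\emptyset$, and you handle it correctly; and (ii) the middle inclusion derived from an explicit order-preservation property of the one-step \mmini{} updates plus fixed-point reasoning ($R^1$ is a fixed point of the growing map, and $R_2$, once certified as a D-minimum, is a fixed point of the shrinking map). The paper's route buys brevity --- one appeal to an already-proved theorem --- at the cost of the lightly justified claims that the restricted runs coincide with the unrestricted ones and that local minimality transfers across the sublattice boundary; your route buys a fully self-contained argument and a reusable lemma (monotonicity of the \mmini-I/II operators on $2^V$) that the paper never isolates. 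Your closing observation about the forced order of steps --- $R_2$ must be known to be a D-minimum before monotonicity of the shrinking map can be played off against $R^1$ --- is accurate, and it is precisely the dependency that the paper's one-shot lattice-restriction argument sidesteps.
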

\begin{proof}
It is easy to see that $A \subseteq R_1 \subseteq B$, and $A \subseteq R^1 \subseteq B$. By Lemma~\ref{lem:mmin_iter}, \mmini-I applied to $R_1$ will only add elements, and \mmini-II on $R^1$ will only remove elements. Since $R^1$ is an I-minimum, adding an element $j \in V\setminus R^1$ to any set $X \subset R^1$ never helps, and therefore $R^1$ contains all of $R_1$, $R_2$ and $R^2$. Similarly, $R_1$ is contained in $R_2$, $R^2$ and $R^1$. In consequence, it suffices to look at the contracted lattice $[R_1,R^1]$, and any local minimum in this sublattice is a local minimum on $[\emptyset, V]$. Theorem~\ref{thm:prune} applied to the sublattice $[R_1, R^1]$ (and the submodular function restricted to the sublattice) yields the inclusion $R_2 \subseteq R^2$, so $R_1 \subseteq R_2 \subseteq R^2 \subseteq R^1$, and both $R_2$ and $R^2$ are local minima.
\end{proof}


The following lemma provides a more general view. 
\begin{lemma}\label{idmin}
Let $S_1 \subseteq S^1$ be such that $S_1$ is an I-minimum and $S^1$ is a D-minimum.
Then there exist local minima $S_2 \subseteq S^2$ in $[S_1, S^1]$ such that initializing with any $X^0 \in [S_1, S^1]$, an alternation of \mmini-I and II converges to a local minimum in $[S_2, S^2]$, and 
\begin{align}
\min_{X \in [S_1, S^1]} f(X) = \min_{X \in [S_2, S^2]} f(X).
\end{align}
\end{lemma}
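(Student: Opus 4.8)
The plan is to reduce the claim to the machinery already proved in Theorem~\ref{thm:prune} and Lemma~\ref{localminall} by restricting attention to the sublattice $[S_1,S^1]$. Concretely, I would introduce the contracted set function $g(Y) = f(S_1 \cup Y)$ on the reduced ground set $U = S^1 \setminus S_1$. This $g$ is again submodular, and the map $X \leftrightarrow Y = X \setminus S_1$ is an order isomorphism between $[S_1,S^1]$ and $2^U$ that preserves the values of $f$ up to the additive constant $f(S_1)$ (irrelevant for minimizers and for local minima). By Lemma~\ref{lem:mmin_iter}, a step of \mmini-I or \mmini-II confined to $[S_1,S^1]$ corresponds exactly to the same step run on $g$, so the whole statement can be read off from the behaviour of \mmini on $g$.

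With this translation I would define $S_2$ and $S^2$ as the images under $Y \mapsto S_1 \cup Y$ of the smallest and largest local minima of $g$, which exist and are produced by \mmini-I and \mmini-II applied to $g$ (Theorem~\ref{thm:prune}). Lemma~\ref{localminall}, applied to $g$, then delivers three facts at once: $S_2 \subseteq S^2$; every local minimum of $g$—hence every limit point of an alternation of \mmini-I and II started anywhere in $2^U$—lies in $[S_2 \setminus S_1, S^2 \setminus S_1]$, which translates to convergence into $[S_2,S^2]$; and, since a global minimizer of a submodular function is a local minimum, every minimizer of $g$ lies in the same interval, giving $\min_{X \in [S_1,S^1]} f(X) = \min_{X \in [S_2,S^2]} f(X)$. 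That the alternation actually converges I would argue exactly as in Lemma~\ref{arbitstart}: applying \mmini-II to an I-minimum (or \mmini-I to a D-minimum) produces a set that is simultaneously an I- and a D-minimum, i.e. a local minimum, so the process stabilizes after one full sweep.

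The main obstacle is the interface with the full ground set $V$, and it is precisely here that the hypotheses must be spent. To keep the alternation inside $[S_1,S^1]$ I need \mmini-I never to add a $j \notin S^1$ and \mmini-II never to delete a $j \in S_1$; via submodularity these reduce to the sign conditions $f(j \mid S^1) \geq 0$ for $j \notin S^1$ and $f(j \mid S_1 \setminus j) \leq 0$ for $j \in S_1$, i.e. to $S^1$ being an \emph{I}-minimum and $S_1$ being a \emph{D}-minimum. The hypotheses supply the opposite pair, which constrains marginals on the wrong side of each boundary, so the delicate point is to show that the alternation still cannot escape the sublattice and that the extreme local minima of $g$ are genuine local minima of $f$ on all of $[\emptyset,V]$. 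I would try to close this gap either by exploiting the minimum-cardinality tie-breaking convention together with the chain structure of Lemma~\ref{lem:mmin_iter}, or by interpreting the alternation and the notion of local minimum relative to the sublattice $[S_1,S^1]$ itself (equivalently, relative to $g$), in which case the boundary moves are disallowed and Lemma~\ref{localminall} applies directly. Securing this self-containment, and pinning down the exact sense in which $S_2,S^2$ are "local minima," is the step I expect to require the most care; everything after it is bookkeeping on top of Theorem~\ref{thm:prune}, Lemma~\ref{localminall}, and Lemma~\ref{arbitstart}.
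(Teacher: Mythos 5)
Your plan — contract $f$ to the interval via $g(Y)=f(S_1\cup Y)$ on $U=S^1\setminus S_1$ and then quote Theorem~\ref{thm:prune}, Lemma~\ref{localminall} and Lemma~\ref{arbitstart} for $g$ — is exactly the paper's own argument, and the ``delicate point'' you flag at the boundary is not a weakness of your plan: it is a genuine defect of the statement, and it cannot be closed as written. The hypothesis has I and D interchanged. The interval argument needs the \emph{bottom} set to be non-decreasable and the \emph{top} set to be non-increasable, i.e.\ $S_1$ a D-minimum and $S^1$ an I-minimum — precisely the configuration of Lemma~\ref{arbitstart}, where $R_1$ (output of \mmini-II) sits at the bottom and $R^1$ (output of \mmini-I) at the top. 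With the hypothesis as literally stated the lemma is false. Take $V=\{1,2,3\}$ and the submodular function $f(\emptyset)=0$, $f(\{1\})=f(\{1,2\})=f(\{1,3\})=-1$, $f(\{2\})=f(\{3\})=1$, $f(\{2,3\})=2$, $f(V)=-6$ (all diminishing-returns inequalities are easily checked). Then $S_1=\{1\}$ is an I-minimum since $f(2\mid\{1\})=f(3\mid\{1\})=0$, and $S^1=\{1,2\}$ is a D-minimum since $f(1\mid\{2\})=-2$ and $f(2\mid\{1\})=0$. The only local minimum of $f$ inside $[S_1,S^1]$ is $\{1\}$, so necessarily $S_2=S^2=\{1\}$; yet starting the alternation at $X^0=S^1$, \mmini-I adds element $3$ (because $f(3\mid\{1,2\})=-5<0$) and reaches $V$, which is a local minimum ($f(j\mid V\setminus j)\in\{-8,-5,-5\}$), so the alternation converges \emph{outside} $[S_1,S^1]$ altogether and no admissible choice of $S_2,S^2$ can satisfy the convergence claim. (A small variant even kills existence: the interval may contain no local minimum of $f$ at all.)

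This also adjudicates your two fallback strategies. The tie-breaking convention cannot rescue the stated hypothesis: in the counterexample the escape is driven by a strictly negative gain, not by a zero-gain tie. Relativizing ``local minimum'' and the alternation to the sublattice (equivalently, to $g$) does make the statement true, but it proves a strictly weaker claim, since a $g$-local minimum need not be an $f$-local minimum exactly when the boundary conditions point the wrong way — so this silently changes the lemma rather than proving it. Once the hypothesis is swapped ($S_1$ a D-minimum, $S^1$ an I-minimum), your argument goes through verbatim and coincides with the paper's proof: for $X\in[S_1,S^1]$ and $j\notin S^1$, submodularity gives $f(j\mid X)\geq f(j\mid S^1)\geq 0$, so by Lemma~\ref{lem:mmin_iter} \mmini-I never adds an element outside $S^1$; symmetrically \mmini-II never deletes an element of $S_1$; hence the alternation is confined to $[S_1,S^1]$, every local minimum of $g$ is a genuine local minimum of $f$ on $2^V$, and Lemma~\ref{localminall} applied to $g$ delivers $S_2$, $S^2$, the convergence claim, and the equality of minima (the minimizer over $[S_1,S^1]$ being a local minimum of $g$). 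So your instinct about where the proof must ``spend the hypotheses'' was exactly right; the correct conclusion is that the hypotheses as printed are the wrong pair.
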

\begin{proof}
Let $S_2, S^2$ be the smallest and largest local minima within $[S_1, S^1]$. By the same argumentation as for Lemma~\ref{arbitstart}, using $X^0 \in [S_1,S^1]$ leads to a local minimum within $[S_2,S^2]$. Since by definition all local optima in $[S_1, S^1]$ are within $[S_2, S^2]$, the global minimum within $[S_1, S^1]$ will also be in $[S_2, S^2]$.
\end{proof}

The above lemmas have a number of implications for minimization algorithms. First, many of the properties for initializing with $V$ or the empty set can be transferred to arbitrary initializations. In particular, the succession of \mmini-I and II will terminate in $O(n^2)$ iterations, regardless of what $X^0$ is. Second, Lemmas~\ref{arbitstart} and \ref{idmin} provide useful pruning opportunities: we can prune down the initial lattice to $[R_2,R^2]$ or $[S_2,S^2]$, respectively. In particular, if any global optimizer of $f$ is contained in $[S_1, S^1]$, it will also be contained in $[S_2, S^2]$.


\paragraph{Choice of supergradients.}
We close this section with a remark about the choice of supergradients. The following Lemma states how $\ggrow_X$ and $\gshrink_X$ subsume alternative choices of supergradients and \mmini-I and II lead to the tightest results possible.
%
\begin{lemma}\label{lem:mmin_tight}
Initialized with $X^0=\emptyset$, Algorithm~\ref{mirroropt} will converge to a subset of $A_+$ with \emph{any} choice of supergradients. Initialized with $X^0=V$, the algorithm will converge to a superset of $B_+$ with any choice of supergradients. If $X^0$ is a local minimum, then the algorithm will not move with any supergradient.
\end{lemma}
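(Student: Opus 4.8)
The plan is to reduce all three assertions to two per-coordinate bounds that hold for \emph{every} supergradient, and then combine these with the local optimality of $A_+$ and $B_+$ established in Theorem~\ref{thm:prune} and Lemma~\ref{localminall}. First I would record the only facts about a generic $g_Y \in \partial^f(Y)$ that the argument needs. Testing the defining inequality $f(X) - g_Y(X) \le f(Y) - g_Y(Y)$ at the single-element perturbations $X = Y \cup \{j\}$ (for $j \notin Y$) and $X = Y \setminus \{j\}$ (for $j \in Y$) gives
\begin{align*}
 g_Y(j) &\ge f(j \mid Y) = \ggrow_Y(j) \qquad (j \notin Y),\\
 g_Y(j) &\le f(j \mid Y \setminus j) = \gshrink_Y(j) \qquad (j \in Y).
\end{align*}
In words, outside $Y$ the grow-supergradient is the pointwise smallest admissible supergradient, and inside $Y$ the shrink-supergradient is the largest. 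Since Step 4 of Algorithm~\ref{mirroropt} minimizes the modular function $g_{X^t}(\cdot)$, the minimum-cardinality minimizer is $X^{t+1} = \{j : g_{X^t}(j) < 0\}$, so the two bounds exactly control which elements \emph{any} supergradient may add or delete; this is the sense in which $\ggrow$ and $\gshrink$ subsume all other choices.

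For the initialization $X^0 = \emptyset$ I would prove the invariant $X^t \subseteq A_+$ by induction. The base case is trivial. Assuming $X^t \subseteq A_+$, take any $j \in X^{t+1}$, so $g_{X^t}(j) < 0$; if $j \in X^t$ then $j \in A_+$ by hypothesis. If $j \notin X^t$, the first bound gives $f(j \mid X^t) \le g_{X^t}(j) < 0$, and submodularity together with $X^t \subseteq A_+$ yields $f(j \mid A_+) \le f(j \mid X^t) < 0$; since $A_+$ is a local minimum (Theorem~\ref{thm:prune}), this forces $j \in A_+$. Hence $X^{t+1} \subseteq A_+$, and as the objective is non-increasing (the monotone-improvement lemma following Algorithm~\ref{mirroropt}) over the finite lattice, the algorithm converges to a subset of $A_+$.

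The case $X^0 = V$ is dual: I would show $B_+ \subseteq X^t$ for all $t$. Assuming $B_+ \subseteq X^t$ and taking $j \in B_+ \subseteq X^t$, the second bound and submodularity give $g_{X^t}(j) \le f(j \mid X^t \setminus j) \le f(j \mid B_+ \setminus j) \le 0$, the last inequality because $B_+$ is a local minimum; so $j$ is not deleted and $B_+ \subseteq X^{t+1}$, giving convergence to a superset of $B_+$. Finally, if $X^0 = Z$ is a local minimum, then for $j \notin Z$ we have $g_Z(j) \ge f(j \mid Z) \ge 0$ (so $j$ is not added) and for $j \in Z$ we have $g_Z(j) \le f(j \mid Z \setminus j) \le 0$ (so $j$ is not deleted), whence $X^1 = Z$ for every supergradient.

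The only real delicacy—and the step I expect to be fiddly—is the behaviour at \emph{zero-marginal} elements. On the shrink side an element with $f(j \mid X^t \setminus j) = 0$ only yields $g_{X^t}(j) \le 0$, so an adversarial supergradient attaining $g_{X^t}(j) = 0$ would drop it under the strict minimum-cardinality rule, and symmetrically such a $j \in Z$ in the last part could be removed, moving to an equal-cost local minimum. These are exactly the elements on which $A_+$ and $B_+$ can differ, so I would dispatch them using the same tie-breaking convention that defines $A_+$ and $B_+$ (equivalently, under a non-degeneracy assumption excluding zero marginals). I expect no substantive inequality to be at stake here, only the bookkeeping needed to apply that convention consistently across the three cases.
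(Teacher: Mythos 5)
Your proof is correct and follows essentially the route the paper intends: the paper omits the argument, saying only that it is ``very similar to the proof of Theorem~\ref{thm:prune}'', and what you wrote is exactly that induction, supplemented by the per-coordinate extremality bounds $g_Y(j) \ge f(j\mid Y)$ for $j \notin Y$ and $g_Y(j) \le f(j\mid Y\setminus j)$ for $j \in Y$, which is precisely the ingredient needed to pass from the specific supergradients $\ggrow_Y,\gshrink_Y$ to arbitrary ones. Your flagging of the zero-marginal tie-breaking subtlety is legitimate---the paper itself is not fully consistent on this point (compare the strict minimum-cardinality rule with the statement of Lemma~\ref{lem:mmin_iter})---and resolving it by applying the same convention that defines $A_+$ and $B_+$ is the right fix.
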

The proof of Lemma~\ref{lem:mmin_tight} is very similar to the proof of Theorem~\ref{thm:prune}.
}{
The \mmin{} algorithms can be extended to arbitrary initializations and other supergradients \cite{ouricml2013supp}. \looseness-1
}

\subsection{Constrained submodular minimization}

\mmin{} straightforwardly generalizes to constraints more complex than
$\mathcal{C} = 2^V$, and Theorem~\ref{thm:prune} still holds for more
general lattices or ring family constraints.

Beyond lattices, \mmin{} applies to any set of constraints
$\mathcal{C}$ as long as we have an efficient algorithm at hand that
minimizes a nonnegative modular cost function over $\mathcal{C}$. This
subroutine can even be approximate. Such algorithms are available for
cardinality bounds, independent sets of a matroid and many other
combinatorial constraints such as trees, paths or cuts.

As opposed to unconstrained submodular minimization, almost all cases of constrained submodular minimization are very hard \cite{svitkina2008submodular,jegelka2011-inference-gen-graph-cuts,goel2009approximability}, and admit at most approximate solutions in polynomial time.
The next theorem states an upper bound on the approximation factor achieved by \mmini-I for nonnegative, nondecreasing cost functions. An important ingredient in the bound is the \emph{curvature}~\cite{conforti1984submodular} of a monotone submodular function $f$, defined as
\begin{align}
  \kappa_f = 1 - \min\nolimits_{j \in V} f(j \mid V \backslash j)\, / \,f(j)
\end{align}

\begin{theorem}\label{SAAthm}
Let $X^* \in \argmin_{X \in \mathcal{C}}f(X)$. The solution $\widehat{X}$ returned by \mmini-I satisfies
\begin{align*}
f(\widehat{X}) \leq \frac{|X^*|}{1 + (|X^*| - 1)(1 - \kappa_f)}f(X^*) \leq \frac{1}{1 - \kappa_f}f(X^*) 
\end{align*}
If the minimization in Step 4 is done with approximation factor $\beta$, then $f(\widehat{X}) \leq \beta/(1 - \kappa_f) f(X^*)$. 
\end{theorem}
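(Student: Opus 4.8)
The plan is to reduce the entire statement to a bound on the \emph{first} iterate of \mmini-I started at $X^0 = \emptyset$, and then to control that iterate using the curvature. Since the supergradient at the empty set satisfies $\ggrow_\emptyset(j) = f(j\mid\emptyset) = f(j)$, the modular surrogate collapses to $m^{\ggrow_\emptyset}(X) = \sum_{j \in X} f(j)$, so Step~4 of Algorithm~\ref{mirroropt} computes $X^1 \in \argmin_{X \in \mathcal C}\sum_{j\in X} f(j)$, an ordinary modular (linear) minimization over the constraint family. Because $X^* \in \mathcal C$ is feasible, optimality of $X^1$ for this linear objective gives $\sum_{j \in X^1} f(j) \le \sum_{j \in X^*} f(j)$. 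The monotone-improvement lemma for Algorithm~\ref{mirroropt} then guarantees $f(\widehat X) \le f(X^1)$, so it suffices to bound $f(X^1)$.

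First I would pass from $f(X^1)$ to the modular objective by subadditivity of a monotone submodular function: since $f(\emptyset)=0$, a chain expansion gives $f(X^1) = \sum_i f(\sigma(i)\mid S^\sigma_{i-1}) \le \sum_{j \in X^1} f(j)$. Combined with the previous inequality this yields $f(\widehat X) \le \sum_{j\in X^*} f(j)$, and everything now hinges on relating $\sum_{j\in X^*} f(j)$ to $f(X^*)$ through $\kappa_f$. The key estimate is
\begin{align*}
\sum_{j\in X^*} f(j) \;\le\; \frac{|X^*|}{1+(|X^*|-1)(1-\kappa_f)}\, f(X^*),
\end{align*}
which I would prove by the standard curvature argument: write $f(X^*)$ as a telescoping sum along a chain, lower-bound each marginal by $f(\sigma(i)\mid V\setminus\sigma(i)) \ge (1-\kappa_f) f(\sigma(i))$ from the definition of curvature (treating the first element exactly, as $f(\sigma(1))$), and then choose the ordering so that $\sigma(1)$ is the singleton of largest value, using $f(\sigma(1)) \ge \tfrac{1}{|X^*|}\sum_{j\in X^*} f(j)$. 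Collecting terms produces exactly the factor $1+(|X^*|-1)(1-\kappa_f)$ in the denominator.

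Chaining these inequalities delivers the first claimed bound. The second inequality $\frac{|X^*|}{1+(|X^*|-1)(1-\kappa_f)} \le \frac{1}{1-\kappa_f}$ is pure algebra: cross-multiplying reduces it to $\kappa_f \ge 0$. For the approximate version, if Step~4 only returns a $\beta$-approximate minimizer of the linear objective, then the single inequality $\sum_{j\in X^1} f(j) \le \beta\sum_{j\in X^*} f(j)$ replaces the exact one, and the factor $\beta$ propagates through unchanged, giving $f(\widehat X) \le \frac{\beta}{1-\kappa_f} f(X^*)$. I expect the one genuinely substantive step to be the curvature estimate; the rest is the surrogate-collapse observation and bookkeeping, and the only subtlety is confirming that monotone improvement (hence $f(\widehat X)\le f(X^1)$) still lets us charge everything to the first iterate — in the approximate case one simply reports $X^1$ or the best iterate seen, since exact monotone improvement is no longer guaranteed.
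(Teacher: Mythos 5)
Your proof is correct, and it shares the paper's overall skeleton---reduce to the first iterate, which is exactly modular minimization of $\sum_{j\in X} f(j)$ over $\mathcal{C}$, then convert to a curvature bound---but it is genuinely more self-contained at the decisive step. The paper's proof outsources that step: it cites an inequality from Jegelka's thesis, namely $f(\widehat{X}) \le \frac{g(X^*)/f(i)}{1+(1-\kappa_f)\left(g(X^*)/f(i)-1\right)}\, f(X^*)$ for any $i \in V$ with $g \triangleq \ggrow_\emptyset$, and then merely transfers it to the stated form by picking $i' \in \argmax_{i\in V} f(i)$, noting $g(X^*) \le |X^*|\, f(i')$, and using that $x \mapsto x/(1+(1-\kappa_f)(x-1))$ is increasing. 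You instead prove everything from first principles, with a different decomposition: (i) $f(\widehat{X}) \le f(X^1) \le \sum_{j\in X^1} f(j) \le \sum_{j\in X^*} f(j)$ by monotone improvement, subadditivity, and linear optimality against the feasible $X^*$; and (ii) the telescoping curvature estimate $\sum_{j\in X^*} f(j) \le \frac{|X^*|}{1+(|X^*|-1)(1-\kappa_f)} f(X^*)$, obtained by treating the largest singleton of $X^*$ exactly and lower-bounding every other marginal via $f(j \mid V\setminus j) \ge (1-\kappa_f) f(j)$. The algebra checks out: your chain gives $f(X^*) \ge \kappa_f f(\sigma(1)) + (1-\kappa_f)\sum_{j\in X^*} f(j) \ge \bigl(\kappa_f/|X^*| + 1 - \kappa_f\bigr)\sum_{j\in X^*} f(j)$, and $\kappa_f/|X^*| + 1-\kappa_f = \bigl(1+(|X^*|-1)(1-\kappa_f)\bigr)/|X^*|$ is exactly the claimed factor, while the theorem's second inequality reduces, as you say, to $\kappa_f \ge 0$. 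What the paper's route buys is brevity and a slightly more informative intermediate bound (parametrized by the ratio $g(X^*)/f(i)$ rather than by $|X^*|$); what yours buys is a complete argument with no external dependency. Your closing caveat about the $\beta$-approximate case---that monotone improvement may fail, so the bound must be charged to $X^1$ or to the best iterate seen---addresses a real subtlety that the paper's proof silently skips, and it is consistent with the paper's own remark that the bounds of the theorem ``hold after the first iteration.''
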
\looseness-1
\arxivalt{Before proving this result, we remark that a}{A} similar, slightly looser bound was shown for cuts in~\cite{jegelka2011-nonsubmod-vision}, by using a weaker notion of curvature.
Note that the bound in Theorem~\ref{SAAthm} is at most $\frac{n}{1 + (n - 1)(1 - \kappa_f)}$, where $n=|V|$ is the dimension of the problem.
\arxivalt{
\begin{proof}
We will use the shorthand $g \triangleq \ggrow_\emptyset$.
To prove Theorem~\ref{SAAthm}, we use the following result shown in~\cite{jegelkathesis}:
\begin{equation}
  \label{eq:5}
  f(\widehat{X}) \leq \frac{g(X^*)/f(i)}{1 + (1 - \kappa_f)(g(X^*)/f(i) - 1)} f(X^*)
\end{equation}
for any $i \in V$.
We now transfer this result to curvature. To do so, we use $i^{\prime} \in \arg\max_{i \in V}f(i)$, so that $g(X^*) = \sum_{j \in X^*} f(j) \leq |X^*| f(i^{\prime})$.
Observing that the function $p(x) = \frac{x}{1 + (1 - \kappa_f)(x - 1)}$ is increasing in $x$ yields that
\begin{align}
f(\widehat{X}) \leq \frac{|X^*|}{1 + (1 - \kappa_f)(|X^*| - 1)} f(X^*).
\end{align}
%
\end{proof}

For problems where $\kappa_f < 1$, Theorem~\ref{SAAthm} yields a constant approximation factor and refines bounds for constrained minimization that are
given 
in \cite{goel2009approximability,svitkina2008submodular}. 
To our knowledge, this is the first curvature dependent bound for this general class of minimization problems. 
%
\STR{removed the detailed bounds to minimize overlap with the NIPS paper.}

A class of functions with $\kappa_f = 1$ are matroid rank functions, implying that these functions are difficult instances the \mmin{} algorithms. But several classes of functions occurring in applications have more benign curvature. For example, concave over modular functions were used in \cite{linacl, jegelka2011-nonsubmod-vision}. These comprise, for instance, functions of the form $f(X) = (w(X))^a$, for some $a \in [0, 1]$ and a nonnegative weight vector $w$, whose curvature is $\kappa_f \approx 1 - a(\frac{\min_j w(j)}{w(V)})^{1 - a} > 0$. A special case is $f(X) = |X|^a$, with curvature $\kappa_f = 1 - an^{a-1}$, or $f(X) = \log(1 + w(X))$ satisfying $\kappa_f \approx 1 - \frac{\min_j w(j)}{w(V)}$.
}{We prove Theorem~\ref{SAAthm} in~\cite{ouricml2013supp}.

  In the worst case, when $\kappa_f=1$, our approximation bounds are
  identical to prior work
  \cite{goel2009approximability,svitkina2008submodular}. 
  Matroid rank functions have $\kappa_f=1$, implying that they
  are difficult instances for \mmini. But several
  practically relevant submodular functions do satisfy $\kappa_f >
  0$. In such case, Theorem~\ref{SAAthm} replaces known polynomial
  bounds by an improved factor depending on $\kappa_f$.  An example for such functions are
  concave over modular functions used in
  \cite{stobbe10efficient,linacl,jegelka2011-nonsubmod-vision}. These
  comprise, for instance, functions of the form $f(X) = (w(X))^a$, for
  some $a \in [0, 1]$ and a nonnegative weight vector $w$, whose
  curvature is $\kappa_f \approx 1 - a(\frac{\min_j w(j)}{w(V)})^{1 - a} <
  1$. 
Another example is $f(X) = \log(1 + w(X))$ with $\kappa_f \approx
  1 - \frac{\min_j w(j)}{w(V)}$. Several applications use a sum of such functions, each with bounded support~\cite{jegelka2011-nonsubmod-vision, rkiyeruai2012}. This further reduces the curvature.}

The bounds of Theorem~\ref{SAAthm} hold after the first iteration. Nevertheless, empirically we often found that for problem instances that are not worst-case, subsequent iterations can improve the solution substantially.
Using Theorem~\ref{SAAthm}, we can bound the number of iterations the algorithm will take. To do so, we assume an $\eta$-approximate version, where we proceed only if $f(X^{t+1}) \leq (1 - \eta) f(X^t)$ for some $\eta > 0$. In practice, the algorithm usually terminates after 5 to 10 iterations for an arbitrarily small $\eta$.
%
\begin{lemma} \label{timebound}
\mmini-I runs in $O(\frac{1}{\eta} T\log \frac{n}{1 + (n-1)(1 - \kappa_f)})$ time, where $T$ is the time for minimizing a modular function subject to $X \in \mathcal C$. 
\end{lemma}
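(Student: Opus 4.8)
The plan is to bound the number of iterations and multiply by the cost of a single iteration, which is exactly $T$: each pass of \mmini-I performs one modular minimization over $\mathcal{C}$ in Step~4 of Algorithm~\ref{mirroropt}, and (by Lemma~\ref{lem:mmin_iter}) forming the supergradient $\ggrow_{X^t}$ is only $O(n)$ work, absorbed into $T$. So the entire content is the iteration count.

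First I would record the three facts that drive the argument. Write $\alpha = \frac{n}{1 + (n-1)(1-\kappa_f)}$ for the approximation factor. By Theorem~\ref{SAAthm}, whose bound holds already after the first iteration, the first iterate satisfies $f(X^1) \leq \alpha f(X^*)$, where $X^* \in \argmin_{X \in \mathcal{C}} f(X)$. Next, since every iterate $X^t$ is feasible and $f$ is the objective being minimized, we have $f(X^t) \geq f(X^*)$ for all $t$; I assume $f(X^*) > 0$, since otherwise the statement is vacuous (the algorithm cannot decrease a nonnegative objective below $0$ and terminates at once). Finally, in the $\eta$-approximate version the stopping rule guarantees that \emph{whenever the algorithm takes another step}, $f(X^{t+1}) \leq (1-\eta) f(X^t)$; monotone improvement (the first Lemma of the paper) ensures the sequence $f(X^t)$ is nonincreasing even at the terminal step.

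Then I would chain these into a geometric bound. After the first iteration followed by $k$ further accepted steps, iterating the $\eta$-decrease and using the feasibility lower bound gives
\[
f(X^*) \;\leq\; f(X^{1+k}) \;\leq\; (1-\eta)^k f(X^1) \;\leq\; (1-\eta)^k\,\alpha\, f(X^*).
\]
Dividing by $f(X^*) > 0$ yields $(1-\eta)^k \geq 1/\alpha$, hence $k \leq \log\alpha \,/\, (-\log(1-\eta))$. Applying the standard inequality $-\log(1-\eta) \geq \eta$ for $\eta \in (0,1)$ gives $k \leq \frac{1}{\eta}\log\alpha$. Thus the total number of iterations is $1 + k = O\!\big(\frac{1}{\eta}\log\alpha\big)$, and multiplying by the per-iteration cost $T$ produces the stated $O\!\big(\frac{1}{\eta}\, T \log\frac{n}{1 + (n-1)(1-\kappa_f)}\big)$.

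The computations are routine; the two things to get right are invoking Theorem~\ref{SAAthm} in its \emph{after-first-iteration} form (so that $f(X^1) \leq \alpha f(X^*)$, not merely a bound on the final output) and pinning down the feasibility lower bound $f(X^t) \geq f(X^*)$, which is precisely what prevents the geometric decrease from continuing forever and thereby caps $k$. I would also explicitly flag the mild hypothesis $f(X^*) > 0$ so that the division is legitimate. There is no substantive obstacle beyond assembling these ingredients correctly.
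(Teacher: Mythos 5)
Your proof is correct and follows essentially the same route as the paper's: bound $f(X^1)$ by Theorem~\ref{SAAthm}, apply the geometric $\eta$-decrease, and convert to an iteration count via $-\log(1-\eta)\geq\eta$. You are in fact slightly more careful than the paper, which leaves implicit the lower bound $f(X^t)\geq f(X^*)$ that terminates the geometric decrease and does not flag the $f(X^*)>0$ caveat.
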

\arxiv{
\begin{proof}
At the end of the first iteration, we obtain a set $X^1$ such that $f(X^1) \leq \frac{n}{1 + (n-1)(1 - \kappa_f)}f(X^*)$. The $\eta$-approximate assumption implies that $f(X^{t+1}) \leq (1-\eta)f(X^t) \leq (1-\eta)^{t}f(X^1)$. Using that $\log(1-\eta) \leq \eta^{-1}$ and Theorem~\ref{SAAthm}, we see that the algorithm terminates after at most $O(\frac{1}{\eta}\log \frac{n}{1 + (n-1)(1 - \kappa_f)})$ iterations.
\end{proof}
}

\subsection{Experiments}
We will next see that,
apart from its theoretical properties, \mmini{} is in practice competitive to more complex algorithms. We implement and compare algorithms using Matlab and the SFO toolbox~\cite{krause2010sfo}. 
\paragraph{Unconstrained minimization} We first study the results in
Section~\ref{sec:unconstrained_min} for contracting the lattice of
possible minimizers. We measure the size of the new lattices relative to the ground set.
%
Applying \mmini-I and II (lattice $\mathcal{L}_+$) to Iwata's test function \cite{fujishige2011submodular}, we observe an average reduction of $99.5\%$ in the lattice. \mmini-III (lattice $\mathcal L$) obtains only about $60\%$ reduction. Averages are taken for $n$ between $20$ and $120$. 

In addition, we use concave over modular functions $\sqrt{w_1(X)} + \lambda w_2(V \backslash X)$ with randomly chosen vectors $w_1, w_2$ in $[0,1]^n$ and $n=50$. 
We also consider the application of selecting limited vocabulary speech corpora. 
\cite{lin11, jegelkanips} use functions of the form $\sqrt{w_1(\Gamma(X))} + w_2(V \backslash X)$, where $\Gamma(X)$ is the neighborhood function of a bipartite graph. Here, we choose $n=100$ and random vectors $w_1$ and $w_2$.
For both function classes, we vary $\lambda$ such that the optimal solution $X^*$ moves from $X^* = \emptyset$ to $X^* = V$. The results are shown in Figure~\ref{fig:unconsmin}. In both cases, we observe a significant reduction of the search space. When used as a preprocessing step for the minimum norm point algorithm (MN) \cite{fujishige2011submodular}, this pruned lattice speeds up the MN algorithm accordingly, in particular for the speech data. 
The dotted lines represent the relative time of MN including the respective preprocessing, taken with respect to MN without preprocessing. \arxiv{Figure~\ref{fig:unconsmin} also shows the average results over $10$ random choices of weights in both cases. In order to obtain accurate estimates of the timings, we run each experiment $5$ times and take the minimum of these timing valuess.}
%
\begin{figure}[t]
  \centering
\hspace{-10pt}
\includegraphics[width=0.25\textwidth]{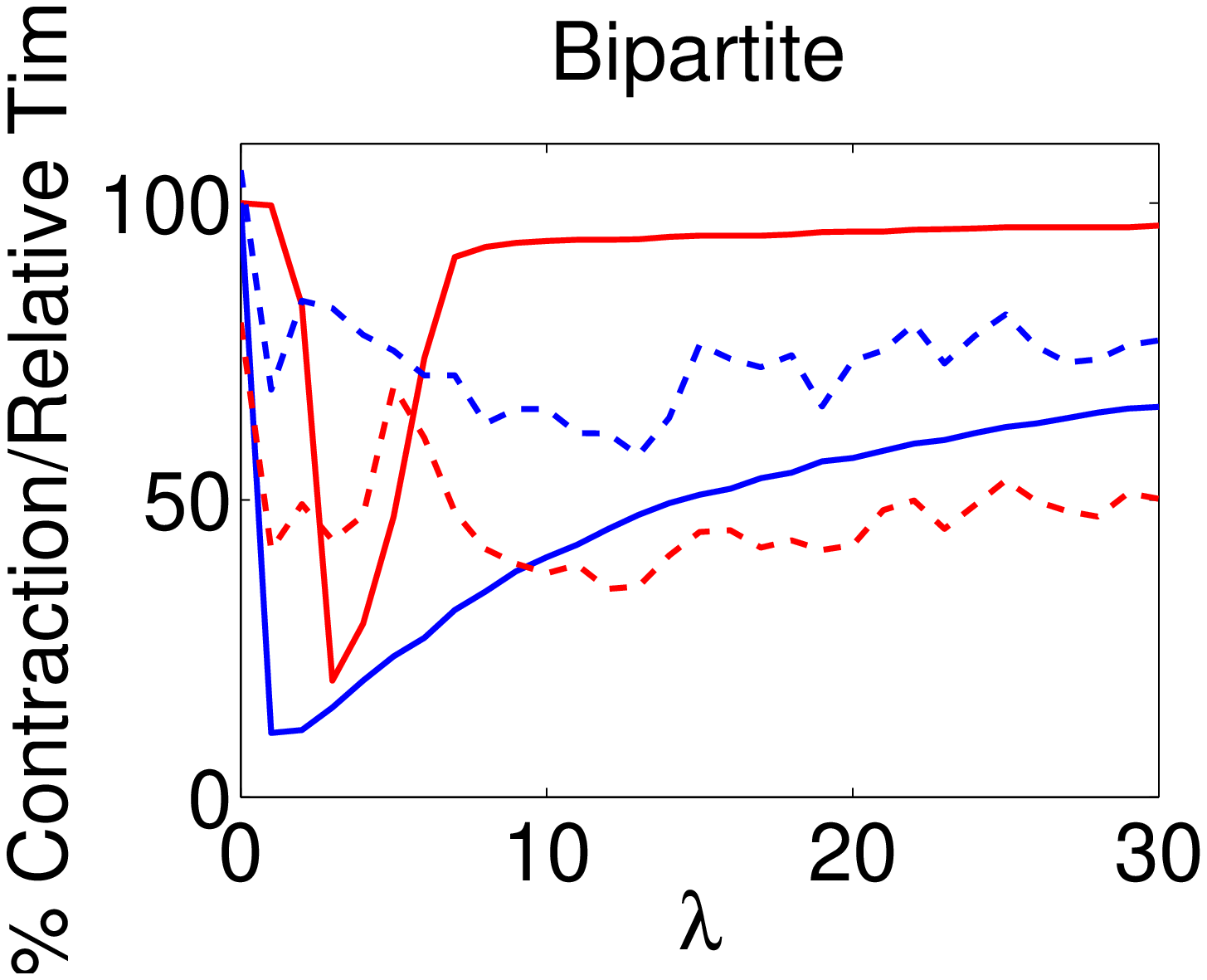}\hspace{-10pt} 
  ~ 
\includegraphics[width=0.25\textwidth]{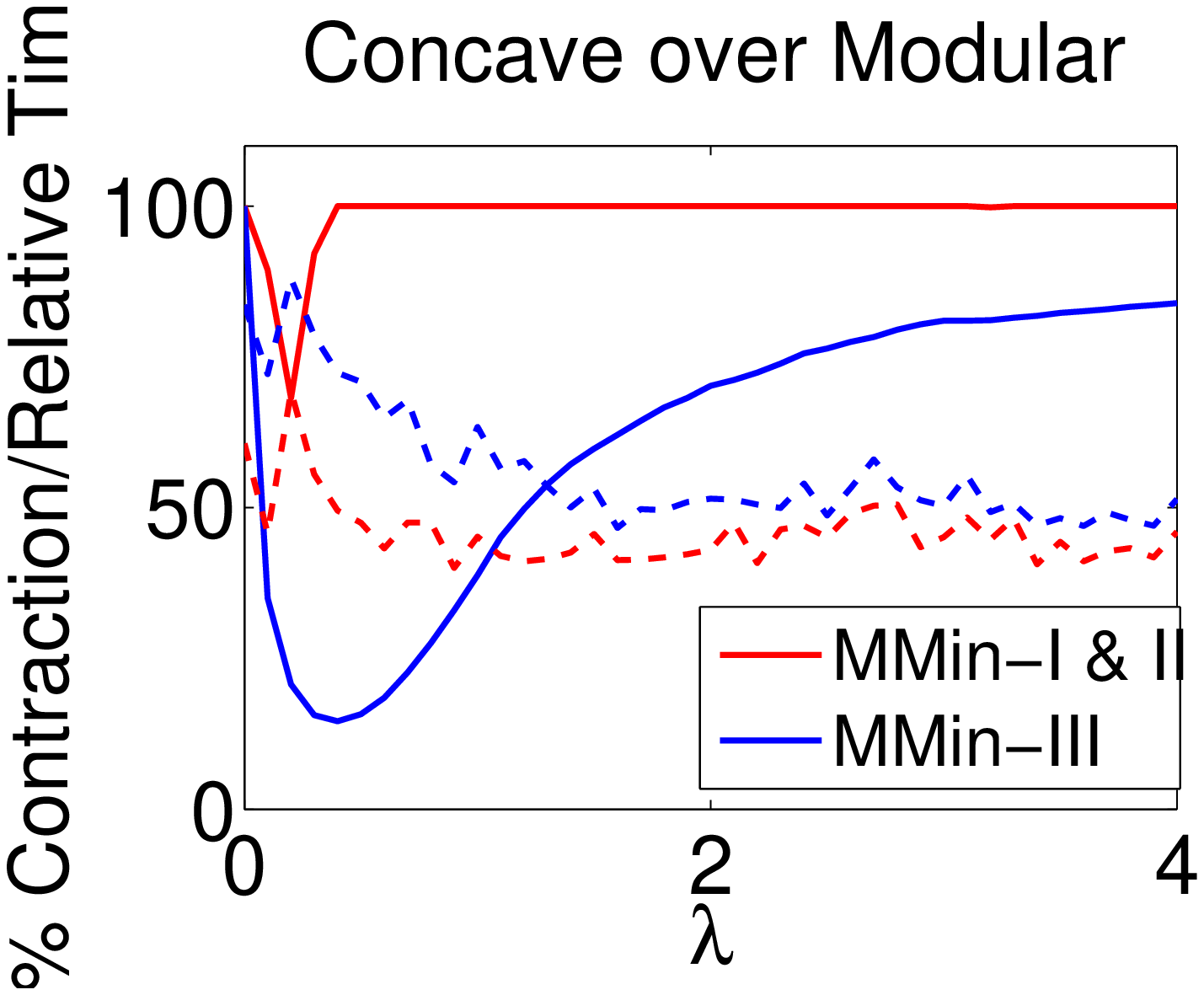} \hspace{-5pt} 
  \caption{Lattice reduction (solid line), and runtime (\%) of \mmin+min-norm relative to unadorned min-norm (dotted).\looseness-1 
}
  \label{fig:unconsmin}
\end{figure}
\begin{figure*}[t]
  \centering
\hspace{-10pt}
\includegraphics[width=0.23\textwidth]{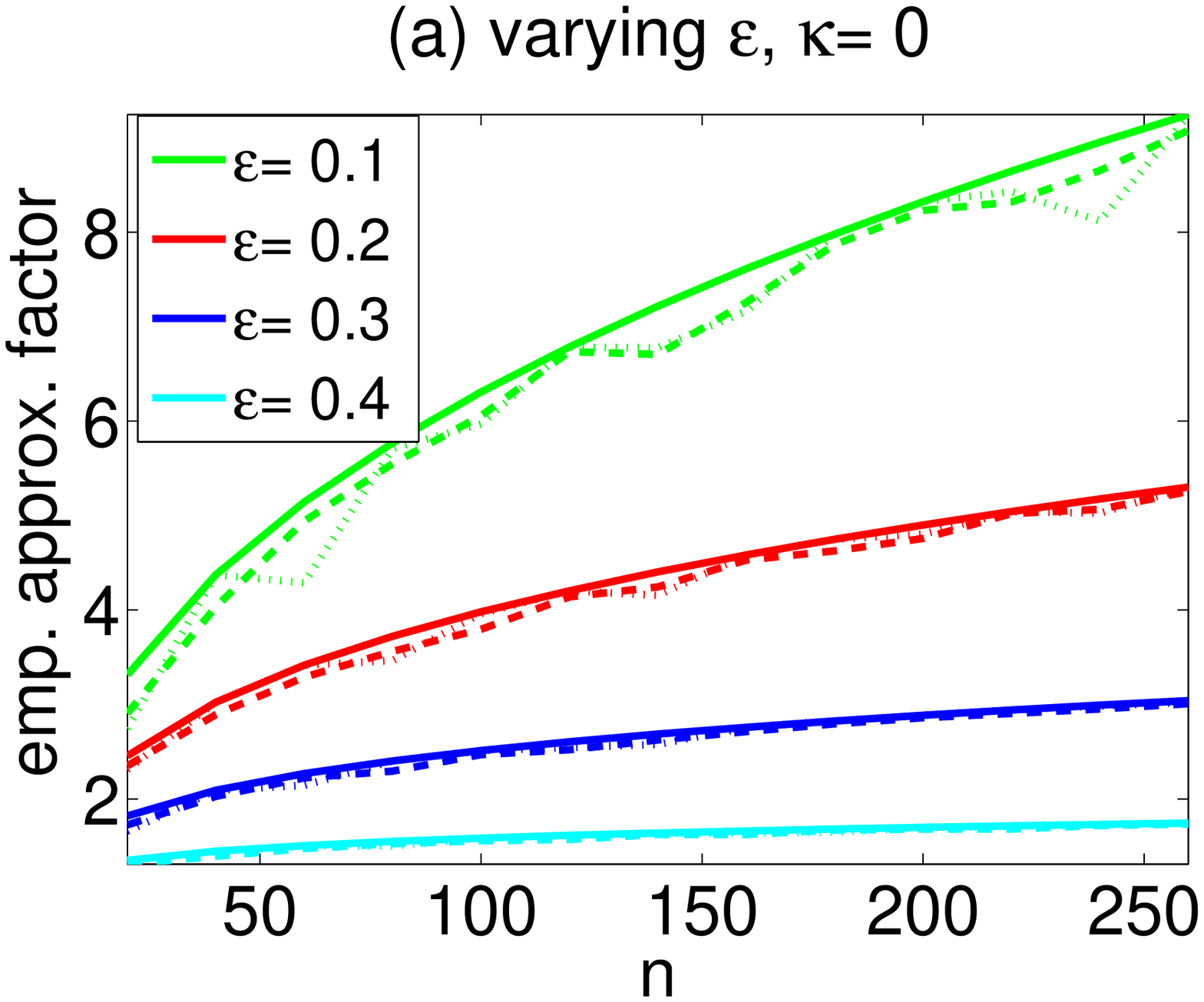}\hspace{-10pt} 
  ~ 
\includegraphics[width=0.23\textwidth,height=80pt]{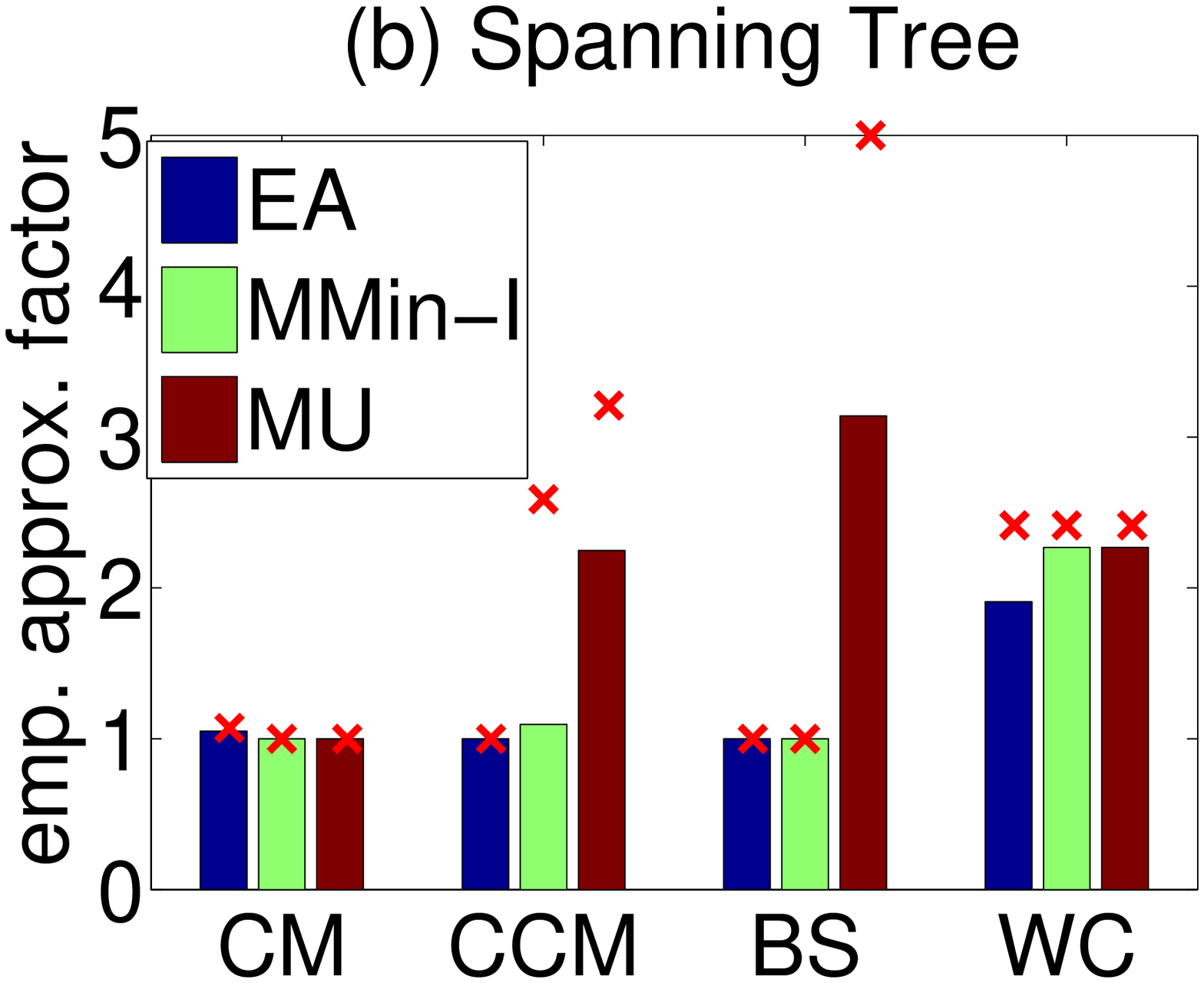} \hspace{-18pt} 
  ~ 
\includegraphics[width=0.23\textwidth,height=80pt]{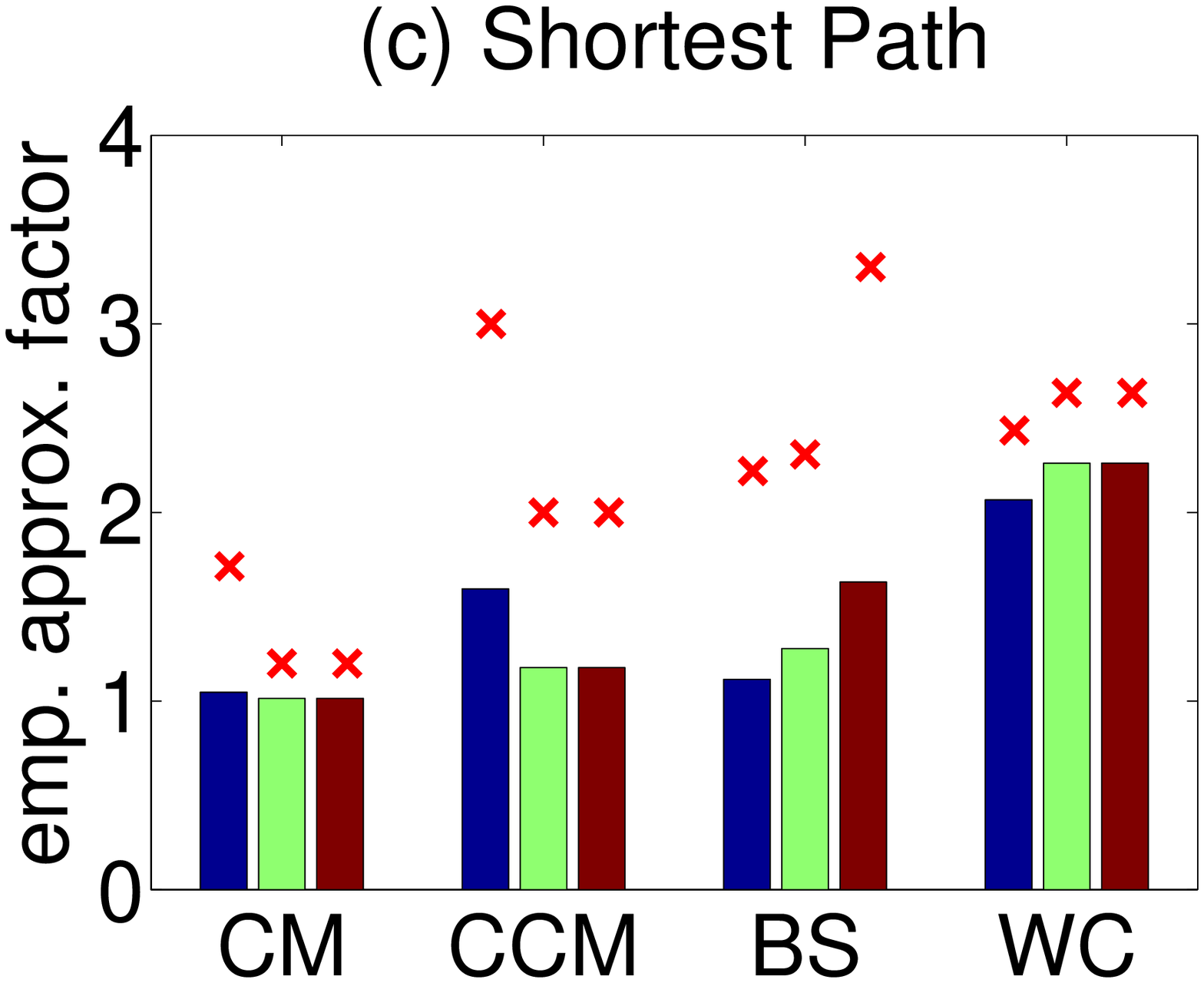}\hspace{-18pt} 
  ~
\includegraphics[width=0.23\textwidth,height=80pt]{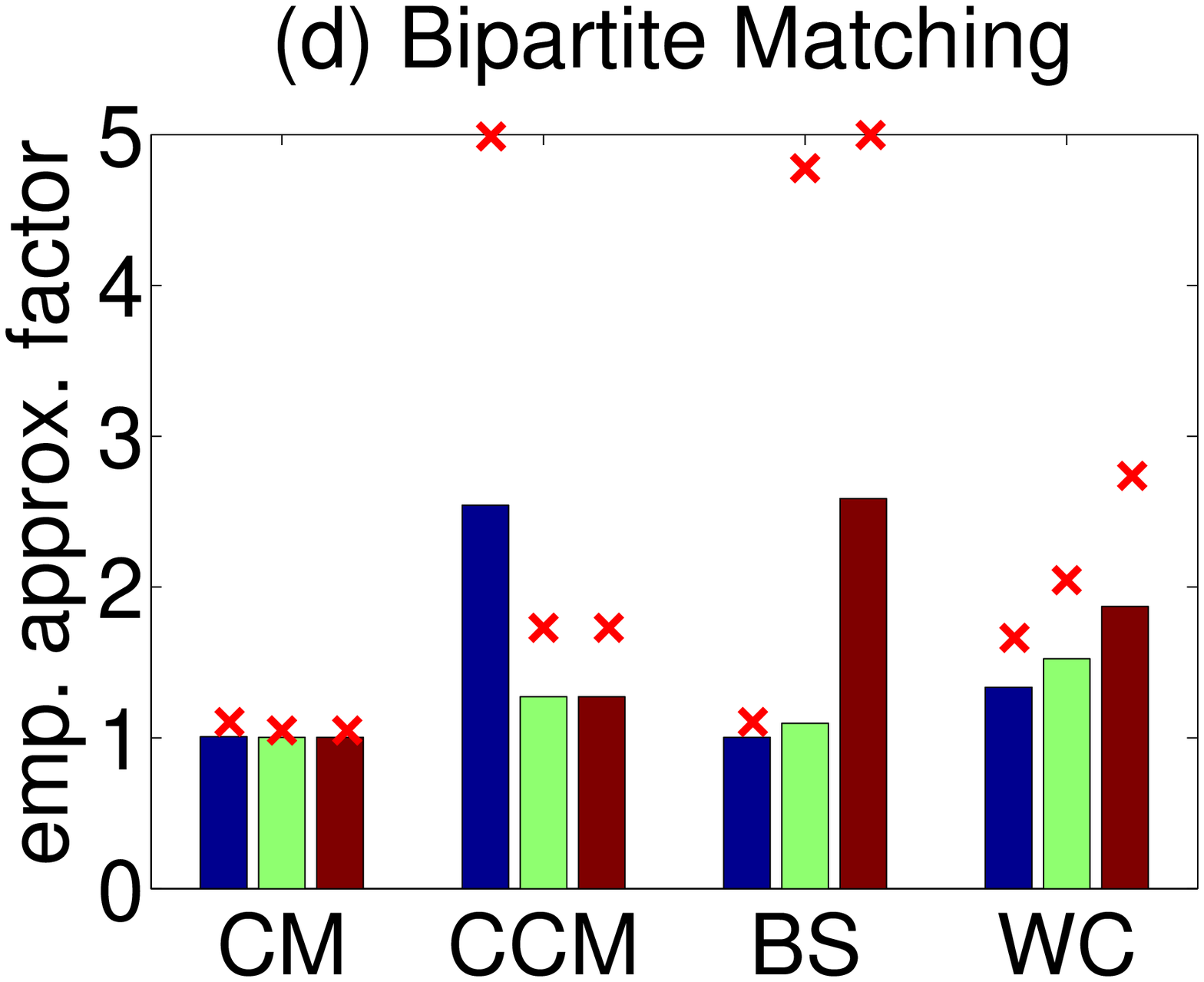}\hspace{-10pt} 
  \caption{Constrained minimization for worst-case (a) 
  and average-case (b-d) instances. In (a), Dashed lines: \mmin{}, dotted lines: EA, solid lines: theoretical bound. 
In (b - d), bars are average approximation factors and crosses worst observed results. CM - Concave over Mod., CCM - Clust. Concave Mod., BS - Best Set and WC - Worst Case
  }
  \label{fig:consmin}
\end{figure*}

\JTR{For Rishabh: fix the .eps files, as they have the (c), (d), (e)
  embedded in the figure, and the figs are currently labeled (a), (c),
  (d), (e).}

\paragraph{Constrained minimization.} For constrained minimization, we compare \mmini-I to two methods: a simple algorithm (MU) that minimizes the upper bound $g(X) = \sum_{i\in X}f(i)$ \cite{goel2009approximability} (this is identical to the first iteration of \mmini-I), and a more complex algorithm (EA) that computes an approximation to the submodular polyhedron~\cite{goemans2009approximating} and in many cases yields a theoretically optimal approximation. MU has the theoretical bounds of Theorem~\ref{SAAthm}, while EA achieves a worst-case approximation factor of $O(\sqrt{n} \log n)$. 
We show two experiments: the theoretical worst-case and average-case instances. Figure~\ref{fig:consmin} illustrates the results.

\paragraph{Worst case.} We use a very hard cost function \cite{goemans2009approximating} 
\begin{equation}
  \label{eq:3}
  f(X) = \min\{|X|, |X \cap \bar{R}| + \beta, \alpha\},
\end{equation}
where $\alpha = n^{1/2 + \epsilon}$ and $\beta = n^{2\epsilon}$, and $R$ is a random set such that $|R| = \alpha$. This function is the theoretical worst case. Figure~\ref{fig:consmin}~shows results for cardinality lower bound constraints; the results for other, more complex constraints are similar.
As $\epsilon$ shrinks, the problem becomes harder. In this case, EA and \mmin-I achieve about the same empirical approximation factors, which matches the theoretical guarantee of $n^{1/2 - \epsilon}$.
%

\paragraph{Average case.} We next compare the algorithms on more
realistic functions that occur in
applications. Figure~\ref{fig:consmin} shows the empirical
approximation factors for minimum submodular-cost spanning tree,
bipartite matching, and shortest path. We use four classes of
randomized test functions: (1) concave (square root or log) over
modular (CM), (2) clustered CM (CCM) of the form $f(X) = \sum_{i =
  1}^k \sqrt{w(X \cap C_k)}$ for clusters $C_1, \cdots, C_k$, (3) Best
Set (BS) functions where the optimal feasible set $R$ is chosen
randomly 
($f(X) = I(|X \cap R| \geq 1) + \sum_{j \in R \backslash X} w_j$) 
\JTR{I changed this from $R \backslash A$ to $R \backslash X$, double check.}\RTJ{thats right}
and (4) worst case-like functions (WC) similar to
equation~\eqref{eq:3}. Functions of type (1) and (2) have been used in
speech and computer vision~\cite{linacl, jegelka2011-nonsubmod-vision,
  rkiyeruai2012} and have reduced curvature ($\kappa_f <
1$). Functions of type (3) and (4) have $\kappa_f = 1$. In all four
cases, we consider both sparse and dense graphs, with random weight
vectors $w$. The plots show averages over $20$ instances of these
graphs. \arxivalt{For sparse graphs, we consider grid like graphs in
  the form of square grids, grids with diagonals and cubic grids. For
  dense graphs, we sparsely connect a few dense cluster subgraphs.
For matchings, we restrict ourselves to bipartite graphs, and consider both sparse and dense variants of these.
}{For more details, please refer to~\cite{ouricml2013supp}.}

First, we observe that in many cases, \mmini{} clearly outperforms MU. This suggests the practical utility of more than one iteration. Second, despite its simplicity, \mmini{} performs comparably to EA, and sometimes even better. 
In summary, the experiments suggest that the complex EA only gains on a few worst-case instances, whereas in many (average) cases, \mmini{} yields near-optimal results (factor 1--2).
In terms of running time, \mmini{} is definitely preferable: on small instances (for example $n = 40$), our Matlab implementation of \mmini{} takes 0.2 seconds, while EA needs about 58 seconds. On larger instances ($n=500$), the running times differ on the order of seconds versus hours.


\arxiv{\begin{table*}
\begin{center}
\small{
    \begin{tabular}{ | c | c | c | c |}
    \hline
    Constraints & Subgradient & Approximation bound  & Lower bound\\ \hline
    Unconstrained  & Random Permutation (RP) & $1/4$ & $1/2$ \\
    Unconstrained  & Random Adaptive (RA) & $1/4$ & $1/2$ \\
    Unconstrained  & Randomized Local Search (RLS) & $1/3- \eta$ & $1/2$ \\
    Unconstrained  & Deterministic Local Search (DLS) & $1/3 - \eta$ & $1/2$ \\
    Unconstrained  & Bi-directional greedy  (BG)& $1/3$ & $1/2$ \\
    Unconstrained  & Randomized Bi-directional greedy  (RG)& $1/3$ & $1/2$ \\
    Cardinality  & Greedy & $\frac{1}{\kappa_f}(1 - e^{-\kappa_f})$ & $\frac{1}{\kappa_f}(1 - e^{-\kappa_f})$ \\
   Matroid  & Greedy & $1/(1 + \kappa_f)$ & $\frac{1}{\kappa_f}(1 - e^{-\kappa_f})$ \\
   Knapsack  & Greedy & $1 - 1/e$ & $1 - 1/e$ 
 \\\hline
    \end{tabular}
 \caption{Approximation factors obtained through specific subgradients for submodular maximization (see text for details).\label{tab:2}}}
 \end{center}
 \end{table*}}

\section{Submodular maximization}\label{approxmax}
Just like for minimization, for submodular maximization too we obtain a family of algorithms where each member is specified by a distinct schedule of subgradients. We will only select subgradients that are vertices of the subdifferential, i.e., each subgradient corresponds to a permutation of $V$. For any of those choices, \mmax{} converges quickly.
To bound the running time, we assume that we proceed only if we make sufficient progress, i.e.,
if $f(X^{t+1}) \geq (1 + \eta)f(X^t)$.
\begin{lemma} \label{maxcomp}
\mmax{} with $X^0 = \argmax_j f(j)$ runs in time 
$O(T \log_{1+\eta} n)$, where $T$ is the time for maximizing a modular function subject to $X \in \mathcal C$.
\end{lemma}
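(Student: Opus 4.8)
The plan is to decouple the cost of a single iteration from the number of iterations and bound each separately. First I would note that we restrict ourselves to subgradients $h_{X^t}$ that are extreme points of $\partial_f(X^t)$, so each $h_{X^t}$ is produced by a single greedy pass over a permutation of $V$, and the induced lower bound $m_{h_{X^t}}$ in \eqref{eq:2} is \emph{modular}. Consequently Step 4 of Algorithm~\ref{mirroropt} reduces to maximizing a modular function over $\mathcal C$, which by hypothesis costs time $T$; the greedy subgradient computation is cheaper and is subsumed, so the per-iteration cost is $O(T)$.

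The core of the argument is to bound the number of iterations via the multiplicative progress rule. Under the $\eta$-approximate version we advance only when $f(X^{t+1}) \geq (1+\eta) f(X^t)$, and therefore after $k$ such iterations $f(X^k) \geq (1+\eta)^k f(X^0)$. Next I would upper-bound $f(X^k)$ in terms of $f(X^0)$. Writing $X^* = \argmax_{X \in \mathcal C} f(X)$, every iterate is feasible, so $f(X^k) \leq f(X^*)$. By submodularity and normalization $f(\emptyset)=0$, telescoping the marginal gains along any permutation of $X^*$ and using that marginals only decrease gives $f(X^*) \leq \sum_{j \in X^*} f(j)$. Since $X^0 = \argmax_j f(j)$, each singleton value is at most $f(X^0)$, whence $f(X^*) \leq |X^*| f(X^0) \leq n f(X^0)$.

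Combining the two inequalities yields $(1+\eta)^k f(X^0) \leq n f(X^0)$. Assuming $f(X^0) > 0$ (otherwise the optimum value is $0$ and the problem is trivial), this gives $(1+\eta)^k \leq n$, i.e.\ $k \leq \log_{1+\eta} n$. Multiplying the bound on the number of progress iterations by the per-iteration cost $O(T)$ delivers the claimed running time $O(T \log_{1+\eta} n)$.

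The step I expect to require the most care is the upper bound $f(X^*) \leq n f(X^0)$. It rests on the submodular inequality $f(X^*) \leq \sum_{j \in X^*} f(j)$ together with nonnegativity of the singleton values, which is precisely what makes $X^0 = \argmax_j f(j)$ an effective warm start. One should also confirm that the multiplicative progress test is well defined, which again leans on the nonnegativity inherent to submodular maximization.
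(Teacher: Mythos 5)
Your proof is correct and follows essentially the same route as the paper's: the key chain $f(X^*) \leq \sum_{j \in X^*} f(j) \leq n\, f(X^0)$ combined with the geometric growth $f(X^t) \geq (1+\eta)^t f(X^0)$ to bound the iteration count by $\log_{1+\eta} n$. Your additional remarks on the per-iteration cost $O(T)$ and the degenerate case $f(X^0)=0$ only make explicit what the paper leaves implicit.
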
 
\arxiv{
\begin{proof}
Let $X^*$ be the optimal solution, then
\begin{equation}
  f(X^*) \leq \sum_{i \in X^*} f(j) \leq n \max_{j \in V} f(j) = n f(X^0).\label{eq:6}
\end{equation}
Furthermore, we know that $f(X^t) \geq (1+\eta)^tf(X^0)$. Therefore, we have reached the maximum function value after at most $(\log n) / \log(1+\eta)$ iterations. 
%
\end{proof}
}
In practice, we observe that \mmax{} terminates within 3-10 iterations. We next consider specific subgradients and their theoretical implications. For unconstrained problems, we assume the submodular function to be non-monotone (the results trivially hold for monotone functions too); for constrained problems, we assume the function $f$ to 
be monotone nondecreasing.
Our results rely on the observation that many maximization algorithms actually compute a specific subgradient and run \mmax{} with this subgradient. To our knowledge, this observation is new.
\notarxiv{The proofs for the statements in Section~\ref{sec:unconstr_max} may be found in \cite{ouricml2013supp}.}

\subsection{Unconstrained Maximization}\label{sec:unconstr_max}
\paragraph{Random Permutation (RA/RP).} In iteration $t$, we randomly pick a permutation $\sigma$ that defines a subgradient at $X^{t-1}$, i.e., $X^{t-1}$ is assigned to the first $|X^{t-1}|$ positions.
At $X^0 = \emptyset$, this can be any permutation. Stopping after the first iteration (RP) achieves an approximation factor of $1/4$ in expectation, and $1/2$ for symmetric functions. Making further iterations (RA) only improves the solution.
%

\arxiv{
\begin{lemma}\label{randthm}
  When running Algorithm RP with $X^0 = \emptyset$, it holds after one iteration that 
  %
  $\mathbf{E}(f(X^1)) \geq \frac{1}{4} f(X^*)$ if $f$ is a
  general non-negative submodular function, and $\mathbf{E}(f(X^1))
  \geq \frac{1}{2} f(X^*)$ if $f$ is symmetric.
\end{lemma}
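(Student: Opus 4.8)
The plan is to make the first iterate $X^1$ completely explicit, observe that it dominates every prefix value of the random chain, and then reduce the claim to the (known) guarantee for a uniformly random subset via a coupling between random permutations and random sets.

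First I would identify $X^1$. Since $X^0=\emptyset$, any permutation $\sigma$ defines a subgradient with entries $h_\emptyset(\sigma(i)) = f(\sigma(i)\mid S^\sigma_{i-1})$, and Step~4 of Algorithm~\ref{mirroropt} returns the minimal maximizer $X^1 = \{j : h_\emptyset(j) > 0\}$ of the modular lower bound $m_{h_\emptyset}$. Two structural facts drive the proof. By telescoping, $m_{h_\emptyset}$ is tight along the chain, $m_{h_\emptyset}(S^\sigma_i) = \sum_{\ell\le i}(f(S^\sigma_\ell)-f(S^\sigma_{\ell-1})) = f(S^\sigma_i)$; and $X^1$ maximizes $m_{h_\emptyset}$ over all of $2^V$. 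Combining these with $m_{h_\emptyset}\le f$ gives, for every $i$, $f(X^1)\ge m_{h_\emptyset}(X^1)\ge m_{h_\emptyset}(S^\sigma_i)=f(S^\sigma_i)$, so that $f(X^1)\ge \max_{0\le i\le n} f(S^\sigma_i)$. This is the key gain from maximizing the \emph{tight} modular surrogate: the value $f(X^1)$ is at least the best value attained along the random chain.

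Next I would relate the expected best prefix value to that of a uniformly random set $R$ (each element included independently with probability $1/2$). The plan is a coupling: sample $R$ uniformly, then order the elements of $R$ uniformly at random followed by those of $V\setminus R$ uniformly at random, producing a permutation $\sigma$ for which $R=S^\sigma_{|R|}$ is a prefix; hence $\max_i f(S^\sigma_i)\ge f(R)$. A short calculation shows the induced marginal law of $\sigma$ is uniform, since each fixed permutation arises from exactly its $n+1$ prefixes and $\sum_{j=0}^n 2^{-n}\tfrac{1}{j!(n-j)!} = \tfrac{1}{n!}\,2^{-n}\sum_j\binom{n}{j} = \tfrac{1}{n!}$. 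Taking expectations through the coupling yields $\mathbf{E}_\sigma[\max_i f(S^\sigma_i)] \ge \mathbf{E}_R[f(R)]$, and therefore $\mathbf{E}_\sigma[f(X^1)]\ge \mathbf{E}_R[f(R)]$.

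Finally I would invoke the random-subset bound of Feige, Mirrokni and Vondr\'ak: for nonnegative submodular $f$ a uniformly random set satisfies $\mathbf{E}_R[f(R)]\ge \tfrac14 f(X^*)$, and $\mathbf{E}_R[f(R)]\ge\tfrac12 f(X^*)$ when $f$ is symmetric. (If a self-contained argument is preferred, this follows from the subcube-averaging inequality $\mathbf{E}_{R'\subseteq C}[f(R'\cup D)]\ge \tfrac12 f(D)+\tfrac12 f(C\cup D)$, a direct consequence of submodularity, applied first to $C=X^*$ and then to $C=V\setminus X^*$; the symmetric case additionally uses $f(A)=f(V\setminus A)$ to equate the two boundary terms and remove the extra factor of $1/2$.) Chaining the three inequalities gives the claimed $\tfrac14$ and $\tfrac12$ bounds. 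I expect the main obstacle to be Paragraph two: recognizing that the right comparison object is a uniformly random subset rather than $X^*$ directly, and verifying that the prefix coupling has the uniform marginal — a bound routed naively through $m_{h_\emptyset}$ alone only yields $\tfrac12 f(V)$, which is vacuous for symmetric $f$ since $f(V)=0$, so the use of the \emph{actual} prefix values $f(S^\sigma_i)$ in Paragraph one is essential.
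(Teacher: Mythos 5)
Your proposal is correct and follows essentially the same route as the paper's proof: tightness of the subgradient along the chain gives $f(X^1)\ge \max_k f(S^\sigma_k)$, this is averaged down to the expected value of a uniformly random set, and the Feige--Mirrokni--Vondr\'ak bound ($1/4$ general, $1/2$ symmetric) finishes the argument. Your prefix coupling is just the sampling-based restatement of the paper's binomial-weight identity $\max_X h^{\sigma}_{\emptyset}(X) \ge \sum_k \frac{\binom{n}{k}}{2^n} f(S^\sigma_k)$ followed by the permutation count $k!(n-k)!$, so the two arguments induce the same joint distribution and differ only in presentation.
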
%
\begin{proof}
Each permutation has the same probability $1/n!$ of being chosen. Therefore, it holds that
\begin{align}
\mathbf{E}(f(X^1)) &= \mathbf{E}_{\sigma}(\max_{X \subseteq V} h^{\sigma}_{\emptyset}(X)) \\
\label{eq:bd1}
				&= \frac{1}{n!} \sum_{\sigma} \max_{X \subseteq V} h^{\sigma}_{\emptyset}(X) 
\end{align}
Let $\emptyset \subseteq S^{\sigma}_1 \subseteq S^{\sigma}_2 \cdots S^{\sigma}_n = V$ be the chain corresponding to a given permutation $\sigma$. We can bound
\begin{equation}\label{eq:bd2}
\max_{X \subseteq V} h^{\sigma}_{\emptyset}(X) \geq \sum_{k = 0}^n \frac{\binom{n}{k}}{2^n} f(S^{\sigma}_k)
\end{equation}
because $\max_{X \subseteq V} h^{\sigma}_{\emptyset}(X) \geq f(S^{\sigma}_k), \forall k$ and $\sum_{k = 0}^n \frac{\binom{n}{k}}{2^n} = 1$. Together, Equations~\eqref{eq:bd1} and \eqref{eq:bd2} imply that
\begin{align}
\mathbf{E}(f(X^1)) &\geq \mathbf{E}_{\sigma}(\max_{X \subseteq V} h^{\sigma}_{\emptyset}(X)) \\
				&= \sum_{\sigma}\sum_{k = 0}^n \frac{\binom{n}{k}}{2^n} f(S^{\sigma}_k) \frac{1}{n!} \\
				&= \sum_{k = 0}^n \frac{\binom{n}{k}}{n! 2^n} \sum_{\sigma} f(S^{\sigma}_k) \\
				&= \sum_{k = 0}^n \frac{\binom{n}{k}}{n! 2^n} k! (n - k)! \sum_{S: |S| = k} f(S) \\
				&= \sum_{S} \frac{f(S)}{2^n} \\
				&= \mathbf{E}_S(f(S))
\end{align}
By $\mathbf{E}_S(f(S))$, we denote the expected function value when the set $S$ is sampled uniformly at random, i.e., each element is included with probability $1/2$. \cite{janvondrak} shows that $\mathbf{E}_S(f(S)) \geq \frac{1}{4}f(X^*)$.
For symmetric submodular functions, the factor is $\frac{1}{2}$.
\end{proof}

}
\paragraph{Randomized local search (RLS).} 
Instead of using a completely random subgradient as in RA, we fix the positions of two elements:
the permutation must satisfy 
that $\sigma^t(|X^t| + 1) \in \argmax_j f(j | X^t)$ and $\sigma^t(|X^t| - 1) \in \argmin_j f(j | X^t \backslash j)$. The remaining positions are assigned randomly.
An $\eta$-approximate version of \mmax{} with such subgradients
returns an $\eta$-approximate local maximum that achieves an improved approximation factor of $1/3 - \eta$ in $O(\frac{n^2\log n}{\eta}$) iterations. 

\arxiv{\begin{lemma}\label{locrandopt}
Algorithm RLS returns a local maximum $X$ that satisfies $\max\{f(X), f(V \backslash X)\} \geq (\frac{1}{3} - \eta) f(X^*)$ in $O(\frac{n^2\log n}{\eta}$) iterations.
%
\end{lemma}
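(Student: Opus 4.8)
The plan is to prove the bound in two conceptual stages: first establish that the fixed points of \mmax{} run with the RLS subgradient are precisely ($\eta$-approximate) local maxima under single-element flips, and then apply a local-search argument in the style of non-monotone submodular maximization to turn local optimality into the $\frac13-\eta$ guarantee. The iteration count then follows from a geometric-progress argument.

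First I would analyze a single RLS step. Recall that the vertex subgradient $h = h^{\sigma^t}_{X^t}$ has entries $h(\sigma^t(i)) = f(S^{\sigma^t}_i) - f(S^{\sigma^t}_{i-1})$, and that the unconstrained maximizer of the modular lower bound is $X^{t+1} = \{j : h(j) > 0\}$. Because $\sigma^t$ is anchored so that the element just beyond $X^t$ realizes $\max_j f(j\mid X^t)$ and the boundary element of $X^t$ realizes $\min_{j} f(j\mid X^t\setminus j)$, the entries of $h$ at these two anchored positions equal exactly the best single-addition gain and the best (signed) single-deletion gain. Hence $X^{t+1}=X^t$ forces $\max_j f(j\mid X^t)\le 0$ and $\min_{j\in X^t} f(j\mid X^t\setminus j)\ge 0$, i.e.\ the fixed point $X$ is a local maximum; conversely, by the monotone-improvement guarantee of Algorithm~\ref{mirroropt}, any beneficial single flip strictly increases $f$, so \mmax{}-RLS halts exactly at a local maximum. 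In the $\eta$-approximate variant, where a move is taken only when it multiplies the value by at least $1+\eta/n$, the same reasoning yields an $\eta$-approximate local maximum: $f(j\mid X)\le (\eta/n) f(X)$ for $j\notin X$ and $f(j\mid X\setminus j)\ge -(\eta/n)f(X)$ for $j\in X$.

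Next I would run the local-search argument with the comparison set $X^*$. Telescoping single additions of $X^*\setminus X$ and using submodularity together with $f(j\mid X)\le 0$ gives $f(X\cup X^*)\le f(X)$; telescoping single deletions of $X\setminus X^*$ and using $f(j\mid X\setminus j)\ge 0$ gives $f(X\cap X^*)\le f(X)$. Submodularity on the disjoint sets $X\cap X^*$ and $X^*\setminus X$ yields $f(X^*)\le f(X\cap X^*) + f(X^*\setminus X)$, and submodularity on $V\setminus X$ and $X\cup X^*$ (whose union is $V$ and whose intersection is $X^*\setminus X$), together with $f(V)\ge 0$, yields $f(X^*\setminus X)\le f(V\setminus X) + f(X\cup X^*)\le f(V\setminus X) + f(X)$. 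Chaining these gives $f(X^*)\le 2f(X) + f(V\setminus X)\le 3\max\{f(X),f(V\setminus X)\}$, i.e.\ $\max\{f(X),f(V\setminus X)\}\ge \frac13 f(X^*)$. Feeding the $\eta$-approximate inequalities through the same two telescopes contributes a cumulative additive slack of at most $\eta f(X)$ in each bound, degrading the constant from $\frac13$ to $\frac13-\eta$.

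Finally, the iteration bound: initializing at $X^0=\argmax_j f(j)$ gives $f(X^0)\ge \frac1n f(X^*)$ (exactly as in the proof of Lemma~\ref{maxcomp}), while the value never exceeds $f(X^*)$; since each accepted move multiplies the objective by at least $1+\eta/n$, the number of accepted moves is $O(\frac{\log n}{\log(1+\eta/n)}) = O(\frac{n\log n}{\eta})$, and since locating an improving flip (or certifying approximate local optimality) costs $O(n)$ evaluations per move, the total is $O(\frac{n^2\log n}{\eta})$. The main obstacle I anticipate is not the local-search chain, which is clean, but the first step: carefully verifying that the anchored-subgradient modular maximization reproduces exactly the single-flip local-search moves (including the correct off-by-one placement of the two anchored positions) and that the $\eta$-progress threshold translates into the stated approximate-local-optimality slacks without losing more than an additive $\eta$ in the final factor.
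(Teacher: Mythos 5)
Your proposal is correct, and its skeleton matches the paper's proof: (i) fixed points of \mmax{} with the anchored RLS subgradient are precisely ($\eta$-approximate) local maxima, (ii) local optimality yields the $\frac{1}{3}-\eta$ factor, (iii) a geometric-progress argument gives the iteration count. The one genuine difference is step (ii): the paper invokes the result of \cite{janvondrak} as a black box (``a local optimum $X$ satisfies the $\frac13$ bound, and an $\eta$-approximate local optimum the $\frac13-\eta$ bound''), whereas you re-derive that result from scratch via the two telescoping bounds $f(X\cup X^*)\le f(X)$ and $f(X\cap X^*)\le f(X)$ and the two submodularity splits $f(X^*)\le f(X\cap X^*)+f(X^*\setminus X)$ and $f(X^*\setminus X)\le f(V\setminus X)+f(X\cup X^*)$ (using $f(V)\ge 0$), chaining to $f(X^*)\le 2f(X)+f(V\setminus X)\le 3\max\{f(X),f(V\setminus X)\}$. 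All four inequalities check out, and carrying the $\eta$-slack through them degrades the constant only additively as you claim; so your proof buys self-containedness (and, incidentally, directly produces the $\max\{f(X),f(V\setminus X)\}$ form appearing in the lemma statement), while the paper's buys brevity by leaning on the citation for both the exact and approximate statements. Two minor points of divergence, neither a gap: first, you correctly place the inner anchor at position $|X^t|$ (the paper types $\sigma^t(|X^t|-1)$, an off-by-one) so that its subgradient entry equals $\min_{j\in X^t}f(j\mid X^t\setminus j)$; second, your accounting uses a $(1+\eta/n)$ acceptance threshold, giving $O(n\log n/\eta)$ accepted moves times $O(n)$ evaluations per move, whereas the paper counts $O(n^2\log n/\eta)$ iterations under the finer threshold of \cite{janvondrak} --- both accountings land on the stated $O(\frac{n^2\log n}{\eta})$ bound.
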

\begin{proof}
At termination ($t=T$), it holds that $\max_j f(j | X^T) \leq 0$ and $\min_j f(j | X^T \setminus j) \geq 0$; this implies that the set $X^t$ is local optimum.

To show local optimality, recall that the subgradient $h^{\sigma^T}_{X^T}$ satisfies $h^{\sigma^T}_{X^T}(X^T) = f(X^T)$, and $h^{\sigma^T}_{X^T}(Y) \geq h^{\sigma^T}_{X^T}(X^T)$ for all $Y \subseteq V$. Therefore, it must hold that $max_{j \notin X^T} f(j | X^T) = \max_{j \notin X^T} h^{\sigma^T}_{X^T}(j) \leq 0$, and $\min_{j \in X^T} f(j | X^T \backslash j) =   h^{\sigma^T}_{X^T}(j) \geq 0$, which implies that the set $X^T$ is a local maximum.

We now use a result by~\cite{janvondrak} showing that if a set $X$ is a local optimum, then $f(X) \geq \frac{1}{3} f(X^*)$ if $f$ is a general non-negative submodular set function and $f(X) \geq \frac{1}{2} f(X^*)$ if $f$ is a symmetric submodular function. 
If the set is an $\eta$-approximate local optimum, we obtain a $\frac{1}{3} - \eta$ approximation~\cite{janvondrak}. A complexity analysis similar to Theorem~\ref{maxcomp} reveals that the worst case complexity of this algorithm is $O(\frac{n^2\log n}{\eta})$.
\end{proof}

Note that even finding an exact local maximum is hard for submodular functions~\cite{janvondrak}, and therefore it is necessary to resort to an $\eta$-approximate version, which converges to an $\eta$-approximate local maximum.
}

\paragraph{Deterministic local search (DLS).} A completely deterministic variant of RLS defines the permutation by an entirely greedy ordering. We define permutation $\sigma^t$ used in iteration $t$ via the chain $\emptyset=S^{\sigma^t}_0 \subset S^{\sigma^t}_1 \subset \ldots \subset S^{\sigma^t}_n$ it will generate. The initial permutation is $\sigma^0(j) = \argmax_{k \notin S^{\sigma^0}_{j-1}} f(k | S^{\sigma^0}_{j-1})$ for $j = 1, 2, \ldots$. In subsequent iterations $t$, the permutation $\sigma^t$ is
\begin{align}
\label{eq:locsearch}
\sigma^t(j) = 
\begin{cases}
\sigma^{t-1}(j) & \text{ if } t \text{ even, } j \in X^{t-1}\\
\argmax_k f(k | S^{\sigma^t}_{j-1}) & \text{ if } t \text{ even, } j \notin X^{t-1}\\
\argmin_k f(k | S^{\sigma^t}_{j+1}\backslash k) & \text{ if } t \text{ odd, } j \in X^{t-1} \\
\sigma^{t-1}(j) & \text{ if } t \text{ odd, } j \notin X^{t-1}.
\end{cases} \nonumber
\end{align}
This schedule is equivalent to the deterministic local search (DLS) algorithm by~\cite{janvondrak}, and therefore achieves an approximation factor of $1/3 - \eta$. 

\paragraph{Bi-directional greedy (BG).} 
The procedures above indicate that greedy and local search algorithms implicitly define specific chains and thereby subgradients. Likewise, the deterministic bi-directional greedy algorithm by \cite{feldman2012optimal} induces a distinct permutation of the ground set. It is therefore equivalent to \mmax{} with the corresponding subgradients and achieves an approximation factor of $1/3$. 
This factor improves that of the local search techniques by removing $\eta$.
Moreover, unlike for local search, the $1/3$ approximation holds already after the first iteration.

\arxiv{

\begin{lemma}\label{lem:bg}
The set $X^1$ obtained by Algorithm~\ref{mirroropt} with the subgradient equivalent to BG satisfies that
$f(X) \geq \frac{1}{3} f(X^*)$. 
\end{lemma}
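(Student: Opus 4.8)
The plan is to show that the bi-directional greedy algorithm of \cite{feldman2012optimal}, when translated into the language of subgradients, produces exactly the set $X^1$ that \mmax{} returns in its first iteration, so that the known $1/3$ guarantee for the algorithm carries over verbatim. First I would recall how the deterministic bi-directional greedy procedure operates: it processes the ground set in a fixed order $u_1, u_2, \ldots, u_n$, maintaining two sets $\emptyset = \Gamma_0 \subseteq \Gamma_1 \subseteq \cdots$ and $V = \Phi_0 \supseteq \Phi_1 \supseteq \cdots$, and at step $i$ it decides whether to add $u_i$ to the lower set or remove it from the upper set based on the signs of the marginal gains $a_i = f(u_i \mid \Gamma_{i-1})$ and $b_i = f(u_i \mid \Phi_{i-1}\setminus u_i)$. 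The key structural fact I would establish is that this sequence of decisions is precisely captured by a single permutation $\sigma$ of $V$ together with the induced chain $S^\sigma_0 \subset S^\sigma_1 \subset \cdots \subset S^\sigma_n$, whose associated extreme-point subgradient $h^\sigma_\emptyset$ has entries $h^\sigma_\emptyset(\sigma(i)) = f(S^\sigma_i) - f(S^\sigma_{i-1})$.

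The crux is to identify the set that \mmax{} selects. Since $X^0 = \emptyset$, the modular lower bound is $m_{h_\emptyset}(X) = h^\sigma_\emptyset(X)$, and Step 4 sets $X^1 = \argmax_{X \subseteq V} h^\sigma_\emptyset(X) = \{\sigma(i) : h^\sigma_\emptyset(\sigma(i)) > 0\}$, i.e.\ the set of elements whose greedy marginal contribution along the chain is positive. I would then argue that the bi-directional greedy's final output set consists of exactly those elements $u_i$ for which the add-decision was taken, and show that under the permutation induced by the processing order, an add-decision corresponds precisely to a positive chain-increment $f(S^\sigma_i) - f(S^\sigma_{i-1}) > 0$. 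This is the step that requires care: I must choose the permutation so that the chain increments $f(S^\sigma_i)-f(S^\sigma_{i-1})$ agree in sign with the bi-directional greedy's threshold comparisons, using submodularity to relate the two marginals $a_i$ and $b_i$ (which satisfy $a_i + b_i \geq 0$) to the single increment along the chain. Once this correspondence is in place, $X^1$ as chosen by \mmax{} coincides with the set returned by the algorithm of \cite{feldman2012optimal}.

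The main obstacle I anticipate is that the bi-directional greedy makes its decisions using \emph{two} marginal quantities relative to two different evolving sets (the growing lower set and the shrinking upper set), whereas a single subgradient corresponds to a single chain and hence a single sequence of marginals. Reconciling these means showing that the particular chain defined by the algorithm's accepted elements reproduces the same accept/reject pattern; the deterministic version makes this cleaner because the decision at each step is a deterministic function of the sign of $a_i$ (with ties broken consistently), and one can verify that along the chain of accepted prefixes the increment $f(S^\sigma_i) - f(S^\sigma_{i-1})$ equals $a_i$ at the relevant step. Finally, having shown $X^1$ equals the algorithm's output, I would simply invoke the $1/3$ approximation guarantee of \cite{feldman2012optimal} to conclude $f(X^1) \geq \frac{1}{3} f(X^*)$, noting that this holds after the single iteration because the equivalence is with the full run of the deterministic bi-directional greedy, not an iterative refinement.
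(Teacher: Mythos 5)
Your skeleton --- extract a permutation and chain from the run of bi-directional greedy, use the corresponding extreme-point subgradient in the first iteration of \mmax{}, and import the $1/3$ guarantee of \cite{feldman2012optimal} --- matches the paper's. But the step you yourself identify as the crux, namely proving that $X^1$ \emph{equals} the BG output by matching accept decisions with strictly positive chain increments, is left unresolved, and as stated it is not quite correct. Concretely: (i) the sign correspondence depends on how the rejected elements are placed in the permutation. A rejection at step $i$ controls the marginal with respect to the \emph{shrinking} set, $f(u_i \mid \Phi_{i-1}\setminus u_i) < 0$; this quantity appears as a chain increment only if the rejected elements $r_1,\dots,r_m$ are appended after the output set in \emph{reverse} order of rejection, so that the chain set preceding $r_k$ is exactly $V\setminus\{r_1,\dots,r_k\} = \Phi_{i_k-1}\setminus r_k$. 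With other orderings (e.g., order of rejection) the relevant sets are not nested, and submodularity gives no sign control at all. (ii) Ties break the claimed biconditional: an element accepted with $a_i=0$ contributes a zero increment, so the set of strictly positive coordinates of $h^{\sigma}_{\emptyset}$ can be a proper subset of the BG output, and the exact equality $X^1 = S^\tau$ may fail. (iii) A smaller point: with your sign convention for $b_i$, the inequality that holds universally by submodularity is $a_i \geq b_i$, while $a_i + b_i \geq 0$ is the acceptance test; your sketch swaps the roles of these two, which is worrying in an argument that rests entirely on sign bookkeeping.

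The paper's proof shows that none of this is needed, because it never identifies $X^1$. It uses only two facts: the BG output $S^\tau$ lies on the chain of $\sigma^\tau$, and the extreme-point subgradient is tight on every chain set, i.e., $h^{\sigma^\tau}_{\emptyset}(S^{\sigma^\tau}_k) = f(S^{\sigma^\tau}_k)$ for all $k$. Then $f(X^1) \geq h^{\sigma^\tau}_{\emptyset}(X^1) = \max_{X\subseteq V} h^{\sigma^\tau}_{\emptyset}(X) \geq h^{\sigma^\tau}_{\emptyset}(S^{\tau}) = f(S^{\tau}) \geq \frac{1}{3}f(X^*)$. In other words, the lemma only requires $f(X^1) \geq f(S^\tau)$, not $X^1 = S^\tau$. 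Your argument can be repaired --- fix the reverse-rejection ordering, weaken ``positive'' to ``nonnegative,'' and compare $h$-values rather than sets --- but that repair is precisely the tightness-on-the-chain argument, at which point the extra identification work buys you nothing.
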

\begin{proof}
Given an initial ordering $\tau$, the bi-directional greedy algorithm by \cite{feldman2012optimal} generates a chain of sets. Let $\sigma^\tau$ denote the permutation defined by this chain, obtainable by mimicking the algorithm. We run \mmax{} with the corresponding subgradient.
By construction, the set $S^\tau$ returned by the bi-directional greedy algorithm is contained in the chain. Therefore, it holds that
%
\begin{align}
  f(X^1) &\geq \max_{X \subseteq V} h^{\sigma^{\tau}}_{\emptyset}(X) \\
  &\geq \max_k f(S^{\sigma^{\tau}}_k) \\
  &\geq f(S^{\tau}) \\
  &\geq \frac{1}{3} f(X^*). 
\end{align}
The first inequality follows since the subgradient is tight for all sets in the chain. For the second inequality, we used that
$S^{\tau}$ belongs to the chain, and hence $S^{\tau} = S^{\sigma^{\tau}}_j$ for some $j$. The last inequality follows from the approximation factor satisfied by $S^\tau$~\cite{feldman2012optimal}. We can continue the algorithm, using any one of the adaptive schedules above to get a locally optimal solution. This can only improve the solution.
\end{proof}
}

%
%
\paragraph{Randomized bi-directional greedy (RG).} 
Like its deterministic variant, the randomized bi-directional greedy algorithm by
\cite{feldman2012optimal} can be shown to run \mmax{} with a specific subgradient. 
Starting from $\emptyset$ and $V$, it implicitly defines a random chain of subsets and thereby (random) subgradients. 
%
A simple analysis shows that this subgradient leads to the best possible approximation factor of $1/2$ in expectation. %
 
 \arxiv{Like its deterministic counterpart, the Randomized bi-directional Greedy algorithm (RG) by \cite{feldman2012optimal} induces a (random) permutation $\sigma^{\tau}$ 
 based on an initial ordering $\tau$. 

\begin{lemma}
If the subgradient in \mmax{} is determined by $\sigma^{\tau}$, then the set $X^1$ after the first iteration satisfies $\mathbf{E}(f(X^1)) \geq \frac{1}{2} f(X^*)$, where the expectation is taken over the randomness in $\sigma^{\tau}$.
 \end{lemma}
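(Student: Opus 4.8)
The plan is to mirror the proof of Lemma~\ref{lem:bg} for the deterministic bi-directional greedy algorithm, replacing its worst-case $1/3$ guarantee with the $1/2$-in-expectation guarantee of the randomized variant, and wrapping an expectation around the final inequality. The one structural fact I must establish first is that, for each realization of the random choices of RG, the returned set $S^\tau$ lies on the chain that defines the subgradient $h^{\sigma^\tau}_\emptyset$. Once this is in place, the rest is a pointwise repetition of the deterministic argument followed by taking expectations.

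First I would recall the mechanics of RG given the initial ordering $\tau = (u_1, \dots, u_n)$. The algorithm maintains a growing set (initialized to $\emptyset$) and a shrinking set (initialized to $V$); processing the elements in the order of $\tau$, at step $i$ it randomly either inserts $u_i$ into the growing set or deletes $u_i$ from the shrinking set, with probabilities proportional to the rectified marginal gains. The two sets coincide at termination, their common value being $S^\tau$. The successive insertions into the growing set produce an increasing chain $\emptyset = T_0 \subseteq T_1 \subseteq \cdots \subseteq T_{|S^\tau|} = S^\tau$; appending the remaining elements $V \setminus S^\tau$ in any order completes this to a full permutation $\sigma^\tau$ of $V$. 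By construction $S^\tau = S^{\sigma^\tau}_{|S^\tau|}$, so $S^\tau$ is a member of the chain associated with $\sigma^\tau$. This is the only genuinely nontrivial step, as it requires reading off the chain from the internal dynamics of RG rather than from a standalone greedy ordering.

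Next, conditioning on a realization of $\sigma^\tau$, I would run \mmax{} with the corresponding subgradient, so that $X^1 = \argmax_{X} h^{\sigma^\tau}_\emptyset(X)$. Since $h^{\sigma^\tau}_\emptyset$ is a lower bound on $f$ and is tight on every set of its chain, i.e.\ $h^{\sigma^\tau}_\emptyset(S^{\sigma^\tau}_k) = f(S^{\sigma^\tau}_k)$ for all $k$, the same chain of inequalities as in Lemma~\ref{lem:bg} gives
\begin{align*}
f(X^1) \;\geq\; h^{\sigma^\tau}_\emptyset(X^1) \;=\; \max_X h^{\sigma^\tau}_\emptyset(X) \;\geq\; h^{\sigma^\tau}_\emptyset(S^\tau) \;=\; f(S^\tau),
\end{align*}
where the first inequality is the lower-bound property at $X^1$, the first equality is optimality of $X^1$, and the penultimate inequality uses that $S^\tau$ lies on the chain. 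This bound holds deterministically for every outcome of the random choices.

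Finally I would take the expectation over the randomness of RG (equivalently, over $\sigma^\tau$). Monotonicity of expectation applied to the pointwise bound yields $\mathbf{E}(f(X^1)) \geq \mathbf{E}(f(S^\tau))$, and the approximation guarantee of the randomized bi-directional greedy algorithm, $\mathbf{E}(f(S^\tau)) \geq \frac{1}{2} f(X^*)$~\cite{feldman2012optimal}, completes the proof. As with the deterministic case, one may then continue \mmax{} with any adaptive subgradient schedule, which can only improve the solution.
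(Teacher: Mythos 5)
Your proposal is correct and follows essentially the same route as the paper: fix a realization of the random permutation, observe that the set returned by randomized bi-directional greedy lies on the chain defining the subgradient, repeat the pointwise chain of inequalities from the deterministic BG lemma to get $f(X^1) \geq f(S^{\tau})$, and then take expectations and invoke the $\tfrac{1}{2}$-in-expectation guarantee of \cite{feldman2012optimal}. The only difference is presentational: you spell out explicitly how the growing set of RG yields the chain and hence the permutation, a step the paper compresses into ``by an argumentation similar to the proof of the deterministic case.''
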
 
 \begin{proof}
The permutation $\sigma^\tau$ is obtained by a randomized algorithm, but once $\sigma^\tau$ is fixed, the remainder of \mmax{} is deterministic.
By an argumentation similar to that in the proof of Lemma~\ref{lem:bg}, it holds that
\begin{align}
\mathbf{E}(f(X)) &\geq \mathbf{E}( \max_X h^{\sigma^{\tau}}_{\emptyset}(X)) \\
				&\geq \mathbf{E}(\max_k f(S^{\sigma^{\tau}}_k)) \\
				&\geq \mathbf{E}( f(S^{\sigma^{\tau}})) \\
				&\geq \frac{1}{2} f(X^*)
\end{align}
The last inequality follows from a result in~\cite{feldman2012optimal}.
\end{proof}
 }
 \subsection{Constrained Maximization}
 In this final section, we analyze subgradients for maximization subject to the constraint $X \in \mathcal{C}$. Here we assume that $f$ is monotone. 
An important subgradient results from the greedy permutation $\sigma^g$, defined as
 \begin{align}
 \sigma^g(i) \in \argmax_{j \notin S^{\sigma^g}_{i-1}\text{ and } S^{\sigma^g}_{i-1} \cup \{j\} \in \mathcal C } f(j | S^{\sigma^g}_{i-1}).
 \end{align} 
This definition might be partial; we arrange any remaining elements arbitrarily.
When using the corresponding subgradient $h^{\sigma^g}$, 
we recover a number of approximation results already after one iteration:
%
\begin{lemma}\label{thm:greedy}
Using $h^{\sigma^g}$ in iteration 1 of \mmax{} yields the following approximation bounds for
$X^1$:
\addtolength{\leftmargini}{-1em}
\begin{itemize}\setlength{\itemsep}{0pt}
\item $\frac{1}{\kappa_f}(1 - e^{-\kappa_f})$, if $\mathcal C = \{X \subseteq V: |X| \leq k\}$ 
\item $\frac{1}{p + \kappa_f}$, for the intersection $\mathcal C\! = \! \cap_{i = 1}^p \mathcal I_i$ of $p$ matroids 
\item $\frac{1}{\kappa_f} (1 - (\frac{K - \kappa_f}{K})^k)$, for any down-monotone constraint $\mathcal C$, where $K$ and $k$ are the maximum and minimum cardinality of the maximal feasible sets in $\mathcal C$.\looseness-1
%
\end{itemize}
 \end{lemma}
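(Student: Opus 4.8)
The plan is to reduce all three bounds to known curvature-dependent guarantees for the \emph{standard} greedy algorithm, by showing that the first iterate $X^1 = \argmax_{X \in \mathcal{C}} m_{h^{\sigma^g}}(X)$ is never worse than the set that greedy itself produces. First I would unwind the definitions. Starting from $X^0 = \emptyset$, the greedy permutation $\sigma^g$ builds the chain $\emptyset = S^{\sigma^g}_0 \subset S^{\sigma^g}_1 \subset \cdots$, where $\sigma^g(i)$ is the feasible element of largest marginal gain over $S^{\sigma^g}_{i-1}$, and the associated subgradient has entries $h^{\sigma^g}(\sigma^g(i)) = f(S^{\sigma^g}_i) - f(S^{\sigma^g}_{i-1})$. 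Because $f(\emptyset)=0$ and $h^{\sigma^g}(\emptyset)=0$, the modular lower bound from~\eqref{eq:2} reduces to $m_{h^{\sigma^g}}(X) = h^{\sigma^g}(X)$.

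The key step is a telescoping identity. Let $S^g = S^{\sigma^g}_k$ be the feasible set actually returned by greedy (the last feasible prefix of the chain, before the arbitrarily arranged tail). Evaluating the modular surrogate there gives $m_{h^{\sigma^g}}(S^g) = \sum_{i=1}^{k} h^{\sigma^g}(\sigma^g(i)) = \sum_{i=1}^{k} \big[f(S^{\sigma^g}_i) - f(S^{\sigma^g}_{i-1})\big] = f(S^g)$, since the sum telescopes. As $S^g \in \mathcal{C}$ and $X^1$ maximizes $m_{h^{\sigma^g}}$ over $\mathcal{C}$, we obtain $m_{h^{\sigma^g}}(X^1) \geq m_{h^{\sigma^g}}(S^g) = f(S^g)$. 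Combining this with the lower-bound property $f(X^1) \geq m_{h^{\sigma^g}}(X^1)$ from~\eqref{eq:2} yields $f(X^1) \geq f(S^g)$. Thus running \mmax{} with the greedy subgradient dominates greedy after a single iteration, and every approximation guarantee available for $S^g$ transfers verbatim to $X^1$.

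It then remains to invoke the appropriate curvature-refined guarantee for $S^g$ in each regime. For a cardinality constraint this is the classical bound of Conforti and Cornu\'ejols~\cite{conforti1984submodular}, giving $\frac{1}{\kappa_f}(1 - e^{-\kappa_f})$; for the intersection of $p$ matroids and for a general down-monotone $\mathcal{C}$ the factors $\frac{1}{p + \kappa_f}$ and $\frac{1}{\kappa_f}\big(1 - (\frac{K-\kappa_f}{K})^k\big)$ follow from a curvature-sensitive accounting of the greedy gains along the chain $S^{\sigma^g}$. I expect the main obstacle to be the third, most general case: establishing the $(\frac{K-\kappa_f}{K})^k$ factor requires tracking, at each greedy step, how the curvature $\kappa_f$ caps the loss of marginal value relative to the optimum $X^*$, bounding the per-step multiplicative decrease by $(K - \kappa_f)/K$ and iterating it over the $k$ guaranteed greedy steps while controlling the maximal set sizes $K$ and $k$. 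As a sanity check, the cardinality guarantee is consistent with the third case at $K=k$, since $(1 - \kappa_f/k)^k \le e^{-\kappa_f}$ shows the down-monotone factor is then at least $\frac{1}{\kappa_f}(1 - e^{-\kappa_f})$.
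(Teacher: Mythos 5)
Your proof is correct and takes essentially the same approach as the paper: both rest on the observation that the greedy set lies on the chain defining $h^{\sigma^g}$, so the telescoping identity gives $m_{h^{\sigma^g}}(S^g)=f(S^g)$ and hence $f(X^1)\geq f(S^g)$, after which the curvature-dependent greedy guarantees of \cite{conforti1984submodular,nemhauser1978} transfer verbatim to $X^1$. The only difference is that the paper simply cites those known bounds for all three constraint classes (they are all in \cite{conforti1984submodular}), so the per-step accounting you anticipate for the down-monotone case does not need to be re-derived.
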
 
 \arxiv{
 \begin{proof}
We prove the first result for cardinality constraints. The proofs for the matroid and general down-monotone constraints are analogous.
By the construction of $\sigma^g$, the set $S^{\sigma^g}_k$ is exactly the set returned by
the greedy algorithm. This implies that
\begin{align}
  f(X^1) &\geq \argmax_{X: |X| \leq k} h^{\sigma^g}_{\emptyset}(X) \\
  &\geq  h^{\sigma^g}_{\emptyset}(S^{\sigma^g}_k) \\
  &= f(S^{\sigma^g}_k)  \\
  &\geq \frac{(1 - e^{-\kappa_f})}{\kappa_f} f(X^*).
\end{align}
The last inequality follows from~\cite{nemhauser1978, conforti1984submodular}. 
\end{proof}

A very similar construction of a greedy permutation provides bounds for budget constraints, i.e.,
$c(S) \triangleq \sum_{i \in S}c(i) \leq B$ for some given nonnegative costs $c$. In particular, define a permutation as:
 \begin{align}
 \sigma^g(i) \in \argmax_{j \notin S^{\sigma^g}_{i-1}, c(S^{\sigma^g}_{i-1} \cup \{j\}) \leq B} \frac{f(j | S^{\sigma^g}_{i-1})}{c(j)}.
 \end{align} 

%
The following result then follows from~\cite{linbudget,sviridenko2004note}.
\begin{lemma}\label{greedybudget}
Using $\sigma^g$ in \mmax{} under the budget constraints yields: 
\begin{equation}
  \label{eq:8}
  \max\{\max_{i: c(i) \leq B} f(i), f(X^1)\} \geq (1 - 1/\sqrt{e})f(X^*).
\end{equation}
Let $\sigma^{ijk}$ be a permutation with $i,j,k$ in the first three positions, and the remaining arrangement greedy. Running $O(n^3)$ restarts of MM yields sets $X_{ijk}$ (after one iteration) with
\begin{equation}
  \label{eq:9}
  \max_{i,j,k \in V} f(X_{ijk}) \,\geq (1-1/e) f(X^*).
\end{equation}
\end{lemma}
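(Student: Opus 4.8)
The plan is to mirror the proof of Lemma~\ref{thm:greedy} and reduce both statements to the known guarantees for the cost--benefit greedy algorithm via the \mmax{} chain identity. The single observation that does all the work is that, for any permutation $\sigma$ and $X^0 = \emptyset$, the subgradient $h^{\sigma}_{\emptyset}$ is tight on every set $S^{\sigma}_k$ of the chain it induces, so $m_{h_\emptyset}(S^{\sigma}_k) = f(S^{\sigma}_k)$. Since $X^1$ maximizes the modular lower bound over the feasible region and $f(X^1) \geq m_{h_\emptyset}(X^1)$, it follows that $f(X^1) \geq f(S^{\sigma}_k)$ for every chain prefix $S^{\sigma}_k$ that is itself feasible. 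Thus it suffices to exhibit, inside each relevant chain, a feasible prefix whose value matches the output of the classical greedy variant, and then invoke the corresponding approximation result.

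For the first bound I would argue that the permutation $\sigma^g$ defined by the ratio rule $\sigma^g(i) \in \argmax \{ f(j\mid S^{\sigma^g}_{i-1})/c(j) \}$ generates a chain whose largest budget-feasible prefix is exactly the set produced by the cost--benefit greedy algorithm of~\cite{linbudget}. By the chain identity above, $f(X^1)$ is at least the value of this prefix, hence at least the value of the cost--benefit greedy set. The guarantee of~\cite{linbudget,sviridenko2004note} is stated for $\max\{\max_{i: c(i)\leq B} f(i),\, f(\mathrm{greedy})\}$ --- the best singleton is needed because the ratio greedy alone can be arbitrarily bad on a single heavy element --- and since $f(X^1)$ dominates the greedy term, the same factor $(1-1/\sqrt{e})$ transfers verbatim to $\max\{\max_{i} f(i), f(X^1)\}$.

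For the second bound I would use the standard partial-enumeration device: for each triple $(i,j,k)$, fix these three elements in the first three positions of the permutation and complete it greedily by the ratio rule, obtaining $\sigma^{ijk}$ and the corresponding $X_{ijk}$ after one \mmax{} iteration. The chain of $\sigma^{ijk}$ contains the set produced by the enumerated cost--benefit greedy started from $\{i,j,k\}$, so $f(X_{ijk})$ is at least the value of that set. Sviridenko's analysis~\cite{sviridenko2004note} shows that taking the best outcome over all $O(n^3)$ triples yields a $(1-1/e)$-approximation; combining this with the per-triple domination $f(X_{ijk}) \geq f(\mathrm{greedy}_{ijk})$ gives $\max_{i,j,k} f(X_{ijk}) \geq (1-1/e) f(X^*)$.

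The only delicate points, and where I would spend the most care, are (i) verifying the budget-feasibility bookkeeping --- that the largest feasible prefix of the chain really coincides with the greedy algorithm's stopping set, including correctly handling the elements arranged arbitrarily after the budget is exhausted --- and (ii) that maximizing the modular surrogate over the knapsack region does not degrade the bound (it can only help, since every feasible prefix is a candidate, and $f(X^1)$ only needs to dominate the greedy prefix). The approximation ratios themselves are not re-derived; they are inherited intact from~\cite{linbudget,sviridenko2004note}, so the entire contribution of the proof is the reduction to the semigradient framework, exactly parallel to Lemma~\ref{thm:greedy}.
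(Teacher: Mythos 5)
Your proposal is correct and matches the paper's approach exactly: the paper's entire proof is the remark that the argument is analogous to that of Lemma~\ref{thm:greedy}, i.e., the chain of the cost--benefit greedy permutation contains the greedy (respectively, partially enumerated greedy) set as a feasible prefix, the subgradient $h^{\sigma^g}_{\emptyset}$ is tight on chain prefixes so $f(X^1)$ dominates the greedy value, and the factors $(1-1/\sqrt{e})$ and $(1-1/e)$ are then inherited from \cite{linbudget,sviridenko2004note}. Your additional care about budget-feasibility of the dominating prefix and the role of the best singleton is a faithful elaboration of the same reduction, not a different route.
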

The proof is analogous to that of Lemma~\ref{thm:greedy}.
}
\notarxiv{
A similar result holds for 
Knapsack constraints. 
The proof of Lemma~\ref{thm:greedy} \cite{ouricml2013supp} relies on the observation that the maximizer of
the function $m_h$ for the subgradient $h = h^{\sigma^g}$ is never worse than the result of a greedy algorithm. The bounds follow from~\cite{conforti1984submodular}. 
} 
\arxiv{Table~\ref{tab:2}
 lists results for monotone submodular maximization under different constraints. 

It would be interesting if some of the constrained variants of non-monotone submodular maximization could be naturally subsumed in our framework too. In particular, some recent algorithms~\cite{lee2009non, matroidimproved} propose local search based techniques to obtain constant factor approximations for non-monotone submodular maximization under knapsack and matroid constraints. Unfortunately, these algorithms require swap operations along with inserting and deleting elements.
We do not currently know how to phrase these swap operations via our framework and leave this relation as an open problem.

While a number of algorithms cannot be naturally seen as an instance of our framework, we show in the following section that any polynomial time approximation algorithm for unconstrained or constrained variants of submodular optimization can be ultimately seen as an instance of our algorithm, via a polynomial-time computable subgradient.
}
 \subsection{Generality}
The correspondences between \mmax{} and maximization algorithms hold even more generally:
\begin{theorem}\label{anyalgmax}
For any polynomial-time unconstrained submodular maximization algorithm that achieves an approximation factor $\alpha$, there exists a schedule of subgradients (obtainable in polynomial time) that, if used within \mmax, leads to a solution with the same approximation factor $\alpha$.
\end{theorem}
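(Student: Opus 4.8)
The plan is to exploit a tightness property of vertex subgradients: a subgradient $h^{\sigma}_{\emptyset}$ arising from a permutation $\sigma$ agrees with $f$ not merely at $\emptyset$, but on every set of the chain $\emptyset = S^{\sigma}_0 \subseteq S^{\sigma}_1 \subseteq \cdots \subseteq S^{\sigma}_n = V$ that $\sigma$ induces. Indeed, since $h^{\sigma}_{\emptyset}(\sigma(i)) = f(S^{\sigma}_i) - f(S^{\sigma}_{i-1})$, telescoping gives $h^{\sigma}_{\emptyset}(S^{\sigma}_k) = f(S^{\sigma}_k)$ for every $k$. As $f(\emptyset) = 0$, the modular lower bound reduces to $m_{h^{\sigma}_{\emptyset}}(X) = h^{\sigma}_{\emptyset}(X)$, so it coincides with $f$ along the entire chain.

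Given this, let $\mathcal{A}$ be any polynomial-time algorithm with factor $\alpha$, and let $S$ be its output, so $f(S) \geq \alpha f(X^*)$. First I would run $\mathcal{A}$ to obtain $S$; then, at the initialization $X^0 = \emptyset$, I would select the permutation $\sigma$ that places the elements of $S$ (in any order) in its first $|S|$ positions, so that $S = S^{\sigma}_{|S|}$ lies on the chain. Writing $h \triangleq h^{\sigma}_{\emptyset}$, the schedule consists of this single subgradient. It is obtainable in polynomial time, since $\mathcal{A}$ runs in polynomial time and extracting $\sigma$ and $h$ from $S$ is immediate.

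The final step is to run one iteration of \mmax{}. Because $\mathcal{C} = 2^V$, the surrogate $m_h$ is maximized exactly (include every element of positive weight), and because $m_h \leq f$ everywhere, I obtain
\[
 f(X^1) \;\geq\; m_h(X^1) \;=\; \max_{X \subseteq V} m_h(X) \;\geq\; m_h(S) \;=\; f(S) \;\geq\; \alpha f(X^*),
\]
where $m_h(S) = f(S)$ is exactly the chain-tightness observation. Since each subsequent \mmax{} iteration only increases the objective, the factor $\alpha$ is preserved, matching $\mathcal{A}$.

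The main obstacle is conceptual rather than computational: recognizing that \mmax{} need not reproduce the internal workings of $\mathcal{A}$ at all, but can simply encode $\mathcal{A}$'s \emph{output} as a chain set and let the tightness of the vertex subgradient transfer the guarantee in a single step. The only technical point to confirm is that unconstrained maximization of the modular surrogate is exact, which is trivial for $\mathcal{C} = 2^V$.
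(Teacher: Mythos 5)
Your proposal is correct and follows essentially the same route as the paper's own proof: run the given algorithm, place its output set $S$ at the front of a permutation so that $S$ lies on the chain, and use the resulting vertex subgradient at $X^0=\emptyset$, whence $f(X^1)\geq h^{\sigma}_{\emptyset}(X^1)\geq h^{\sigma}_{\emptyset}(S)=f(S)\geq\alpha f(X^*)$. Your explicit telescoping verification of chain-tightness just spells out what the paper asserts in one line (``the equality follows from the fact that $Y$ belongs to the chain of $\tau$''), so the two arguments coincide.
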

\arxiv{
The proof relies on the following observation.
\begin{lemma}\label{permeqmax}
Any submodular function $f$ satisfies
\begin{equation}\label{permeqmaxeqn}
\max_{X \in \mathcal C} f(X) = \max_{X \in \mathcal C, h \in \mathcal P_f} h(X) =\max_{X \in \mathcal C, \sigma \in \Sigma} h^{\sigma}_{\emptyset}(X).
\end{equation}
 \end{lemma}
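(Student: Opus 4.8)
The plan is to reduce all three quantities in \eqref{permeqmaxeqn} to a single pointwise identity, namely that
$\max_{h \in \mathcal P_f} h(X) = f(X)$ for \emph{every} fixed $X \subseteq V$, where $\mathcal P_f = \partial_f(\emptyset) = \{y \in \mathbb R^n : y(S) \le f(S)\text{ for all } S \subseteq V\}$ is the submodular polyhedron, i.e.\ the subdifferential at the empty set. First I would observe that this is the right object to study: since $f(\emptyset) = 0$ and every subgradient is a normalized modular function with $h(\emptyset) = 0$, the lower bound of \eqref{eq:2} evaluated at $Y = \emptyset$ collapses to $m_{h}(X) = h(X)$, so the inner objectives appearing in \eqref{permeqmaxeqn} are exactly subgradient evaluations over $\mathcal P_f$ and over its greedy vertices $\{h^{\sigma}_\emptyset : \sigma \in \Sigma\}$.

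For the pointwise identity, fix $X$ and choose any permutation $\sigma^*$ that lists the elements of $X$ in its first $|X|$ positions, so the induced chain satisfies $S^{\sigma^*}_{|X|} = X$. By the greedy construction of the extreme points of $\partial_f(Y)$ (here with $Y = \emptyset$), the vector $h^{\sigma^*}_\emptyset$ lies in $\mathcal P_f$, and telescoping its coordinates along the chain gives $h^{\sigma^*}_\emptyset(X) = \sum_{i=1}^{|X|}\bigl(f(S^{\sigma^*}_i) - f(S^{\sigma^*}_{i-1})\bigr) = f(S^{\sigma^*}_{|X|}) - f(\emptyset) = f(X)$. Combining this with the defining inequality of $\mathcal P_f$ and the fact that the greedy vertices form a subset of $\mathcal P_f$, I obtain the sandwich
\begin{equation*}
\max_{\sigma \in \Sigma} h^{\sigma}_\emptyset(X) \;\ge\; h^{\sigma^*}_\emptyset(X) \;=\; f(X) \;\ge\; \max_{h \in \mathcal P_f} h(X) \;\ge\; \max_{\sigma \in \Sigma} h^{\sigma}_\emptyset(X),
\end{equation*}
where the middle equality is the telescoping identity, the third inequality is the definition of $\mathcal P_f$, and the last holds because each $h^{\sigma}_\emptyset \in \mathcal P_f$. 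Hence all four quantities equal $f(X)$. Taking the maximum over $X \in \mathcal C$ then yields precisely the two equalities asserted in \eqref{permeqmaxeqn}.

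The only genuinely nontrivial ingredient, and the step I would treat as the main obstacle, is the pair ``feasibility plus tightness'' $h^{\sigma^*}_\emptyset \in \mathcal P_f$ with $h^{\sigma^*}_\emptyset(X) = f(X)$. The telescoping tightness holds for \emph{any} set function; what genuinely requires submodularity is the containment $h^{\sigma}_\emptyset \in \partial_f(\emptyset)$, equivalently that the greedy vector lies in the submodular polyhedron (Edmonds' greedy characterization of its vertices), which guarantees the upper bound $h(X) \le f(X)$ used above. Everything else is bookkeeping over the definitions, so once this containment is invoked the proof closes immediately.
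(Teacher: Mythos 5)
Your proof is correct, and it takes a genuinely more self-contained route than the paper's. The paper dispatches the two equalities separately and by citation: the first equality from the known identity $f(X) = \max_{h \in \mathcal P_f} h(X)$, and the second from the principle that a linear program over a polytope attains its optimum at an extreme point, the extreme points of $\mathcal P_f = \partial_f(\emptyset)$ being exactly the greedy vectors $h^{\sigma}_{\emptyset}$. You instead prove the pointwise identity directly: for each fixed $X$ you exhibit a concrete maximizer --- the greedy vector of any permutation $\sigma^*$ listing $X$ in its first $|X|$ positions --- and close the sandwich using only (i) feasibility, $h^{\sigma}_{\emptyset} \in \mathcal P_f$, which is Edmonds' greedy characterization and the only place submodularity enters, and (ii) telescoping tightness along the chain, which holds for any set function. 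This buys two things. First, your argument needs only the containment of greedy vectors in $\mathcal P_f$, not the stronger statement that they exhaust its extreme points, and it needs no LP or extreme-point theory at all. Second, it quietly sidesteps a small imprecision in the paper: $\mathcal P_f = \partial_f(\emptyset)$ is an \emph{unbounded} polyhedron rather than a polytope, so the ``LP over a polytope'' argument as literally stated does not apply; to repair it one must note that the objective $\mathbf{1}_X \geq 0$ is bounded above over $\mathcal P_f$ and that $\mathcal P_f$ is pointed, so the optimum is still attained at a vertex. What the paper's route buys in exchange is brevity and a conceptual framing of the second equality as a generic vertex-optimality fact, which is the same observation it reuses elsewhere (e.g., in restricting \mmax{} to subgradients that are vertices of the subdifferential).
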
 
Lemma~\ref{permeqmax} implies that there exists a permutation (and equivalent subgradient) with which \mmax{} finds the optimal solution in the first iteration. Known hardness results~\cite{feige1998threshold} imply that this permutation may not be obtainable in polynomial time.
\begin{proof}{\emph{(Lemma~\ref{permeqmax})}}
The first equality in Lemma~\ref{permeqmax} follows from the fact that any submodular function $f$ can be written as
\begin{equation}
  \label{eq:10}
  f(X) = \max_{h \in P_f}h(X). 
\end{equation}
For the second equality, we use the fact that a linear program over a polytope has a solution at one of the extreme points of the corresponding polytope.
%
\end{proof}
We can now prove Theorem~\ref{anyalgmax}
\begin{proof}{\emph{(Thm.~\ref{anyalgmax})}}
Let $Y$ be the set returned by the approximation algorithm; this set is polynomial-time computable by definition. Let $\tau$ be an arbitrary permutation that places the elements in $Y$ in the first $|Y|$ positions. The subgradient $h^\tau$ defined by $\tau$ is a subgradient both for $\emptyset$ and for $Y$. Therefore, using $X^0 = \emptyset$ and $h^\tau$ in the first iteration, we obtain a set $X^1$ with
\begin{align}\label{eq:subgrad_optim}
  f(X^1) \geq h^{\tau}_{\emptyset}(X^1) \geq h^{\tau}_{\emptyset}(Y) = f(Y) \geq \alpha f(X^*).
\end{align}
%
%
The equality follows from the fact that $Y$ belongs to the chain of $\tau$. 
\end{proof}
While the above theorem shows the optimality of \mmax{} in the unconstrained setting, a similar result holds for the constrained case:
\begin{corollary}\label{cor:constrained}
Let $\mathcal{C}$ be any constraint such that a linear function can be exactly maximized over $\mathcal{C}$. For any polynomial-time algorithm for submodular maximization over $\mathcal{C}$ that achieves an approximation factor $\alpha$, there exists a schedule of subgradients (obtainable in polynomial time) that, if used within \mmax{}, leads to a solution with the same approximation factor $\alpha$.
\end{corollary}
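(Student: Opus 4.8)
The plan is to follow the proof of Theorem~\ref{anyalgmax} almost verbatim, isolating the single new ingredient---exact modular maximization over $\mathcal{C}$---and observing that it is precisely the hypothesis we are handed. First I would set $X^* \in \argmax_{X \in \mathcal{C}} f(X)$ and let $Y \in \mathcal{C}$ be the feasible set returned by the given polynomial-time $\alpha$-approximation algorithm, so that $Y$ is polynomial-time computable and $f(Y) \geq \alpha f(X^*)$.

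Next I would construct the subgradient exactly as in the unconstrained case: pick any permutation $\tau$ of $V$ placing the elements of $Y$ in its first $|Y|$ positions, so that $Y = S^\tau_{|Y|}$ is a member of the chain of $\tau$. The induced extreme-point subgradient $h^\tau_\emptyset \in \partial_f(\emptyset)$ then telescopes on $Y$, yielding $h^\tau_\emptyset(Y) = \sum_{i=1}^{|Y|}\bigl(f(S^\tau_i) - f(S^\tau_{i-1})\bigr) = f(Y)$. This tightness on $Y$ is what made the unconstrained argument work, and it carries over unchanged.

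I would then run \mmax{} from $X^0 = \emptyset$ with $h^\tau$ in the first iteration. Since $f$ is normalized and $h^\tau_\emptyset$ is modular with $h^\tau_\emptyset(\emptyset) = 0$, the lower bound~\eqref{eq:2} collapses to $m_{h^\tau_\emptyset}(X) = h^\tau_\emptyset(X)$, so Step~4 of Algorithm~\ref{mirroropt} is precisely a modular maximization over $\mathcal{C}$, which the hypothesis on $\mathcal{C}$ guarantees can be solved exactly. Because $Y \in \mathcal{C}$ is itself a feasible competitor for this argmax, the returned $X^1$ obeys
\begin{align*}
  f(X^1) \geq m_{h^\tau_\emptyset}(X^1) \geq m_{h^\tau_\emptyset}(Y) = h^\tau_\emptyset(Y) = f(Y) \geq \alpha f(X^*),
\end{align*}
giving the claimed approximation factor.

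The one point I would flag as the crux is feasibility preservation: in the unconstrained proof the modular surrogate was maximized over all of $2^V$, whereas here the argmax is confined to $\mathcal{C}$. This restriction is harmless exactly because $Y$ lies in $\mathcal{C}$, so it survives as a valid lower-bounding competitor, and by assumption the restricted maximization is still computable. Nothing else changes, so no separate analogue of Lemma~\ref{permeqmax} is needed beyond reading its middle and right-hand maxima as ranging over $\mathcal{C}$.
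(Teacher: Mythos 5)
Your proposal is correct and matches the paper's approach: the paper proves Corollary~\ref{cor:constrained} by noting it ``follows directly from Theorem~\ref{anyalgmax},'' and your argument is exactly that theorem's proof---take the permutation $\tau$ placing the returned set $Y$ first, use $h^\tau_\emptyset$, and chain $f(X^1) \geq h^\tau_\emptyset(X^1) \geq h^\tau_\emptyset(Y) = f(Y) \geq \alpha f(X^*)$---with the two implicit points (feasibility of $Y \in \mathcal{C}$ as a competitor, and exact linear maximization over $\mathcal{C}$ in Step~4) made explicit. Nothing is missing; you have simply written out the adaptation the paper leaves to the reader.
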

The proof of Corollary~\ref{cor:constrained} follows directly from the Theorem~\ref{anyalgmax}. 
}
\notarxiv{Under mild assumptions, Theorem~\ref{anyalgmax} holds even for constrained maximization.}
Lastly, we pose the question of selecting the optimal subgradient in each iteration. An optimal subgradient $h$ would lead to a function $m_h$ whose maximization yields the largest improvement. 
Unfortunately, obtaining such an ``optimal'' subgradient is impossible:
\begin{theorem}
  The problem of finding the optimal subgradient $\sigma^{OPT} = \argmax_{\sigma, X \subseteq V} h^{\sigma}_{X^t}(X)$ in Step 4 of Algorithm~\ref{mirroropt} is NP-hard even when $\mathcal C = 2^V$. Given such an oracle, however, \mmax{} using subgradient $\sigma^{OPT}$ returns a global optimizer.
\end{theorem}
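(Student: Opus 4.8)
The theorem has two parts, and I will address them in the order stated. For the hardness of computing $\sigma^{OPT}$, the plan is to reduce from unconstrained submodular maximization itself. The key observation is that by Lemma~\ref{permeqmax}, for the case $X^t = \emptyset$ we have $\max_{\sigma, X \subseteq V} h^{\sigma}_{\emptyset}(X) = \max_{X \subseteq V} f(X)$, so finding the optimal subgradient at the empty set is at least as hard as computing the maximum value of $f$ over $2^V$. Since unconstrained submodular maximization is NP-hard (indeed, by the inapproximability results of~\cite{feige1998threshold} even approximating it beyond certain factors is hard), computing $\sigma^{OPT}$ must be NP-hard as well. I would make this reduction explicit: given an arbitrary submodular maximization instance, instantiate Step 4 with $X^t = \emptyset$ and $\mathcal C = 2^V$; an oracle returning $\sigma^{OPT}$ together with the maximizing $X$ would immediately yield the global maximizer of $f$, which is NP-hard to find.

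For the second part, the plan is to show that if we \emph{could} access such an oracle, then running \mmax{} with $\sigma^{OPT}$ recovers a global optimizer in a single step. Here I would invoke the tightness property of the subgradient lower bound~\eqref{eq:2} together with Lemma~\ref{permeqmax}. Specifically, with $X^0 = \emptyset$ the bound $m_{h_{\emptyset}}(X) = h^{\sigma}_{\emptyset}(X) \leq f(X)$ holds for every $X$, and maximizing over both $\sigma$ and $X$ gives exactly $\max_{X} f(X)$ by Lemma~\ref{permeqmax}. Thus the set $X$ attaining $\max_{\sigma, X} h^{\sigma}_{X^t}(X)$ satisfies $f(X) \geq h^{\sigma^{OPT}}_{X^t}(X) = \max_{Y} f(Y)$, forcing $X$ to be a global maximizer. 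Since the bound $m_h$ never exceeds $f$ and equals $f$ at the returned set, no further iteration can improve, and the algorithm has reached the global optimum.

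The main obstacle, and the place requiring the most care, is making the joint optimization $\argmax_{\sigma, X \subseteq V} h^{\sigma}_{X^t}(X)$ precise for a general current iterate $X^t \neq \emptyset$, since Lemma~\ref{permeqmax} as stated is phrased for subgradients at $\emptyset$. For the hardness direction this is not a problem, because it suffices to exhibit a single hard instance, namely $X^t = \emptyset$. For the optimality direction, the cleanest route is to argue that because the algorithm is initialized appropriately and the subgradient bound is globally valid and tight at $X^t$, the joint maximum over all permutations and all feasible sets coincides with $\max_{X} f(X)$; I would verify that the subgradient polytope $\partial_f(X^t)$ still contains a vertex whose induced modular function touches $f$ at a global maximizer, which follows from the representation $f(X) = \max_{h \in P_f} h(X)$ underlying Lemma~\ref{permeqmax}. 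Once that representation is in hand, both conclusions drop out without further computation.
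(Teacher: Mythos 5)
Your first two paragraphs are essentially the paper's own argument: NP-hardness by reduction from unconstrained submodular maximization via Lemma~\ref{permeqmax} at $X^t=\emptyset$, and optimality of the oracle step at $X^0=\emptyset$ via the tightness argument of Equation~\eqref{eq:subgrad_optim}. The genuine gap is in your third paragraph, and it is not a detail that can be ``verified'': the claim that for an arbitrary iterate $X^t$ the joint maximum $\max_{\sigma, X} h^{\sigma}_{X^t}(X)$ coincides with $\max_X f(X)$ --- equivalently, that $\partial_f(X^t)$ contains a vertex whose modular function touches $f$ at a global maximizer --- is false. Vertices of $\partial_f(X^t)$ come from chains forced to pass through $X^t$, and such a chain contains only sets comparable to $X^t$; if the global maximizer $X^*$ is incomparable to $X^t$, the best any vertex can achieve is $\max_\sigma h^{\sigma}_{X^t}(X^*) = f(X^*\cap X^t) + f(X^*\cup X^t) - f(X^t)$, which by submodularity is at most $f(X^*)$ and is strictly smaller whenever the submodular inequality is strict for the pair $(X^*, X^t)$. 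Concretely, take $V=\{1,2\}$ with $f(\emptyset)=0$, $f(\{1\})=5$, $f(\{2\})=10$, $f(\{1,2\})=3$ (submodular and nonnegative), and $X^t=\{1\}$: the only admissible permutation is $(1,2)$, giving the single vertex $h(1)=5$, $h(2)=-2$, so $\max_{\sigma,X} h^{\sigma}_{X^t}(X)=h(\{1\})=5 < 10 = f(\{2\}) = \max_X f(X)$, and no vertex is tight at the maximizer $\{2\}$. Worse, $\{1\}$ is a fixed point of the oracle iteration, so initialized there, \mmax{} with $\sigma^{OPT}$ never reaches the global optimum: the second claim of the theorem genuinely requires initialization at $\emptyset$ (or $V$), which your second paragraph uses but your third paragraph attempts to dispense with. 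The error is in transferring the representation $f(X)=\max_{h\in P_f}h(X)$: that maximum is attained by a vertex adapted to $X$ (a subgradient at $X$, equivalently at $\emptyset$), which in general does \emph{not} lie in $\partial_f(X^t)$.

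For comparison, the paper treats arbitrary $X^t$ quite differently: it invokes Fujishige's result that the subdifferential at $X^t$ is a direct product of a submodular polyhedron (a subdifferential at $\emptyset$) and an anti-submodular polyhedron (a subdifferential at the top set), so that Step 4 at $X^t$ decomposes into a \emph{sum of two} submodular maximization problems, namely $\max_{A\subseteq X^t} f(A) + \max_{B\subseteq V\setminus X^t}\bigl(f(X^t\cup B)-f(X^t)\bigr)$ --- not a single instance of $\max_X f(X)$. Your hardness argument itself is fine (one hard family of instances, $X^t=\emptyset$, suffices for NP-hardness of the step, and is arguably cleaner than the paper's treatment), but the optimality claim must be stated and proved for $X^0\in\{\emptyset,V\}$ only.
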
\looseness-1
\arxiv{
\begin{proof}
Lemma~\ref{permeqmax} implies that an optimal subgradient at
 $X^0=\emptyset$ or $X^0=V$ is a subgradient at an optimal solution. An argumentation as in Equation~\eqref{eq:subgrad_optim} shows that using this subgradient in MM leads to an optimal solution. Since this would solve submodular maximization (which is NP-hard), it must be NP-hard to find such a subgradient.

To show that this holds for arbitrary $X^t$ (and correspondingly at every iteration), we use that the submodular subdifferential can be expressed as a direct product between a submodular polyhedron and an anti-submodular polyhedron~\cite{fujishige2005submodular}. Any problem involving an optimization over the sub-differential, can then be expressed as an optimization over a submodular polyhedron (which is a subdifferential at the empty set) and an anti-submodular polyhedron (which is a subdifferential at $V$)~\cite{fujishige2005submodular}. Correspondingly, Equation~\eqref{permeqmaxeqn} can be expressed as the sum of two submodular maximization problems. 
\STR{Commented out the rest; reducing sth \emph{to} submodular minimization does not prove that it is NP-hard!}
\end{proof}
}

\subsection{Experiments}
We now empirically compare variants of \mmax{} with different subgradients.
As a test function, we use the objective of \cite{lin2009select}, 
$f(X) = \sum_{i \in V} \sum_{j \in X} s_{ij} - \lambda \sum_{i, j \in X} s_{ij}$, where $\lambda$  is a redundancy parameter. This non-monotone function was used to find 
the most diverse yet relevant subset of objects\arxiv{ in a large corpus}.
We use the objective with both synthetic and real data. 
We generate $10$ instances of random similarity matrices $\{s_{ij}\}_{ij}$ and 
vary $\lambda$ from $0.5$ to 1.
Our real-world data is the Speech Training data subset selection problem~\cite{lin2009select} on the TIMIT corpus~\cite{timit}, using the string kernel metric~\cite{rousu2006efficient} for similarity. 
We use $20 \leq n \leq 30$ so that the exact solution can still be computed with the algorithm of \cite{goldengorin1999maximization}. \looseness-1

We compare the algorithms DLS, BG, RG, RLS, RA and RP, and a baseline
RS that picks a set uniformly at random. RS achieves a $1/4$
approximation in expectation~\cite{janvondrak}.  For random
algorithms, we select the best solution out of 5 repetitions.
Figure~\ref{fig:max} shows that DLS, BG, RG and RLS dominate.  Even though RG
has the best theoretical worst-case bounds, it performs slightly
poorer than the local search ones and BG.
Moreover, \mmax{} with random subgradients (RP) is much better than choosing a set uniformly at random (RS). In general, the empirical approximation factors are much better than the theoretical worst-case bounds.
Importantly, the \mmax{} variants are extremely fast, about 200-500 times faster than the exact branch and bound technique of~\cite{goldengorin1999maximization}.


\section{Discussion and Conclusions}
In this paper, we introduced a general MM framework for submodular optimization algorithms. This framework is akin to the class of algorithms for minimizing the difference between submodular functions~\cite{narasimhanbilmes,rkiyeruai2012}. In addition, it may be viewed as a special case of a proximal minimization algorithm that uses Bregman divergences derived from submodular functions~\cite{iyermirrordescent}. To our knowledge this is the first generic and unifying framework of combinatorial algorithms for submodular optimization. \looseness-1

An alternative framework relies on relaxing the discrete optimization problem by using a continuous extension (the \lovasz{} extension for minimization and multilinear extension for maximization).
Relaxations have been applied to some constrained \cite{iwata2009submodular} and unconstrained~\cite{frbach1} minimization problems as well as maximization problems~\cite{feldman2012optimal}. Such relaxations, however, rely on a final rounding step that can be challenging --- the combinatorial framework obviates this step.
Moreover, our results show that in many cases, it yields good results very efficiently.
\begin{figure}[t]
  \centering
\hspace{-10pt}
\includegraphics[width=0.26\textwidth]{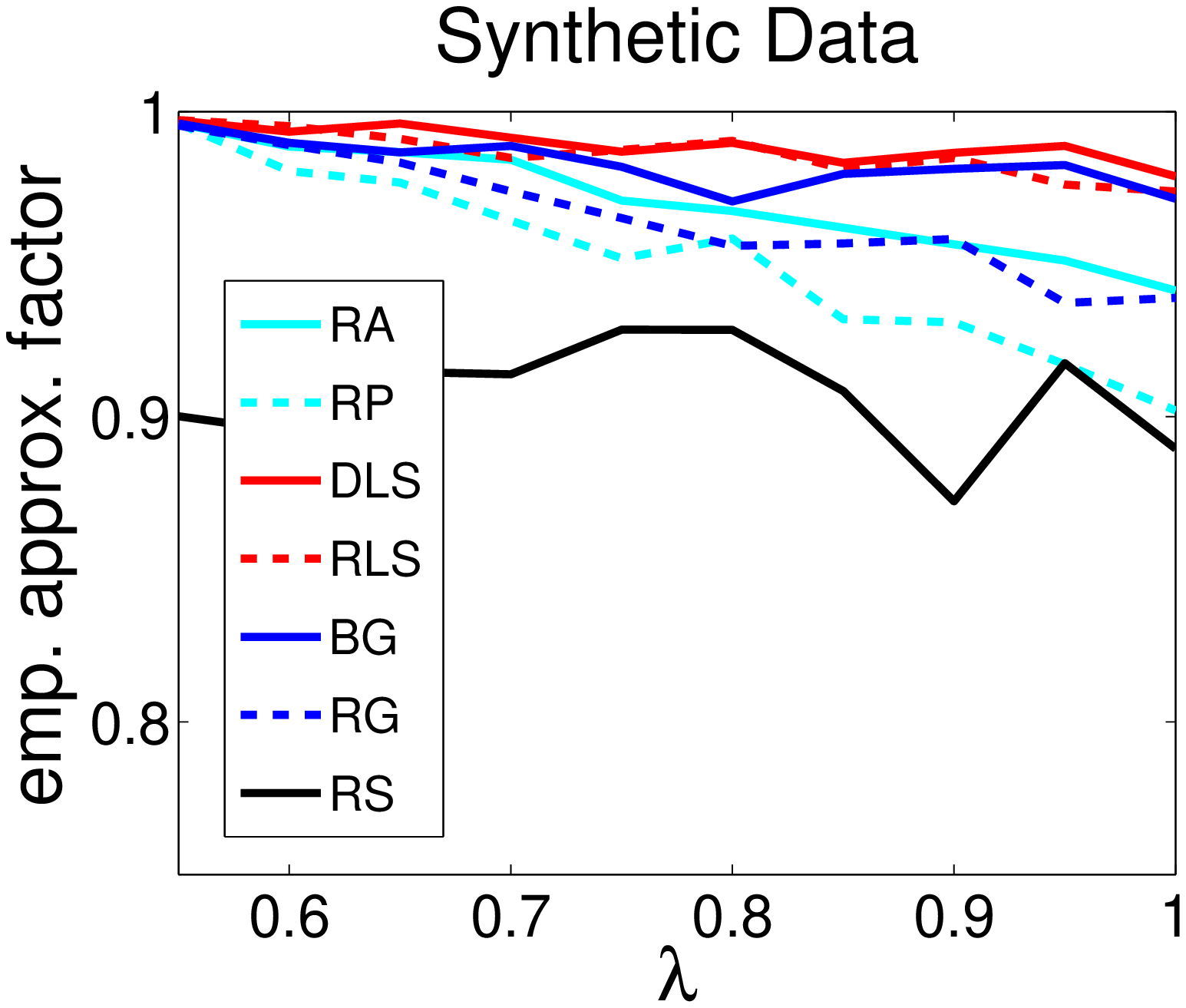}\hspace{-20pt}
  ~ 
\includegraphics[width=0.26\textwidth]{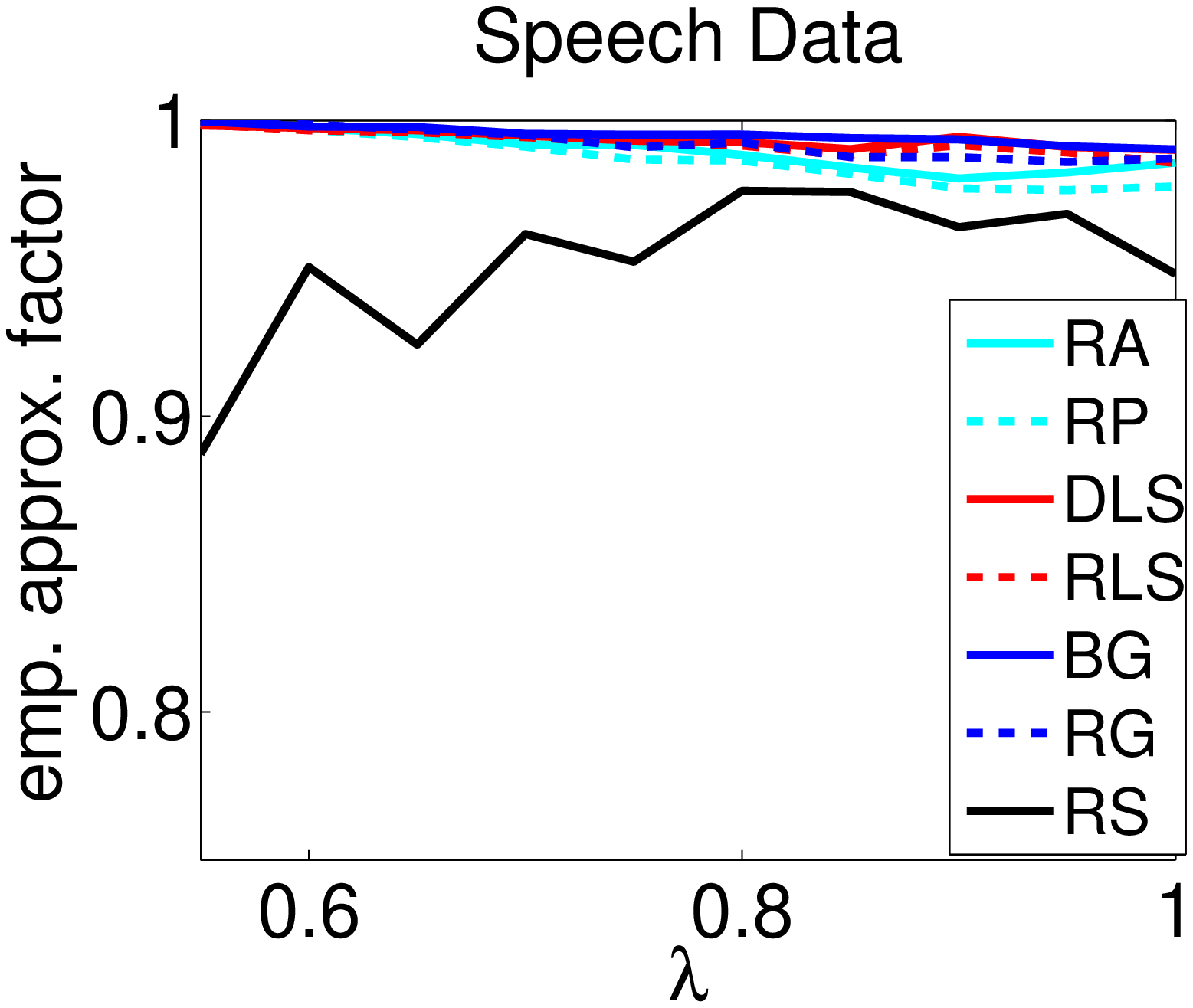} \hspace{-5pt}
  \caption{Empirical approximation factors for variants of \mmax. See Section~\ref{sec:unconstr_max} for legend details.
  }
  \label{fig:max}
\end{figure}

\looseness-1

{\bf Acknowledgments:} We thank Karthik Mohan, John Halloran and Kai Wei for
discussions. This material is based upon work supported by the National Science
Foundation under Grant No.\ IIS-1162606, and by a
Google, a Microsoft, and an Intel research award. This material is also based upon work supported in part by the Office of Naval Research under contract/grant number N00014-11-1-068, NSF CISE Expeditions award CCF-1139158 and DARPA XData Award FA8750-12-2-0331, and  gifts from Amazon Web Services, Google, SAP,  Cisco, Clearstory Data, Cloudera, Ericsson, Facebook, FitWave, General Electric, Hortonworks, Intel, Microsoft, NetApp, Oracle, Samsung, Splunk, VMware and Yahoo!.\looseness-1
\bibliographystyle{plain}
\bibliography{../Combined_Bib/submod}
\end{document}